\newtheorem{theorem}{Theorem}[section]
\newtheorem{lemma}{Lemma}[section]
\newtheorem{corollary}{Corollary}[section]
\newtheorem{proposition}{Proposition}[section]
\newtheorem{definition}{Definition}[section]
\newcommand{\N}{\mathbb{N}}
\newcommand{\BSTs}{\mathcal{T}}
\newcommand{\X}{\mathcal{X}}
\DeclareMathOperator*{\argmax}{arg\,max}
\newcommand{\T}{T}
\newcommand{\floor}[1]{\left\lfloor #1 \right\rfloor}
\newcommand{\ceil}[1]{\left\lceil #1 \right\rceil}
\title{Randomized Binary and Tree Search under Pressure}
\author{Agustín Caracci\thanks{Pontificia Universidad Católica de Chile, Institute for Mathematical and Computational Engineering, Chile. Email: \href{mailto:juan.caracci@uc.cl}{juan.caracci@uc}. Supported by Fondecyt-ANID Nr. 1221460, ANID.} \and
Christoph Dürr\thanks{Sorbonne University, CNRS, LIP6, France. \url{https://www.lip6.fr/Christoph.Durr}. Email: \href{mailto:Christoph.Durr@lip6.fr}{Christoph.Durr@lip6.fr}. Work partially conducted while C.D.\ was affiliated with Universidad de Chile, CMM. Partially supported by the Center for Mathematical Modeling grant FB210005 and the grants ANR-19-CE48-0016, ANR-23-CE48-0010 from the French National Research Agency (ANR).} \and
José Verschae\thanks{Pontificia Universidad Católica de Chile, Institute for Mathematical and Computational Engineering, Chile.  \url{https://sites.google.com/site/jverschae/}. Email: \href{mailto:jverschae@uc.cl}{jverschae@uc.cl}. Partially supported by Fondecyt-ANID Nr. 1221460,
ANID and  by the Center for Mathematical Modeling grant FB210005, Basal, ANID.}}
\begin{document}

\maketitle

\begin{abstract}We study a generalized binary search problem on the line and general trees. On the line (e.g., a sorted array), binary search finds a target node in $O(\log n)$ queries in the worst case, where $n$ is the number of nodes. In situations with limited budget or time, we might only be able to perform a few queries, possibly sub-logarithmic many. In this case, it is impossible to guarantee that the target will be found regardless of its position. Our main result is the construction of a randomized strategy that maximizes the minimum (over the target position) probability of finding the target. Such a strategy provides a natural solution where there is no apriori (stochastic) information of the target's position. As with regular binary search, we can find and run the strategy in $O(\log n)$ time (and using only $O(\log n)$ random bits). Our construction is obtained by reinterpreting the problem as a two-player (\textit{seeker} and \textit{hider}) zero-sum game and exploiting an underlying number theoretical structure.

Furthermore, we generalize the setting to study a search game on trees. In this case, a query returns the edge's endpoint closest to the target. Again, when the number of queries is bounded by some given $k$, we quantify a \emph{the-less-queries-the-better} approach by defining a seeker's profit $p$ depending on the number of queries needed to locate the hider. For the linear programming formulation of the corresponding zero-sum game, we show that computing the best response for the hider (i.e., the separation problem of the underlying dual LP) can be done in time $O(n^2 2^{2k})$, where $n$ is the size of the tree. This result allows to compute a Nash equilibrium in polynomial time whenever $k=O(\log n)$. In contrast, computing the best response for the hider is NP-hard.

\end{abstract}

\section{Introduction}
\label{sec:Introduction}

Searching for an object in an ordered list with $n$ elements is one of the most fundamental tasks in computer science. Binary search, one of the most basic discrete algorithms, yields a search strategy that finds the target item with $O(\log(n))$ comparisons, regardless of its position. However, in situations when comparing item values is expensive, we are faced with the problem of searching under a potentially sub-logarithmic  budget of $k$ comparisons. For small values of $k$, there will be positions of the target value where the strategy will fail. By using a randomized strategy, we can aim to find the target with certain minimum probability, independently of its position. More precisely, we study the problem of devising a randomized strategy that maximizes the worst-case (with respect to the position of the target) probability of finding the target.

The described setting can be equivalently posed as a two-player zero-sum game played on a line graph $G=(V,E)$ with vertex set $V=\{0,1,\ldots,n-1\}$ and edge set $E=\{\{v,v+1\}:0\le v\le n-2\}$. The first player, the \emph{hider} chooses a node $v^*\in V$ to hide (i.e., the target). The second player, the \emph{seeker} must decide on a search strategy, i.e., an adaptive sequence of edges to query that aims to find the hidden node $v^*$. If edge $\{v,v+1\}$ is queried, the seeker learns whether $v^*\le v$. The game finishes after $k$ queries or after the seeker can univocally find the position of $v^*$. If the seeker finds $v^*$ within the given number of queries she gets a profit of 1 and the seeker obtains a profit of -1 (i.e., she pays a cost of 1). If after the $k$ queries the seeker cannot certify the position of $v^*$, then both player get a profit of 0. We are interested in finding a mixed (i.e. randomized) strategy for each player that yields a Nash equilibrium. As this is a zero-sum game, in a Nash equilibrium the seeker chooses a randomized strategy that maximizes $\min_{v^*\in V} P(v^*)$, where $P(v^*)$ denotes the probability that the seeker finds the target given that the hider chooses $v^*$ to hide.

This fundamental problem is motivated by several questions related to the search of substances in water networks. For example, a new trend of water surveillance became prominent during the COVID pandemic, where PCR tests were taken within a sewage water network (normally a tree) in order to efficiently detect virus outbreaks~\cite{ahmed_first_2020,PNAS2024,domokos_identification_2022,larson_sampling_2020,nourinejad_placing_2021,doi:10.5864/d2021-015}. 
Similar strategies have been proposed to detect the presence of illicit drugs~\cite{Sulej-SuchomskaEtAl2020}, pollution~\cite{cristo_pollution_2008} and even illicit explosives~\cite{EMPHASIS}. Observe that if the water network is a path and our aim is to find a single infected node, taking a PCR test on a given edge (or node) will tell us whether the infected node is upstream or downstream. Crucially, in all these scenarios, the number of comparisons is constrained by the time that the sought substance stays in the network. Arguably, these settings induce models where the number of queries is sub-logarithmic, or even constant, with respect to the number of nodes of the network. Hence, maximizing the worst-case probability of finding the infected node is particularly meaningful, especially if there is no known probabilistic data on the source of the target. On the other hand, the (equivalent) game theoretical perspective gives a relevant model where there is an adversary that chooses the location of the target deliberately.

Inspired by these scenarios, we further extend our model in two directions. First, we consider the case in which the graph $G=(V,E)$ is not only a path, but a general tree. As before, the seeker queries edges. If $e\in E$ is queried, she obtains as an answer the connected component of $G-e:=(V,E\setminus\{e\})$ that contains $v^*$. 
As a second extension, we further consider a time-dependent non-increasing profit function $p:\mathbb{Z}_{\ge 0}\rightarrow \mathbb{Z}_{\ge 0}$, where $p(t)=0$ of $t>k$. If the seeker finds the hider within $t$ queries, she obtains a profit of $p(t)$ and the hider perceives a cost of $p(t)$ (or, equivalently, a profit of $-p(t)$). Observe that we obtain the same problem as before if
$p(t)= 1$ for $t\le k$, which we call the \emph{unit profit} case. As before, the objective is to efficiently find a mixed Nash equilibrium.

\paragraph*{Related Literature} 

Previous literature mostly focused on minimizing the worst case number queries for finding the target. When the target is hidden in a tree, Lam and Yue \cite{lam_optimal_2001} (see also~\cite{dereniowski2008edge}) provided a linear time algorithm which generates an optimal search tree for the edge query model, a result that was later rediscovered in a sequence of papers, including the works by Ben-Asher et al.~\cite{ben1999optimal}, Onak and Parys~\cite{onak2006generalization}, and Mozes et al.~\cite{mozes2008finding}. A more general setting was later considered by Cicalese et al.~\cite{cicalese_binary_2012} where each edge has a different cost. The objective is to find the target with the worst case minimum total cost. This version turns out to be strongly NP-hard and admits an approximation algorithm with a slightly sublogarithmic guarantee. 

An alternative model considers queries on the vertex of the tree. If the target is not in the queried vertex, one learns which of the neighbors is closest to the target. For this model, Schäffer provided a linear time algorithm generating an optimal search tree~\cite{schaffer1989optimal}. Minimizing the number of vertex queries on an arbitrary undirected graph becomes quite intriguing. There is an $O(m^{\log n} \cdot n^2 \log n)$ time algorithm, where $m$ is the number of edges and $n$ the number of vertices. The quasi-polynomial part is necessary, as there is no $O(m^{(1-\epsilon)\log n})$ time algorithm under the Strong Exponential-Time Hypothesis (SETH) \cite{emamjomeh-zadeh_deterministic_2016}. For directed acyclic graphs, a vertex query returns 3 types of answers: the target is on the queried vertex, the target is reachable from the queried vertex, or it is not. In this model, minimizing the number of queries is NP-complete \cite{carmo2004searching}.

A different line of research considers a known probability distribution of the position of the target, and aims to find a search strategy (either for node or edge queries) that minimizes the expected number of queries until the target is found.  Knuth~\cite{knuthOptimumBinarySearch1971} showed that dynamic programming can solve this problem on the line with running time $O(n^2)$. For the more complicated problem on trees with the edge query model, Cicalese et al.~\cite{cicalese_complexity_2011,cicalese_improved_2014} showed that finding an optimal search strategy for a tree is NP-hard. Moreover, it admits a 1.62-approximation algorithm and even an FPTAS on tree of bounded degrees. 
On the other hand, for node queries, the complexity of the problem is still open. Recently, Berendsohn and Kozma~\cite{berendsohnSplayTreesTrees2022} showed that the problem admits a PTAS, and Berendsohn et al.~\cite{berendsohnFastApproximationSearch2023} analyzed the algorithm that iteratively select the centroid of the tree. Besides given a fast output-sensitive algorithm for finding the centroid, they show that the obtained strategy is a 2-approximation and that the analysis is tight even when the target distribution is uniform.

\paragraph*{Our Contribution}

We study two scenarios: \emph{(i)} For the unit profit case in the line, we provide an algorithm that computes the optimal strategy for the seeker in $O(\log n)$ time; \emph{(ii)} if $G$ is a general tree and the profit $p$ is arbitrary, we show that we can find a Nash equilibrium in time $2^k\text{poly(n)}$ while computing the best-response for the seeker is NP-hard. 

Our main result is \emph{(i)}. Our algorithm samples a search strategy in time $O(\log n)$ (using $O(\log n)$ random bits). Afterwards, choosing the next node to query takes constant time in each time step. The solution is based on picking a random interval, on which we will perform binary search. The interval is chosen uniformly at random from a set constructed with a greedy algorithm based on modular arithmetics (see Figure~\ref{fig:not-coprime}). This construction naturally connects to Bezout's identity and the Extended Euclidean algorithm, which we exploit to obtain the claimed running times. 
On the other hand, we provide a construction that yields an optimal strategy for the hider, which we then use to show that both strategies (hider and seeker) are optimal by linear programming duality. For some regimes of values of $n$ and $k$, the optimal hider's solution requires an intricate construction as she needs to unevenly distribute probability mass to avoid giving the seeker an advantage (see Figure~\ref{fig:coprime}).    

For \emph{(ii)}, where $G$ is an arbitrary tree and $p$ is any non-increasing profit function, we again consider the corresponding zero-sum game as a linear program (LP), which has an exponential number of variables. We show that the separation of the dual problem --- the problem of computing the best-response of the hider --- can be solved in time $O(n^2 2^{2k})$. In contrast, when $k$ is part of the input the separation problem is NP-hard, even if the diameter or the degree of the tree is constant (see Theorem~\ref{thm:bestResponseHardness}), which justifies the exponential dependency on $k$\footnote{Observe, however, that this does not imply that computing a Nash equilibrium is NP-hard.}. The separation problem of the dual takes as input a distribution of the target node over $V$. It consists of finding a search strategy with at most $k$ queries that maximizes the expected profit of the seeker. We devise a dynamic program that solves this problem. It is based on the concept of \emph{visibility vectors}~\cite{onak2006generalization,lam_optimal_2001, dereniowski2008edge}, introduced for the problem of minimizing the time to find the target in the worst-case. By rooting the tree, we can apply a bottom-up approach to extend a solution for each possible visibility vector. The exponential dependency on $k$ comes from the fact that the number of different visibility vectors is exponential in $k$. Together with the Ellipsoid method, we obtain an algorithm to compute a Nash equilibrium in time $2^k\text{poly}(n)$. 

Observe that in regimes where the budget is logarithmic, $k\le C\cdot \log_2(n)$, then our algorithm for the separation problem runs in polynomial time $O(n^{2+2C})$. Moreover, $k$ is logarithmic in many natural cases. For example, if $G$ has maximum degree $d$ and we are in the unit profit case, then there are search strategies that can unequivocally find the target within $ O(d\log n)$ queries~\cite{ben-asher_cost_1997,emamjomeh-zadeh_deterministic_2016}. Therefore, if $d$ is a constant, our model only makes sense when $k$ is at most logarithmic, as otherwise we have a strategy that finds the hider with probability 1. Hence, for these cases we can assume that $k\le C\cdot \log_2(n)$ and hence our algorithm takes polynomial time $O(n^{2+2C})$. In particular, this yields a running time of $O(n^4)$ for the separation problem when $G$ is a line; which is significantly worse than our dedicated solution.

In Section~\ref{sec:ProbDef} we introduce the problem formally. Our main results are given in Section~\ref{sec:line}, where we show our solution for the case that $G$ is a path. Finally, Section~\ref{sec:DP} shows the dynamic program for the general case. Some of the technical proofs are deferred to the appendix.

\section{Problem Definition }
\label{sec:ProbDef}

For an arbitrary graph $H$, we denote $V(H)$ and $E(H)$ its node and edge set, respectively. We will represent edges as sets $\{u,v\}\in E$, and we will use the shortcut $\{u,v\}=uv$ if appropriate.

Let $G=(V,E)$ be a tree with $n$ nodes. To represent a search strategy, consider a rooted binary out-tree~$\T=(N,D)$, where each internal node $\nu \in N$ is labeled with an edge $e(\nu) \in E$. By definition, the root $\rho\in N$ is labeled, unless $|N|=1$. The edge $e(\rho)=uv$  corresponds to the first edge queried by the strategy. Let $G_u$ (resp. $G_v$) denote the connected component of $G - e := (V,E\setminus \{e\})$ that contains $u$ (resp. $v$). One child of $\rho$ is associated to $G_u$, and the other child to $G_v$. 
The output of the query points to child of $\rho$ associated to the connected component that contains the target. The search is then resumed in said child.
Based on this notation, we give a recursive definition of a search strategy. We say that $\T$ with root label $e(\rho)=uv\in E$ is a search strategy for $G=(V,E)$ if the subtrees of $\rho$ in $\T$, denoted by $T_u$ and $T_v$, are also search strategies for $G_u$ and $G_v$, respectively.

Finally, $\T=(N,D)$ is a search strategy for any tree $G$ if $|N|=1$.

\begin{figure}[htb]
    \centering
\begin{tikzpicture}[yscale=0.7,font=\footnotesize,level distance=1cm,level/.style={sibling distance=50mm/#1}]
\draw (-1, 6) node {$T:$} ;
\draw (4,6) node {$cd$}
child[->, sibling distance=80mm] {node {$bc$}
    child {node {$ab$}
        child {node {$\{a\}$}}
        child {node {$\{b\}$}}
    }
    child {node {$ic$}
        child {node {$\{i\}$}}
        child {node {$\{c\}$}}
        }
}
child[->,sibling distance=60mm] {node {$fg$}
  child {node {$fk$}
             child {node {$\{d,e,f,j\}$}}
             child {node {$\{k\}$}}
        }
  child {node {$\{g\}$}}
};

\begin{scope}[yshift=0.5cm]

\draw (-1,1) node{$G:$}; 
\foreach \v/\x/\y/\c in {a/0/1/yellow, b/1/1/yellow, c/2/1/yellow, d/3/1/white, 
                      e/4/1/white, f/5/1/white, g/6/1/yellow, i/1/0/yellow, 
                      j/2/0/white, k/6/0/yellow} {
    \node[circle,draw,fill=\c] (\v) at (\x,\y) {\v};
};
\foreach \u/\v/\l/\p in {a/b/1/above, b/c/2/above, c/d/3/above, i/c/1/below, j/d/0/below, d/e/0/above, e/f/0/above, f/g/2/above, f/k/1/below} {
    \draw (\u) -- node[\p] {\l} (\v) ;
};
\end{scope}
\end{tikzpicture}    \caption{Example of a search tree $T$ over a tree $G$. Leafs $\nu$ of $T$ are labeled with $V(\nu)$. Covered vertices in $G$ are marked in yellow. Edge labels of $G$ will be explained in Section~\ref{sec:edge-labelings}}
    \label{fig:def-search-tree}
\end{figure}

Let $\T$ be a search strategy for $G=(V,E)$. Observe that each node $\nu$ of $\T$ can be naturally associated to a set $V(\nu)\subseteq V$ that potentially contains the target node $v^*$. More precisely, let us define $V(\rho)$ as $V$. If $e(\rho)=uv$, we define recursively $V(\rho_u)= V(G_u)$ and $V(\rho_v)= V(G_v)$, where $\rho_u$ (resp. $\rho_v$) is the root of $T_u$ (resp. $T_v$) and $V(G_u)$ (resp. $V(G_v)$) is the vertex-set of $G_u$ (resp $G_v$). Hence, at the moment that we query according to node $\nu\in N$, the set $V(\nu)$ corresponds to the smallest node set in $G$ for which we can guarantee that $v^*\in V(\nu)$. 

Let $L\subseteq N$ be the set of leaves of $\T$. It is easy to see that $\{V(\lambda): \lambda \in L\}$ defines a partition into connected components of $V$. We say that node $v \in V$ is \emph{covered} by $\T$ if the singleton $\{v\}$ belongs to $\{V(\lambda): \lambda \in L\}$. The set of nodes covered by $\T$ is denoted by $C(\T)\subseteq V$. Observe that if the hider chooses $v^*$, then applying the search strategy $\T$ we can assert that $v^*$ is the target if and only if $v^*\in C(T)$. 

Finally, the \textit{height} of a search strategy $\T$ is the height of the underlying binary tree, i.e., the length of the longest path from the root $\rho$ to a leaf. For a fixed tree $G$ and budget $k\in \mathbb{N}$, we denote by $\mathcal{T}_k$ the set of all the search strategies for $G$ of height at most $k$. 

Let $\Delta_k$ be the set of distributions supported on $\mathcal{T}_k$, that is,  $\Delta_k:=\{x:  x\ge 0 \text{ and } \sum_{\T\in \mathcal{T}_k}x_{\T}=1 \}$. Our aim is to find a vector $x\in \Delta_k$ that maximizes the worst-case probability of finding the target $v^*\in V$, that is,
\begin{equation}
\label{eq:unitValue}
\max_{x\in \Delta_k} \min_{v^* \in V} \sum_{\T \in \mathcal{T}_k: v^* \in C(\T)} x_{\T}.
\end{equation}
More generally, let $p: \mathbb{Z}_{\ge 0}\rightarrow \mathbb{Z}_{\ge 0}$ be a non-increasing function, where $p(t)$ represents the profit of finding the target with exactly $t$ queries, and $p(t)=0$ for $t>k$.  Our most general model considers the maximization of the worst-case expected profit,
\begin{equation}
\label{eq:arbitraryValue}
\max_{x\in \Delta_k} \min_{v^* \in V} \sum_{\T \in \mathcal{T}_k} x_{\T}\cdot p(h_T(v^*)),
\end{equation}
where $h_{\T}(v^*)$ is $k+1$ if $v^*\not\in C(\T)$ (and hence $p(h_{\T}(v^*))=0$) and, otherwise, it equals the distance from the root $\rho$ to the leaf $\lambda$ such that $V(\lambda)=\{v^*\}$. 

We can interpret this problem as a two-player zero-sum game as follows. The first player, the \textit{seeker}, selects a search strategy $\T\in \mathcal{T}_k$. The second player, the \textit{hider}, selects a node $v\in V$. Hence, for a given pure strategy pair $(\T,v)$, the seeker obtains a profit of $p(h_{\T}(v))$ while the hider incurs a cost of $-p(h_{\T}(v))$ (or a negative profit of $p(h_{\T}(v))$. A mixed-strategy for the seeker is a probability vector $x\in \Delta_k$, where $x_{\T}$ represents the probability of selecting the search strategy $\T$. A mixed-strategy for the hider is a vector $y\in \Delta_V := \{y: y\ge 0 \text{ and } \sum_{v\in V} y_v =1\}$. For a pair $(x,y)$, the expected profit of the seeker (cost of the hider) is 
$ 
p(x,y) = \sum_{\T \in \mathcal{T}_k} \sum_{v\in V} y_v\cdot x_{\T}\cdot p(h_T(v)).
$
A pair $(x,y)\in \Delta_k \times \Delta_V$ is said to be a Nash equilibrium if no player can unilaterally improve their profit or cost, that is, 
$$
p(x,y) \ge p(x',y)   \quad \text{for all } x'\in \Delta_k \qquad \text{and} \qquad p(x,y) \le p(x,y')  \quad \text{for all } y'\in \Delta_V.
$$

Von Neumann's minimax  theorem~\cite{v1928theorie}, a basic game-theoretic fact, states that a pair $(x,y)$ is a Nash equilibrium if and only if $x$ defines an optimal solution to the LP,
$$ \text{[P]}  \qquad  \max\left\lbrace  z \,:\,
    \ z\le  \sum_{\T \in \mathcal{T}_k}x_{\T} \cdot p(h_T(v))  \text{ for all }  v \in V \text{, and } x \in \Delta_k\right\rbrace,
$$
and $y$ is an optimal solution to
$$ \text{[D]} \qquad  \min\left \lbrace t \,:\,
 \ t \ge \sum_{v \in V}y_{v} \cdot p(h_T(v)) \text{ for all } T\in \mathcal{T}_k\text{, and } \ y\in \Delta_V\right\rbrace. 
$$
Observe that [P] and [D] are dual LPs, and hence they attain the same optimal value. We refer to this value as $u^*$, the \textit{optimal profit} or the \textit{value of the game}. Moreover, as in an optimal solution for [P] the value of $z$ is simply the minimum value of $\sum_{\T \in \mathcal{T}_k}x_{\T} \cdot p(h_T(v))$ over all $v\in V$, then [P] is a restatement of our original problem in Equation~\eqref{eq:arbitraryValue}. For a given probability vector $x\in \Delta_k$ we say that $\min_{v \in V}\sum_{\T \in \mathcal{T}_k}x_{\T} \cdot p(h_T(v))$ is its objective value. Similarly, for a vector $y\in \Delta_V$ we say that $\max_{T\in \mathcal{T}_k}\sum_{v \in V}y_{v} \cdot p(h_T(v))$ is its objective value.

\section{Search on a line}\label{sec:line}
In this section we focus on the special case where $G=(V,E)$ is a line, that is, $V=\{0,1,\ldots,n-1\}$ and $E=\{\{v,v+1\}:0\le v\le n-2\}$. Also, we restrict ourselves to unit profit functions, where $p(t)=1$ for all $t\le k$ and $p(t)=0$ if $t>k$. To avoid trivial border cases, we assume that $k\ge 2$. Moreover, we can assume that $n > 2^k$, as otherwise there exists a single search strategy, namely the usual binary search, that finds the target in every single node, and hence the value of the game is trivially 1. The aim of this section is to compute a highly structured Nash equilibrium.

We start by characterizing the sets $C\subseteq V$ that arises as covered sets of a search strategy $T\in \mathcal{T}_k$. Then, we restrict the set $\mathcal{T}_k$ to a smaller set of strategies, which we call \textit{efficient}. Using our characterization of covered sets together with LP duality, we will able show that there exists an optimal solution for the seeker that only uses such strategies.

\paragraph*{Covered Sets and Efficient Strategies}
For a given search strategy $\T$, recall that $C(T)$ denotes the set of covered nodes. Most of the technical work we do is on showing that an randomized optimal solution only selects search strategies where $C(T)$ corresponds to a set of consecutive nodes modulo~$n$. For $u,v\in \mathbb{N}$, let $[u,v]_r$ denote the set $\{w \mod r: u\le w\le v\}$ if $u\le v$, and $[u,v]_r=\emptyset$ otherwise. We say that $[u,v]_r$ is an \textit{interval modulo $r$}, or simply an \textit{interval}. To avoid unnecessary notation we omit the subscript $r$ when $r=n$, i.e., $[u,v]=[u,v]_n$. Moreover, for an integer $\ell \le r$, we denote by $[v\oplus \ell]_r = [v,v+\ell-1]_r$, that is, the interval that starts at $v$ of length $\ell$.

It will be particularly useful to understand the sets that appear as a cover set of a search strategy $\T\in \mathcal{T}_k$.
Let $c:=2^k-2$. Observe that a search strategy that aims to cover a single interval in $[1,n-2]$, is able to cover an interval of length $c$: just apply a binary search restricted to a given interval. On the other hand, if the strategy covers a single interval that contains either $0$ or $n-1$ (or both), it will be able to cover $c+1$ nodes. Notice also that there cannot be a strategy that covers exactly $[1,c]$, as this immediately implies that $0$ is covered, similarly with $[n-c-1,n-2]$.  Strategies that cover more than one interval cover a smaller number of nodes. For example, if it covers two disjoint intervals (that do not contain $0$ or $n-1$), the number of covered nodes will be $c-1$. The next proposition formalizes this intuition by giving a characterization of maximal cover sets $C(T)$ (that is, the ones such that there is no $T'\in\mathcal{T}_k$ with $C(T)\subset C(T')$) for search strategies $T\in \mathcal{T}_k$.

\begin{proposition}\label{prop:non-dominated} Denote $c=2^k-2$. Let $C= [u_1 \oplus \ell_1] \cup \ldots \cup [u_s \oplus \ell_s] \subseteq V$ be a set
where $0\le u_1< \ldots < u_s\le n-1$ (and $s$ is minimal).
The set $C$ is a maximal covered set $C(T)$ for some $T \in \BSTs_k$ if and only if
\begin{enumerate}[i)]
    \item $u_1\neq 1$ and $u_s+\ell_s-1\neq n-2$,
    \item $(u_{t+1} - (u_t +\ell_t)) \geq 2,$ for all $t\in \{1,\ldots,s-1\}$ and $u_1- (u_s + \ell_s - n)\ge 2,$ and
    \item $\sum_{t=1}^s \ell_t = \begin{cases}c+1 - s \text{ if } \{0,n-1\}\cap C =\emptyset,\\ c+2 - s \text { otherwise.}\end{cases}$
\end{enumerate}
\end{proposition}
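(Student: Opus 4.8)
The plan is to prove both directions by induction on $k$, exploiting the recursive structure of search strategies: a strategy $T\in\BSTs_k$ with root label $e(\rho)=\{v,v+1\}$ splits the line into $G_{\le v}$ (nodes $0,\dots,v$) and $G_{\ge v+1}$ (nodes $v+1,\dots,n-1$), and its subtrees are strategies of height $\le k-1$ on these two subpaths. The key quantitative bookkeeping is the following: on a \emph{path} with $m$ nodes, a height-$\le j$ strategy covers at most $2^j-1$ nodes, and a cover set achieving a given shape (a disjoint union of $s$ intervals, consecutive ones separated by gaps $\ge 2$, with the endpoints $0$ and $m-1$ possibly being "free" because once a singleton $\{0\}$ or $\{m-1\}$ appears as a component it need not be queried again) obeys $\sum \ell_t = 2^j-1-(s-1) = 2^j-s$ in the interior case, with a $+1$ bonus for each of the two path-endpoints that is covered. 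I would first state and prove this as a standalone lemma about covered sets on a \emph{path} (no modular wraparound), again by induction on $j$: the budget $2^j-1$ is the number of internal nodes of a complete binary tree of height $j$, each query consumes one unit of budget, a gap of $2$ between consecutive covered intervals "wastes" exactly the internal node whose two children are the two singletons flanking the gap, and the endpoint bonus comes from the fact that a leaf labeled $\{0\}$ costs nothing.

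Second, I would transfer this path statement to the cycle-like modular statement of the proposition. The point is that in the line problem the two extreme nodes $0$ and $n-1$ play the role of "boundary": condition (i) says neither $[1,c]$-type nor $[n-c-1,n-2]$-type anomalies occur (these are exactly the configurations where covering an interval adjacent to the boundary forces the boundary node into the cover for free, contradicting minimality of $s$ or maximality); condition (ii), including the wraparound inequality $u_1-(u_s+\ell_s-n)\ge 2$, is the "gap $\ge 2$" condition read cyclically, which is what makes the decomposition into $s$ intervals genuinely forced; and condition (iii) with $c=2^k-2$ (note $2^k-1$ minus the one unit consumed by the very first query that separates the two "halves", or more precisely the accounting that a top-level strategy on the whole line with the $0$/$n-1$ boundary behaves like a height-$k$ strategy on a path of the appropriate length) gives $\sum\ell_t = c+1-s$ or $c+2-s$ depending on whether the cover touches $\{0,n-1\}$. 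Concretely, for the "if" direction I would build $T$ greedily: order the target intervals around the cycle, and recursively assign budget — binary search inside each interval costs $\lceil\log_2(\ell_t+1)\rceil$-ish queries but the gaps let us pack them into a height-$k$ tree exactly when (iii) holds; the arithmetic identity $\sum(\ell_t+1) = c+1 = 2^k-1$ (interior case) is precisely the statement that $s$ intervals plus $s$ "separator" slots fill a complete binary tree of height $k$ minus what is needed, which I can realize by an explicit recursive splitting. For the "only if" direction, given a maximal $T$, I read off $C(T)$, check it is a union of intervals with the stated gap conditions (gaps $<2$ would mean two components could be merged, contradicting minimality of $s$ after taking maximality into account), and then the budget count forces (iii).

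The main obstacle I anticipate is getting the modular boundary bookkeeping exactly right — in particular handling the wraparound gap and the two separate endpoint bonuses simultaneously, since a cover set can touch $0$, touch $n-1$, touch both, or touch neither, and "touching both with a single wrapped interval" versus "touching both with two separate intervals" must be distinguished to get the $+1$ versus $+2$ and the value of $s$ consistent. I would manage this by reducing, via a case split on which of $0,n-1$ lie in $C$, to the clean path lemma: if neither endpoint is in $C$, cut the cycle at node $0$ (or anywhere in a gap) and apply the path lemma with a height-$k$ budget on $n$ nodes; if exactly one endpoint is in $C$, cut in the gap "on the other side" of that endpoint so the endpoint sits at an extremity of the path and collects its bonus; if both are in $C$, the wraparound gap condition $u_1-(u_s+\ell_s-n)\ge 2$ tells us there is genuinely a gap straddling position $0$, so we cut there and both $0$ (at one extreme) and $n-1$ (at the other extreme) collect their bonuses. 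A secondary nuisance is verifying that "maximal" in the proposition statement corresponds exactly to "no wasted budget anywhere and no shrinkable gap", i.e. that any cover set satisfying (i)–(iii) is not strictly contained in another realizable cover set; this follows because enlarging $C$ either fills a gap (violating (ii) for the enlarged set, or merging intervals and decreasing $s$, which by (iii) would need $\sum\ell_t$ larger than the budget allows) or extends past the forbidden boundary positions of (i).
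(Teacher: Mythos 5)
Your overall plan --- characterize which partitions of the path a height-$k$ strategy can induce via a recursive splitting argument, then count parts/queried edges and case-split on which of $0,n-1$ lie in $C$ --- is essentially the route the paper itself takes (a lemma saying a partition into $j$ intervals is realizable iff $j\le 2^k$, a lemma characterizing maximality, then the count). However, the standalone path lemma you propose to ``first state and prove'' is false as stated, and it is the crux of condition (iii). For a path with $m>2^j$ nodes, a height-$j$ strategy induces at most $2^j$ parts; if the covered set consists of $s$ runs touching neither path endpoint, you need at least $s+1$ non-singleton parts (one per uncovered run, including one before the first covered run and one after the last), so $\sum_t\ell_t\le 2^j-s-1$, not $2^j-1-(s-1)=2^j-s$ as you wrote. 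With $s=1$, $j=k$ this is exactly why a single interior interval can have length at most $c=2^k-2$ (the reason the paper defines $c$ this way and distinguishes $c$ from $c+1$ in Corollary~\ref{cor:efficient}), whereas your formula would allow $2^k-1$. Your later edge-count identity $\sum_t(\ell_t+1)=c+1=2^k-1$ for the interior case is the correct bookkeeping, but it contradicts your own path lemma, so the proposal is internally inconsistent at its central step. Relatedly, ``a height-$\le j$ strategy covers at most $2^j-1$ nodes'' requires $m>2^j$ (for $m\le 2^j$ everything can be covered), which matters if your induction descends into short subpaths, and the ``$+1$ bonus per covered endpoint'' must be reconciled with the modular count of $s$: when both $0$ and $n-1$ are covered they form a single wrapped interval, so in the proposition's parametrization the total bonus is $+1$, not $+2$.

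Two further gaps in the execution. For the ``only if'' direction, the budget count alone gives only the inequality $\sum_t\ell_t\le\cdots$; the equality in (iii) needs maximality converted into ``no slack'' by an exchange argument --- if $T$ has fewer than $2^k$ leaves, split a non-singleton leaf to cover one more vertex; if some uncovered run meets two different leaves, shift a breakpoint to cover one more vertex --- which is precisely the paper's Lemma~\ref{lm:non-dominated} (exactly $2^k$ leaves, and consecutive uncovered vertices lie in the same leaf); your phrase ``the budget count forces (iii)'' skips this. For the ``if'' direction, the per-interval estimate ``binary search inside each interval costs $\lceil\log_2(\ell_t+1)\rceil$-ish queries'' is not the right invariant (those costs do not simply add up); the clean statement your recursive splitting should target is that every partition of the path into at most $2^k$ intervals is induced by some strategy of height at most $k$ (split at the $2^{k-1}$-th breakpoint and recurse, as in Lemma~\ref{lm:leaves}). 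With the corrected count and these two steps made explicit, your argument coincides with the paper's proof.
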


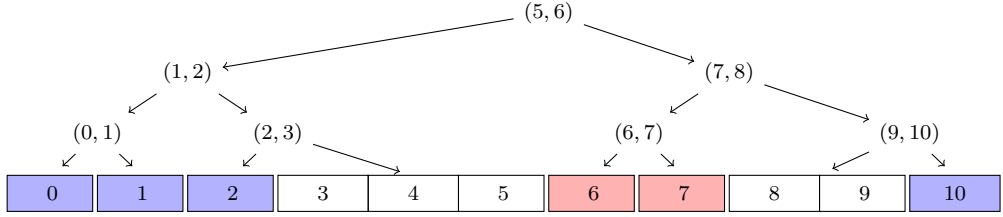
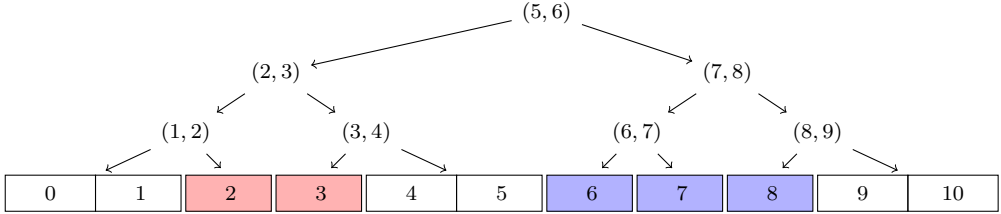
\begin{figure}[htb]
    \centering
    \begin{subfigure}[t]{\textwidth}
        \centering
        \begin{tikzpicture}[yscale=0.8,xscale=0.9,font=\scriptsize]  

\tikzset{
  q/.style={very thick, black},
  nq/.style={dashed, black},
  g/.style={fill=gray!50, draw=none},
  interval1/.style={fill=red!30},
  interval2/.style={fill=blue!30},
  w/.style={},
  failed/.style={->}
}

\begin{scope}[]
    \foreach \i / \s / \n / \sep in {0/nq/interval2/0.07, 1/q/interval2/0.07, 2/q/interval2/0.07, 3/q/w/0, 4/nq/w/0, 5/nq/w/0.07, 6/q/interval1/0.07, 7/q/interval1/0.07, 8/q/w/0, 9/nq/w/0.07, 10/q/interval2/0}
    {   
        \draw[\n] (\i*4/3, -0.3) rectangle (\i*4/3 + 4/3 - \sep, 0.3); 
        \node () at (\i*4/3 + 2/3, 0) {\i}; 
        \node[circle] (\i) at (\i*4/3 + 2/3, 0.29) {}; 
        \node[circle] (-\i) at (\i*4/3, 0.31) {}; 
    };
\end{scope}

\begin{scope}[every node/.style={}]
    \node (Q1) at (6*4/3, 3) {$(5, 6)$};
    \node (Q2) at (2*4/3, 2) {$(1, 2)$};
    \node (Q3) at (8*4/3, 2) {$(7, 8)$};
    \node (Q4) at (4/3, 1) {$(0, 1)$};
    \node (Q5) at (4, 1) {$(2, 3)$};
    \node (Q6) at (7*4/3, 1) {$(6, 7)$};
    \node (Q7) at (10*4/3, 1) {$(9, 10)$};
\end{scope}    

\begin{scope}[every edge/.style={->, draw}, every node/.style={fill=white, circle, scale=0.8}]
    \path (Q1) edge (Q2);
    \path (Q1) edge (Q3);
    \path (Q2) edge (Q4);
    \path (Q2) edge (Q5);
    \path (Q3) edge (Q6);
    \path (Q3) edge (Q7);
    \path (Q4) edge (0);
    \path (Q4) edge (1);
    \path (Q5) edge (2);
    \path[failed] (Q5) edge (4);
    \path (Q6) edge (6);
    \path (Q6) edge (7);
    \path[failed] (Q7) edge (-9);
    \path (Q7) edge (10);
\end{scope}

\end{tikzpicture} 
        \caption{Maximal covered set with two intervals, one of which intersects with vertex 0 and vertex 10. Interval $[6 \oplus 2]$ is colored red and interval $[10\oplus 4]$ is colored blue. The total number of covered vertices is 6.}
        \label{fig:border}
    \end{subfigure}

    \vspace{0.5cm}
    
    \begin{subfigure}[t]{\textwidth}
        \centering
        \begin{tikzpicture}[yscale=0.8,xscale=0.9,font=\scriptsize]  

\tikzset{
  q/.style={very thick, black},
  nq/.style={dashed, black},
  g/.style={fill=gray!50, draw=none},
  interval1/.style={fill=red!30},
  interval2/.style={fill=blue!30},
  w/.style={},
  failed/.style={->}
}

\begin{scope}[]
    \foreach \i / \s / \n / \sep in {0/nq/w/0, 1/nq/w/0.07, 2/q/interval1/0.07, 3/q/interval1/0.07, 4/q/w/0, 5/nq/w/0.07, 6/q/interval2/0.07, 7/q/interval2/0.07, 8/q/interval2/0.07, 9/q/w/0, 10/nq/w/0}
    {   
        \draw[\n] (\i*4/3, -0.3) rectangle (\i*4/3 + 4/3 - \sep, 0.3);
        \node () at (\i*4/3 + 2/3, 0) {\i}; 
        \node (\i) at (\i*4/3 + 2/3, 0.25) {};
        \node (-\i) at (\i*4/3, 0.30) {}; 
    };
\end{scope}

\begin{scope}[every node/.style={}]
    \node (Q1) at (6*4/3, 3) {$(5, 6)$};
    \node (Q2) at (4, 2) {$(2, 3)$};
    \node (Q3) at (8*4/3, 2) {$(7, 8)$};
    \node (Q4) at (2*4/3, 1) {$(1, 2)$};
    \node (Q5) at (4*4/3, 1) {$(3, 4)$};
    \node (Q6) at (7*4/3, 1) {$(6, 7)$};
    \node (Q7) at (9*4/3, 1) {$(8, 9)$};
\end{scope}    

\begin{scope}[every edge/.style={->, draw}, every node/.style={fill=white, circle, scale=0.9}]
    \path (Q1) edge (Q2);
    \path (Q1) edge (Q3);
    \path (Q2) edge (Q4);
    \path (Q2) edge (Q5);
    \path (Q3) edge (Q6);
    \path (Q3) edge (Q7);
    \draw[failed] (Q4) -> (-1);
    \path (Q4) edge (2);
    \path (Q5) edge (3);
    \draw[failed] (Q5) -> (-5);
    \path (Q6) edge (6);
    \path (Q6) edge (7);
    \path (Q7) edge (8);
    \draw[failed] (Q7) -> (-10);
\end{scope}

\end{tikzpicture} 
        \caption{Maximal covered set with 2 intervals. Interval $[2 \oplus 2]$ is colored red and interval $[6\oplus 3]$ is colored blue. Even though the number of intervals is also 2, the number of covered vertices is 5 because neither vertex 0 nor vertex 10 is covered.}
        \label{fig:noborder}
    \end{subfigure}
    \caption{An example of two maximal covered sets and their respective search trees on a line of length $n=11$ and a budget of $k=3$ queries. Covered vertices are colored, and dashed arrows point to leaves of the search tree with two or more vertices. Figure \ref{fig:border} shows the first case of condition (iii) of Proposition \ref{prop:non-dominated} and Figure \ref{fig:noborder} shows the second case.}
    \label{fig:proposition1}
\end{figure}

The first two conditions in the proposition encompass the fact that the vertices not covered by a search strategy always have at least one neighbor that is also not covered. 
The third condition says that there is a trade-off between the number of intervals needed in the description of $C$ and the number of vertices in $C$. Finally, it also says that if $C$ contains $0$ or $n-1$ (or both), we gain one extra covered node. See Figure \ref{fig:proposition1} for an illustration.

Of particular interest are search strategies with a single interval, i.e., with $s=1$, as they maximize the number of covered elements. We call such strategies \emph{efficient}. The following corollary is a direct consequence of Proposition~\ref{prop:non-dominated}.

\begin{corollary}\label{cor:efficient}
For every $v \in V\setminus\{1\}$ there exists a search strategy $\T_v$ that covers,
$$
    C(\T_v)= \begin{cases}
        [v\oplus (c+1)] & \text{if } \{0,n-1\}\cap  [v\oplus (c+1)]\neq \emptyset,\\
        [v\oplus c] & \text{otherwise}.
    \end{cases}  
$$
\end{corollary}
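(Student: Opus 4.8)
The plan is to derive Corollary~\ref{cor:efficient} directly from Proposition~\ref{prop:non-dominated} by specializing to the single-interval case $s=1$. Given $v\in V\setminus\{1\}$, I would consider the candidate covered set $C=[v\oplus\ell]$ where $\ell=c+1$ if $\{0,n-1\}\cap[v\oplus(c+1)]\neq\emptyset$ and $\ell=c$ otherwise, and verify that this $C$ satisfies conditions (i)--(iii) of the proposition with $s=1$ (so that the proposition guarantees the existence of a search strategy $\T\in\BSTs_k$ with $C(\T)=C$, which we name $\T_v$). Note that $s=1$ is automatically minimal since $\ell\geq c=2^k-2\geq 2$ (using $k\geq 2$), so $C$ is a genuine nonempty interval, not two intervals that happen to be adjacent modulo $n$.

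The verification of the three conditions is the substance. For condition (iii), with $s=1$ the required equality reads $\ell_1=c+1-1=c$ when $\{0,n-1\}\cap C=\emptyset$, and $\ell_1=c+2-1=c+1$ otherwise; this is exactly how we chose $\ell$, so (iii) holds by construction. For condition (ii), with $s=1$ there are no consecutive-pair constraints $u_{t+1}-(u_t+\ell_t)\geq 2$; the only constraint is the wrap-around one, $u_1-(u_s+\ell_s-n)\geq 2$, i.e. $v-(v+\ell-n)\geq 2$, which simplifies to $n-\ell\geq 2$, i.e. $\ell\leq n-2$. Since $\ell\leq c+1=2^k-1<n$ by the standing assumption $n>2^k$, we only need to rule out $\ell=n-1$; but $\ell\leq c+1=2^k-1\leq n-2$ precisely because $n>2^k$ forces $n\geq 2^k+1$, hence $n-2\geq 2^k-1\geq\ell$. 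So (ii) holds.

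Condition (i) is the only place that uses the hypothesis $v\neq 1$ and the case distinction in a nontrivial way. We must check $u_1\neq 1$ and $u_s+\ell_s-1\neq n-2$, i.e. $v\neq 1$ and $v+\ell-1\not\equiv n-2\pmod n$. The first is the hypothesis. For the second, I would argue: if $\{0,n-1\}\cap C\neq\emptyset$ then $C$ already contains $n-1$ (if it contains $0$ and is an interval starting at $v\neq 1$, one checks it must also reach around appropriately, or one argues directly that $v+\ell-1\equiv n-2$ would make $C$ end exactly one short of $n-1$ while still being forced to contain $0$, a contradiction with maximality); the cleanest route is to observe that $v+\ell-1\equiv n-2$ together with $\ell=c+1$ would mean $v\equiv n-2-c=n-2^k$, and then $[v\oplus(c+1)]$ ends at $n-2$ and does not contain $n-1$ nor (since its length is only $c+1<n-1$) the vertex $0$, contradicting our choice of the $\ell=c+1$ case. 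If instead $\{0,n-1\}\cap C=\emptyset$ and $\ell=c$, then $v+\ell-1\equiv n-2$ would give $v\equiv n-1-c$, but then $C=[v\oplus c]$ has $v-1\equiv n-2-c\not\in\{0,n-1\}$ fine, yet we would be in a situation where enlarging by the endpoint is possible — more carefully, one shows $v+c-1=n-2$ forces the endpoint $n-2$ to be covered with $n-1$ uncovered and $v-1$ uncovered, and condition (i) of the proposition is exactly the assertion that such a configuration is \emph{not} a maximal covered set, so this case is excluded by having chosen the correct branch of the definition. I expect the main obstacle to be this careful bookkeeping in condition (i): making sure the two branches of the definition of $C(\T_v)$ line up exactly with the $\{0,n-1\}\cap C=\emptyset$ dichotomy in the proposition, and in particular checking that when $[v\oplus(c+1)]$ meets $\{0,n-1\}$ it does so in a way consistent with (i)--(ii), and that the excluded value $v=1$ is genuinely the only obstruction. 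Everything else is immediate from Proposition~\ref{prop:non-dominated} with $s=1$.
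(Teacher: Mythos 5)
Your verification is correct and matches the paper's (implicit) argument: the paper states the corollary as a direct consequence of Proposition~\ref{prop:non-dominated}, and your check of conditions (i)--(iii) with $s=1$ is exactly the routine specialization this entails. The only rough spot is the wording in the $\ell=c$ subcase of condition (i), where the clean one-liner is that $v+c-1=n-2$ would force $n-1\in[v\oplus(c+1)]$, contradicting the hypothesis of that branch --- which is what you ultimately conclude anyway.
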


Observe that there is no efficient strategy (of length $c$) that covers an interval starting at~$1$, as we could then obtain a strategy of length $c+1$ that also covers $0$. For any $v \in V\setminus\{1\}$, we denote by $\T_v$ the search strategy that has a covering set as in the corollary.

\paragraph*{The Seeker's Strategy.}

We will start by constructing a strategy $x=(x_T)_{T\in \mathcal{T}_k}\in \Delta_k$ for the seeker given by a greedy algorithm that defines the strategies of $x$. Given $\mathcal{X}$ which denotes the support of $x$, we will define the probability vector $x$ uniformly over $\mathcal{X}$, i.e., $x_T = \frac{1}{|\mathcal{X}|} $ if $ T\in \mathcal{X}$ and 0 otherwise.

Intuitively, our aim is to cover all nodes as evenly as possible with efficient strategies, that is, by the same number of strategies in $\mathcal{X}$. Consider for example the case $n=12$ and $k=3$ (i.e., $c=6$), depicted in Figure~\ref{fig:coprime}. Imagine that we choose $\T_0$ to be in $\mathcal{X}$. This alone yields an objective of 0, and hence a natural choice is to consider $\mathcal{X}=\{\T_0,\T_7\}$. However, this implies that nodes in $\{0,1\}$ are covered twice while the rest is covered only once. This imbalance suggests we should add more strategies, the most natural being $\T_2$, yielding an imbalance on nodes in $\{8,9,10,11\}$. By continuing adding strategies to $\mathcal{X}$ in a greedy manner, we obtain the solution in Figure~\ref{fig:coprime}, yielding an objective value of $\min_v \sum_{T: v\in C(T)} x_T=5/9$. These examples suggests Algorithm~\ref{alg:greedySeeker}, that can be interpreted as a greedy rectangle  packing algorithm, see Figure \ref{fig:not-coprime} and Figure \ref{fig:coprime}.
\begin{algorithm}[htb!]
    \SetAlgoLined
     Set $v=c+1$ and $\mathcal{X} =\{\T_0\}$;\\
     \While{$v\not\in\{0,1\}$}{
        $\mathcal{X} = \mathcal{X}\cup \{\T_v\}$\\
        $v = (v+c) \mod n-1.$\label{line:cUpdate}
    }
    Set $x_T=1/|\mathcal{X}|$ if $T\in \mathcal{X}$ and $x_T=0$ otherwise.
\caption{Greedy algorithm for the seeker's strategy}
\label{alg:greedySeeker}
\end{algorithm}

Observe that in Line~\ref{line:cUpdate} of Algorithm~\ref{alg:greedySeeker} we take $v+c$ modulo $n-1$ instead of modulo $n$, which might be counter-intuitive. The reason for this choice is that if we take away node $n-1$, then the cardinality (length) of each set $C(T_v)\setminus\{n-1\}$ for every $v\not\in \{0,1\}$ is $c$. This avoids the case distinction of Corollary~\ref{cor:efficient}..

Given a set $\mathcal{X}$, the objective value of $x$ is $\min_v \sum_{T: v\in C(T)} x_T$. If we reach the case during the while loop where $v=0$, then the probability of covering vertex $v^*$ is $
|\{\T\in \mathcal{X}: v^*\in C(\T) \}|/|\mathcal{X}|,
$ 
which is the same for every $v^*$ as each element is covered by the same number of search strategies in $\mathcal{X}$. This is the reason why $v=0$ terminates the while loop. If we reach the situation where $v=1$ in the while loop, then we should also stop since $\T_1$ is not well defined. We observe that all nodes, except maybe for node $0$, are covered the same number of times by $\mathcal{X}$, i.e., $|\{\T\in\mathcal{X}: 1\in C(\T) \}|=|\{\T\in \mathcal{X}: v\in C(\T) \}|$ for all $v\in V\setminus\{0\}$. We define this quantity as $h:= |\{\T\in\mathcal{X}: 1\in C(\T) \}|$ and we say that it corresponds to the \textit{height} of~$\mathcal{X}$. Similarly, we denote the cardinality of $\mathcal{X}$ as $w=|\mathcal{X}|$. Observe that with these parameters each node, except maybe for node $0$, is covered by $h$ many sets $C(T_v)$ for $\T_v\in \X$. Hence, the objective value of solution $x$ is $\frac{h}{w}$. The following lemma yields an explicit relationship between $w$ and $h$, and shows how to compute these values with a faster algorithm. In particular, it relates $w$ and $h$ with the $\text{gcd}(c,n-1)$, the greatest common divisor of $c$ and $n-1$, and Bezout's identity. Running times in the next lemma are considered in the RAM model.

\begin{lemma}\label{lm:bezout}
Algorithm~\ref{alg:greedySeeker} terminates in finite time. If $x$ denotes the output of Algorithm~\ref{alg:greedySeeker}, then its objective value is $\frac{h}{w}$, where $w$ is the cardinality and $h$ is the height of $\mathcal{X}$. Moreover, $h,w\in\mathbb{N}$ are the numbers that satisfy
\begin{equation}
\label{eq:gcd}
h(n-1)-wc= \begin{cases} 1 & \text{if } \text{gcd}(c,n-1)= 1 \text{ and},\\
0 &\text{otherwise,}
\end{cases}
\end{equation}
where $0<w\le n-2$ is minimal.
In particular, we can compute $w,h$, and the objective value of $x$ in $O(\log n)$ time. Moreover, if $c$ and $n-1$ are not coprime the objective value simplifies to $\frac{h}{w}=\frac{c}{n-1}$ and $h=\frac{c}{\text{gcd}(c,n-1)}$ and $w=\frac{n-1}{\text{gcd}(c,n-1)}$.
\end{lemma}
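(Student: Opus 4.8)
The plan is to analyze Algorithm~\ref{alg:greedySeeker} directly via the arithmetic of the update rule $v \mapsto (v+c) \bmod (n-1)$. First I would observe that the sequence of values of $v$ produced by the loop, starting from $c+1$ and before the initial $\mathcal{T}_0$, is exactly $0, c, 2c, 3c, \ldots$ reduced modulo $n-1$ (note $c+1 \equiv c \pmod{n-1}$ only after accounting for the off-by-one; more precisely the set $\mathcal{X}$ consists of $\mathcal{T}_{jc \bmod (n-1)}$ for $j=0,1,\ldots,w-1$, where $w$ is the first index $j\ge 1$ with $jc \equiv 0$ or $1 \pmod{n-1}$). Since the map $v\mapsto v+c$ on $\mathbb{Z}/(n-1)\mathbb{Z}$ is a bijection, the iterates cannot cycle back to a previously visited value without hitting $0$ first, so the loop terminates in at most $n-1$ steps; this gives the finiteness claim and the bound $0 < w \le n-2$.

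Next I would pin down $w$ and $h$ via the termination condition. The loop stops at the first $j=w\ge 1$ with $jc \bmod (n-1) \in \{0,1\}$. If $\gcd(c,n-1)=d>1$, then $jc \bmod (n-1)$ is always a multiple of $d$, hence never equals $1$, so termination happens exactly when $jc \equiv 0 \pmod{n-1}$, i.e., at $w = (n-1)/d$, the order of $c$ in $\mathbb{Z}/(n-1)\mathbb{Z}$; correspondingly $wc = h(n-1)$ forces $h = c/d$, which is the ``otherwise'' case of \eqref{eq:gcd}. If $\gcd(c,n-1)=1$, then $c$ is invertible mod $n-1$, so $jc$ runs through all residues; the first time it equals $0$ is $j=n-1$, whereas there is a unique $w \in \{1,\ldots,n-2\}$ with $wc \equiv 1 \pmod{n-1}$, and this is hit strictly earlier, so the loop stops with $wc = h(n-1) - 1$ for the corresponding $h$, i.e., $h(n-1) - wc = 1$ — this is exactly Bézout's identity for $\gcd(c,n-1)=1$, with $w$ the minimal positive such multiplier. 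In both cases $w$ is minimal positive by construction of ``first time'', matching the lemma's characterization.

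Then I would verify the objective value. By Corollary~\ref{cor:efficient} (and the remark justifying the mod $n-1$ update), each $\mathcal{T}_v$ with $v\notin\{0,1\}$ covers a length-$c$ interval in $V\setminus\{n-1\}$, while $\mathcal{T}_0$ covers an interval of length $c+1$ containing $0$; a counting/telescoping argument over the arithmetic progression of starting points shows every node of $V\setminus\{0\}$ is covered exactly $h$ times, so $\min_{v^*}\sum_{T:v^*\in C(T)} x_T = h/w$. Finally, the running-time and simplification claims: when $d=\gcd(c,n-1)>1$ we directly read off $h/w = (c/d)/((n-1)/d) = c/(n-1)$, $h = c/d$, $w = (n-1)/d$; when $d=1$ we solve $h(n-1) - wc = 1$ by one call to the Extended Euclidean algorithm on $(n-1, c)$, which runs in $O(\log n)$ arithmetic operations in the RAM model, and reduce the returned coefficients to the minimal positive $w\in\{1,\ldots,n-2\}$ and corresponding $h$. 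I expect the main obstacle to be the bookkeeping around the mod $n-1$ versus mod $n$ discrepancy and the special role of node $0$ (and the excluded node $1$) — making the ``every node except possibly $0$ is covered exactly $h$ times'' statement fully rigorous requires carefully tracking how the length-$(c+1)$ interval $C(\mathcal{T}_0)$ and the wrap-around interact with the progression $jc \bmod (n-1)$, rather than any deep number theory, which is entirely standard once the setup is correct.
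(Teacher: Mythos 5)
Your overall route is the same as the paper's: the paper's proof is a ``volume'' argument, $\mathrm{vol}(\mathcal{X}) = wc+1 = h(n-1)+v$ with $v\in\{0,1\}$ at termination, followed by Bezout's identity and the Extended Euclidean algorithm for the $O(\log n)$ computation — which is exactly your ``track the arithmetic progression of iterates mod $n-1$'' plan. However, your concrete bookkeeping of that progression contains a sign error that changes the answer in the coprime case. Tracing the algorithm, when $|\mathcal{X}|=j$ the current value is $v=(jc+1)\bmod (n-1)$ (the $+1$ comes from $\T_0$ covering $c+1$ nodes of $V\setminus\{n-1\}$), so the loop stops at the first $w\ge 1$ with $wc\equiv 0$ or $-1 \pmod{n-1}$, not with $wc\equiv 0$ or $1$ as you state. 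When $\gcd(c,n-1)=1$ these are genuinely different: your rule would give $w\equiv c^{-1}\pmod{n-1}$, while the algorithm's $w$ is $(n-1)-c^{-1}$. For the paper's own example $n=12$, $k=3$ (so $c=6$, $n-1=11$), your condition yields $w=2$ (since $2\cdot 6\equiv 1$), whereas the algorithm produces $w=9$, $h=5$ and value $5/9$; with $w=2$ the claimed identity $h(n-1)-wc=1$ has no integer solution $h$, which signals the inconsistency. The same slip appears in your deduction ``$wc\equiv 1\pmod{n-1}$, so the loop stops with $wc=h(n-1)-1$'': the congruence $wc\equiv 1$ gives $wc=h(n-1)+1$, i.e.\ $h(n-1)-wc=-1$; the equation $h(n-1)-wc=1$ you want corresponds to $wc\equiv -1$.

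Once the termination condition is corrected to $(wc+1)\bmod(n-1)\in\{0,1\}$, the rest of your argument goes through and matches the paper: in the non-coprime case neither $+1$ nor $-1$ is a multiple of $d=\gcd(c,n-1)$, so termination happens exactly at $wc\equiv 0$, giving $w=(n-1)/d$, $h=c/d$ (this is why your conclusion there is unaffected by the sign error); in the coprime case minimality of $w$ follows because the first $j$ with $jc\equiv 0$ is $j=n-1$, which is later than the unique $w\le n-2$ with $wc\equiv-1$. Two smaller points: the iterates never hit $0$ when $d>1$ (they hit $1$ after $(n-1)/d$ steps), so ``cannot cycle back without hitting $0$ first'' should refer to $\{0,1\}$; and the starting points of the chosen strategies are $0$ and $(jc+1)\bmod(n-1)$, not $jc\bmod(n-1)$. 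Deferring the ``every node except possibly $0$ is covered exactly $h$ times'' claim is consistent with the paper, whose proof likewise refers back to the discussion preceding the lemma rather than re-deriving it.
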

\begin{proof}
We have already argued about the objective value of $x$. 

We now give a proof for Equation~\eqref{eq:gcd} and that the algorithm terminates. For a given iteration, consider the state of the algorithm right after running line 4, in particular we have a set $\X$ together with an updated value of $v$. Let $w$ be  the cardinality of $\X$ and $h$ its height, defined as $h=\min_{v\in V} |\{T\in X: v\in C(T)\}|$. 

We use a volume argument. Define the volume of $\X$ as $\text{vol}(\X)= \sum_{\T \in \X} |C(\T)\setminus\{n-1\}|$. Observe that the length of all intervals $C(\T)\setminus \{n-1\}$ for $\T\in \X\setminus \{T_0\}$ is $c$. Thus,
$
\text{vol}(\X) = 1 + wc. 
$
On the other hand, $\text{vol}(\X) = (n-1)h + v$. Notice that $v$ can also be seen as $\text{vol}(\X) \mod n-1$, and that the algorithm terminates the first time that $v\in\{0,1\}$. Hence, if $\X$ is the final set, then $w$ is the smallest number such that 
$$ h(n-1)- wc =1 - v,$$
for some $v\in \{0,1\}$. 

Recall that basic properties of modular arithmetic and Bezout's identity~\cite{bezout1779theorie}, implies that all numbers of the form $s(n-1) + tc$ for $s,t\in \mathbb{Z}$ are multiples of $\text{gcd}(c,n-1)$. Moreover, there exists a pair $(s,t)\in \mathbb{Z}^2$ that satisfies the equation $s(n-1) + tc= \text{gcd}(c,n-1)$, and any pair $(s',t')\in \mathbb{Z}^2$ that satisfies this equation is of the form $(s',t')=(s-r\frac{c}{\text{gcd}(c,n-1)},t+r\frac{n-1}{\text{gcd}(c,n-1)})$ for some $r\in \mathbb{Z}$. Therefore, the pair $(s,t)$ that satisfies this equation with largest $t<0$ value satisfies that $|t| \le \frac{n-1}{\text{gcd}(c,n-1)}$, where equality holds if and only if $c$ divides $n-1.$ 

Assume that $\text{gcd}(c,n-1)=1$. With the previous discussion, the smallest $w$ such that $h(n-1)- wc =1$, satisfies that $w< n-1$. This in particular implies that the algorithm terminates after at most $n-1$ iterations. Moreover, the pair $(h',w')$ with the smallest $w'> 0$ such that $h'(n-1)- w'c =0$ implies that $w'=n-1$ (and $h'=c$), as otherwise $c$ and $n-1$ would share a common divisor. Hence, the smallest value of $w$ such that $h(n-1)-wc\in \{0,1\}$ is attained when $h(n-1)-wc=1$, and thus Equation~\eqref{eq:gcd} holds. 

If $\text{gcd}(c,n-1)>1$, then there are no values of $h,w\in \mathbb{Z}$ such that $h(n-1)-wc=1$, and hence $w$ is the smallest number such that $h(n-1)-wc=0$ by construction. This implies Equation~\eqref{eq:gcd} and that the algorithm terminates after $w=\frac{n-1}{\text{gcd}(c,n-1)}$ many iterations.

To obtain a fast algorithm to compute $h$ and $w$, use the Extended Euclidean Algorithm to compute values of $s,t\in \mathbb{N}$ such that $s(n-1)-tc = \text{gcd}(c,n-1)$ and $t>0$ is minimal. This  takes $O(\log n)$ time (in the RAM model)~\cite[Chapter 31]{cormen_introduction_2009}. If $\text{gcd}(c,n-1)=1$ then we are done as $h=s$ and $w=t$. Otherwise, $h=\frac{c}{\text{gcd}(c,n-1)}$ and $w=\frac{n-1}{\text{gcd}(c,n-1)}$.
\end{proof}

With this lemma, we observe that we can perform a search following $x$ in logarithmic time, even without the need of running Algorithm~\ref{alg:greedySeeker}: simply compute $h$ and $w$ with the Extended Euclidean Algorithm and sample the $t$-th element in $\mathcal{X}$ uniformly.

\begin{corollary}\label{cor:efficientAlg}
Let $x$ be the output of Algorithm~\ref{alg:greedySeeker}. We can sample a tree $\T$ with probability $x_{\T}$ in time $O(\log n)$ and using $O(\log n)$ random bits, without the need to compute $x$. After choosing $\T$, each query of the tree can be determined in constant time. 
\end{corollary}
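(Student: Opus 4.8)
This is essentially a corollary of the fast computation of $w$ in Lemma~\ref{lm:bezout}, combined with the explicit structure of $\mathcal{X}$ and of the efficient strategies $\T_v$; the point is never to materialize $\mathcal{X}$ or $x$, which can have size $\Theta(n)$. First I would record a closed form for $\mathcal{X}$: the variable $v$ in Algorithm~\ref{alg:greedySeeker} equals $c+1$ right after initialization and is incremented by $c$ modulo $n-1$ at Line~\ref{line:cUpdate}, so the $i$-th strategy inserted into $\mathcal{X}$ in the while loop is $\T_{v_i}$ with $v_i=(ic+1)\bmod(n-1)$, for $i=1,\ldots,w-1$. Hence $\mathcal{X}=\{\T_0\}\cup\{\,\T_{(ic+1)\bmod(n-1)}\;:\;1\le i\le w-1\,\}$ consists of $w$ pairwise distinct strategies (distinctness by the same non-repetition argument for $v_1,v_2,\ldots$ used in the proof of Lemma~\ref{lm:bezout}), and $x$ is uniform over them. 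A sample is then produced as follows: run the Extended Euclidean Algorithm to obtain $w$ in $O(\log n)$ time (Lemma~\ref{lm:bezout}); draw $j$ uniformly from $\{0,1,\ldots,w-1\}$ by a Las Vegas rejection scheme ($\ceil{\log_2 w}$ bits per trial, each trial succeeding with probability $>1/2$, hence $O(\log n)$ random bits in expectation); output $\T_0$ if $j=0$ and $\T_{(jc+1)\bmod(n-1)}$ otherwise. Since $0\le j\le w-1\le n-3$ and $c=2^k-2<n$, the product $jc$ occupies $O(1)$ machine words, so this last step is $O(1)$ arithmetic; altogether sampling takes $O(\log n)$ time.

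It remains to show that, once $v$ is known, each query of $\T_v$ is computable in $O(1)$ time. I would fix the realization of the strategy $\T_v$ from Corollary~\ref{cor:efficient} to be the \emph{balanced binary search} over a window of exactly $2^k=c+2$ consecutive ``slots'': the $|C(\T_v)|\in\{c,c+1\}$ singleton vertex-slots forming the covered interval, together with the remaining $2^k-|C(\T_v)|\in\{1,2\}$ slots, each a single leaf lumping all vertices of $V$ lying strictly on one side of the covered interval (on a line there are at most two such sides, and these are exactly the positions left uncovered in Proposition~\ref{prop:non-dominated}). Such a balanced search is the complete binary tree of height $k$, so $\T_v\in\mathcal{T}_k$, and it is run by maintaining a current slot-index range $[\ell,r]$ — initialized from $v$ in $O(1)$ — querying at each step the edge of $G$ separating slots $\floor{(\ell+r)/2}$ and $\floor{(\ell+r)/2}+1$, and then replacing $[\ell,r]$ by its left or right half according to the answer. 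The map from slot index to edge of $G$ is a fixed piecewise-linear function of $v$ (accounting only for the $\le 2$ lumped slots), so every step is $O(1)$ arithmetic in the RAM model.

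The step I expect to be the main obstacle is the second one: committing to a concrete, $O(1)$-navigable realization of the abstract strategy $\T_v$ and checking that it has height $\le k$ and precisely the covered set prescribed by Corollary~\ref{cor:efficient} in all cases — in particular the boundary cases where the covered interval contains $0$ and/or $n-1$ (and may wrap around modulo $n$), so that one or both lumped slots degenerate while the total slot count stays exactly $2^k$. The remaining ingredients — the closed form for $\mathcal{X}$, the appeal to the Extended Euclidean Algorithm, and the word-size bounds making the per-sample and per-query arithmetic $O(1)$ — are routine.
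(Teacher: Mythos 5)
Your proposal is correct and follows essentially the same route as the paper's proof: compute $w$ via the Extended Euclidean Algorithm (Lemma~\ref{lm:bezout}), draw a uniform index $j\in\{0,\ldots,w-1\}$, map it to $\T_0$ or $\T_{(jc+1)\bmod (n-1)}$ in closed form, and realize $\T_v$ as the balanced binary search over the $2^k$ intervals from the construction in Proposition~\ref{prop:non-dominated}/Lemma~\ref{lm:leaves}. You simply spell out two points the paper leaves implicit — the rejection scheme for exact uniform sampling and the $O(1)$-per-query slot-index navigation — which is a faithful elaboration rather than a different argument.
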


The proof of this result is provided in the Appendix. In order to show that the constructed solution $x$ is indeed optimal, we construct a dual $y$ with the same objective value. Hence, by weak duality this implies that the pair $(x,y)$ is a Nash equilibrium. As suggested by the previous lemma, we should distinguish whether $n-1$ and $c$ are coprime or not. 

Although our construction of the dual solution does not explicitly rely on complementary slackness, the intuition for our approach came from thinking about certain constraints that this concept imposed on the dual variables. In particular, observe that the solution of the seeker induces ``segments'' of vertices that are covered by the same subset of search strategies in $\mathcal{X}$. See, for example, vertices 3 and 4 in Figure \ref{fig:not-coprime} or vertices 5 and 6 in Figure \ref{fig:coprime}. There are $w$ of these segments (ignoring vertex 0 in the not coprime case), and complementary slackness implies that each of them should have a mass of $1/w$ in an optimal solution for the hider (albeit it does not imply that the mass should be uniformly distributed within each segment). The solution we construct shares this structure: the line is split into $w$ segments, and each of them gets a probability of $1/w$ in total. How these segments are chosen and how the mass is distributed within them differs significantly depending on whether $n-1$ and $c$ are coprime or not. However, in both cases the solution has the property that any pair of disjoint intervals covers at most the same probability mass as an interval obtained by gluing them together according to Proposition \ref{prop:non-dominated}. This then implies that the best-response strategy of the seeker is achieved with an efficient strategy, for which the covered mass will be precisely $h/w$.

We start with the not coprime case, which turns out to admit a simpler proof since the segments can be chosen in a more regular manner. On the other hand, the coprime case is significantly more challenging from a technical point of view as $y$ cannot be constructed so regularly.

\paragraph*{Optimality if $n-1$ and $c$ are not coprime.}

If $n-1$ and $c$ are not coprime, Lemma~\ref{lm:bezout} implies that the objective value of the solution $x$ computed by Algorithm~\ref{alg:greedySeeker} equals $\frac{c}{n-1}$. In what follows we will explicitly define a solution $y$ for the dual [P] with the same objective function, thus implying optimality of $x$ and $y$. The dual solution that we will analyze is defined as
\begin{equation}\label{eq:yCoprime}
    y_v = \begin{cases}
        0 & \text{if $v=k\cdot d$ for some $k\in \mathbb{N}_0$,}\\
        \frac{1}{w(d-1)} & \text{otherwise,} 
    \end{cases}
\end{equation}
for all $v\in V$, where $w = (n-1) / d$ and $d=\text{gcd}(c,n-1)$. We can interpret this solution by thinking of the vertices $\{1, \ldots, n-1\}$ as divided into $w$ segments, each containing vertices $kd+1, \ldots, kd+d$ for $k=0 \ldots, w-1$. Within each segment, its probability of $1/w$ is distributed uniformly among its first $d-1$ vertices (see Figure \ref{fig:not-coprime}). Although this interpretation may seem somewhat contrived, it serves to highlight the connection to the construction of the coprime case, which relies heavily on the use of segments.

In what follows, we will show that $y$ is indeed a probability distribution and achieves an objective value of $\frac{c}{n-1}$.

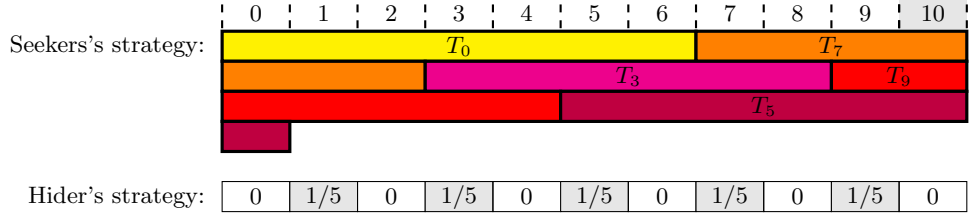
\begin{figure}

\tikzset{
  q/.style={ultra thick, black},
  nq/.style={dashed, thick, black},
  g/.style={fill=gray!50, draw=black, very thick},
  final/.style={fill=gray!20, draw=black, very thick},
  w/.style={fill=none, draw=none, minimum width=1cm},
  r/.style={fill=gray!20, draw=none, minimum width=1cm, minimum height=1cm}
}

\begin{tikzpicture}[yscale=0.4,xscale=0.9,font=\smaller]     
    \draw[r] (9, 0.1) rectangle (10,1.1);
    \foreach \i / \s / \n  in {0/nq/w, 1/nq/w, 2/nq/w, 3/nq/w, 4/nq/w, 5/nq/w, 6/nq/w, 7/nq/w, 8/nq/w, 9/nq/w, 10/nq/w} 
    {   
        \node[\n] at (\i - 0.5, 0.6) {\i};
	    \draw[\s] (\i, 0.1) -- (\i, 1.1); 
    };
    \draw[nq] (-1,0.1) -- (-1, 1.1);
    \draw[nq] (9,0.1) -- (9, 1.1);
    \draw[g,fill=yellow] (-1,-1) rectangle node {$\T_0$} (6, 0);
    \draw[g,fill=orange] (6,-1) rectangle node {$\T_7$} (10, 0); \draw[g,fill=orange] (-1,-2) rectangle  (2, -1);
    \draw[g,fill=magenta] (2,-2) rectangle node {$\T_3$}(8, -1);
    \draw[g,fill=red] (8,-2) rectangle node {$\T_9$} (10, -1); \draw[g,fill=red] (-1,-3) rectangle (4, -2);
    \draw[g,fill=purple] (4,-3) rectangle node {$\T_5$} (10, -2); \draw[g,fill=purple] (-1,-4) rectangle (0, -3);

    \foreach \i in {0, ..., 10} {
        \ifthenelse{\isodd{\i}}{
            \draw[fill=gray!20] (\i -1, -6) rectangle node{$1/5$} (\i, -5);
        }{
            \draw (\i -1, -6) rectangle  node{$0$}  (\i, -5);
        };  
    };

    \draw (-1.1, -5.5) node[left] {Hider's strategy:};
    \draw (-1.1, -0.5) node[left] {Seekers's strategy:};

\end{tikzpicture} 
\caption{Example of the game on a line of length $n=11$ with $k=3$ queries, where $\text{gcd}(c,n-1)=2$. The searcher choose uniformly one out of 5 strategies. Each strategy is depicted with a distinct color marking the covered cells. Here all cells are covered with probability $3/5$, except the first cell which gets more coverage. By appropriately shifting the strategies the seeker can choose to overcover exactly one of the nodes 0,2,4,6,8 or 10. These are exactly the nodes which the hider avoids, choosing uniformly among all other nodes. The value of the game is $3/5$.}
\label{fig:not-coprime}
\end{figure}

\begin{theorem}
    Assume that  gcd$(c,n-1) = d > 1$. Then, $y := (y_v)_{v\in V}$, as defined in Equation~\eqref{eq:yCoprime}, is a feasible solution for [D] and has objective value $\frac{c}{n-1}$. Hence, the pair $(x,y)$ is a Nash equilibrium where $x$ is the solution output by Algorithm~\ref{alg:greedySeeker}.
    \label{thm:not-cp}
\end{theorem}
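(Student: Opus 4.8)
The statement splits into two parts: feasibility of $y$ for [D] together with its objective value, and the Nash–equilibrium claim. I would obtain the second from the first by weak LP duality, recalling from Lemma~\ref{lm:bezout} that in the non-coprime case the seeker's vector $x$ from Algorithm~\ref{alg:greedySeeker} is feasible for [P] with objective value $\frac hw=\frac{c}{n-1}$; so the real work is to show (a) $y\in\Delta_V$ and (b) the objective value of $y$, namely $\max_{T\in\mathcal T_k}\sum_v y_v\,p(h_T(v))$, equals $\frac{c}{n-1}$.

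Part (a) is a short count. With $d=\gcd(c,n-1)$ and $w=(n-1)/d$, let $Z:=\{v\in V: d\mid v\}$ be the set of zero-mass vertices. Since $wd=n-1$ and $(w+1)d=n-1+d>n$ (as $d\ge 2$), $Z=\{0,d,2d,\dots,wd=n-1\}$ has exactly $w+1$ elements, so the remaining $n-1-w=w(d-1)$ vertices each carry mass $\frac1{w(d-1)}$, whence $y\ge 0$ and $\sum_v y_v=1$. For part (b), since $p$ is the unit profit we have $\sum_v y_v\,p(h_T(v))=y(C(T))$ where $y(S):=\sum_{v\in S}y_v=\frac{|S\setminus Z|}{w(d-1)}$; thus (b) is equivalent to proving $|C(T)\setminus Z|\le c-\frac cd$ for every $T\in\mathcal T_k$, with equality for at least one $T$. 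Equality is witnessed by the efficient strategy $\T_2$ of Corollary~\ref{cor:efficient}: $C(\T_2)$ is a single interval of length $c$ (or $c+1$ in the boundary case $n=2^k+1$, where necessarily $d=2$), and because $d\mid c$ and $d\mid n-1$ a direct count gives $|C(\T_2)\cap Z|=\frac cd$ (resp. $\frac cd+1$), so $|C(\T_2)\setminus Z|=c-\frac cd$ in both cases. (Alternatively, equality follows for free once the inequality in (b) is established, since $x$ is already [P]-feasible with value $\frac{c}{n-1}$.)

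For the inequality it suffices to bound maximal covered sets, since every covered set is contained in one and $y\ge 0$. Let $C=I_1\cup\dots\cup I_s$ be described as in Proposition~\ref{prop:non-dominated}. The trick is to strip off the boundary vertices, which are free for $y$: as $0,n-1\in Z$, the set $C':=C\setminus\{0,n-1\}$ satisfies $|C'\setminus Z|=|C\setminus Z|$, and all of its pieces are honest, non-wrapping intervals inside $\{1,\dots,n-2\}$. A small case analysis on Proposition~\ref{prop:non-dominated}(i)--(iii) — single touch of $0$, single touch of $n-1$, or a wrapping interval touching both — shows that removing each covered boundary vertex strictly decreases the quantity (number of intervals)$+$(total length); hence, writing $s'$ for the number of intervals of $C'$ and $\Lambda'$ for their total length, one gets $s'+\Lambda'\le c+1$, with equality exactly when $\{0,n-1\}\cap C=\emptyset$ (here one uses that Proposition~\ref{prop:non-dominated}(iii) gives total length $c+1-s$ or $c+2-s$). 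For any honest interval $I$ of length $\ell$, $|I\cap Z|\ge\lfloor\ell/d\rfloor\ge\frac{\ell-(d-1)}{d}$, so $|I\setminus Z|\le\frac{(d-1)(\ell+1)}{d}$; summing over the intervals of $C'$,
$|C\setminus Z|=|C'\setminus Z|\le\tfrac{d-1}{d}(\Lambda'+s')\le\tfrac{(d-1)(c+1)}{d}=\bigl(c-\tfrac cd\bigr)+\tfrac{d-1}{d}.$
Since $|C\setminus Z|$ and $c-\frac cd$ are integers and $0<\frac{d-1}{d}<1$, this forces $|C\setminus Z|\le c-\frac cd$, as desired. (This is precisely the property announced before the theorem, that two disjoint covered intervals carry no more $y$-mass than the single interval obtained by gluing them per Proposition~\ref{prop:non-dominated}, here applied inductively down to one interval of length $\le c$.) Consequently $\max_T y(C(T))=\frac{c-c/d}{w(d-1)}=\frac{c/d}{w}=\frac{c}{n-1}$, proving (b).

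Putting things together, $y$ together with $t=\frac{c}{n-1}$ is feasible for [D] with objective value $\frac{c}{n-1}$, while $x$ is feasible for [P] with the same value (Lemma~\ref{lm:bezout}); weak LP duality makes both optimal, which is exactly the assertion that $(x,y)$ is a Nash equilibrium. I expect the delicate point to be the bookkeeping in the inequality: a naive per-interval count overshoots the target $c-\frac cd$ by exactly $1$, and the two observations that close the gap are (i) the ``$+1$'' bonus in Proposition~\ref{prop:non-dominated}(iii) is entirely absorbed because the extra covered vertices $0,n-1$ lie in $Z$, and (ii) the leftover slack $\frac{d-1}{d}<1$ is killed by integrality of $|C\setminus Z|$; verifying (i) across the single-touch and wrapping sub-cases is the fiddly part.
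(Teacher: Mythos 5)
Your argument is correct, but the heart of it --- the bound $y(C(T))\le \frac{c}{n-1}$ for an arbitrary maximal covered set --- is obtained by a genuinely different route than the paper's. The paper proves a shift lemma (Lemma~\ref{lm:yshift}), which counts multiples of $d$ in intervals modulo $n-1$ and settles the efficient (single-interval) strategies, and a gluing lemma (Lemma~\ref{lm:ymerge}), which says that merging two disjoint covered intervals into one of total length one larger can only increase the captured mass; it then iterates the gluing, via Proposition~\ref{prop:non-dominated}, to reduce any maximal covered set to an efficient one. You instead bound $|C(T)\setminus Z|$ (with $Z$ your set of zero-mass vertices) directly: strip the boundary vertices, which carry no mass, use Proposition~\ref{prop:non-dominated}(iii) to get $s'+\Lambda'\le c+1$, apply the per-interval estimate $|I\setminus Z|\le \frac{(d-1)(\ell+1)}{d}$, and absorb the leftover slack $\frac{d-1}{d}<1$ by integrality of $c-\frac{c}{d}$. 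Your route is shorter and avoids the merging machinery altogether, but it leans on the exact divisibilities $d\mid c$ and $d\mid(n-1)$, which is why it does not transfer to the coprime case; the paper's gluing template is precisely what gets re-used there (with the harder Lemmas~\ref{lm:yExtrema}--\ref{lm:ymergeCoprime}), which explains its choice of presentation. Two side remarks in your write-up are slightly off but not load-bearing: in the wrapping case, removing $n-1$ splits the wrapping interval, so the quantity (number of intervals)$+$(total length) need not strictly decrease at that single removal (the net decrease of one appears only after also removing $0$), and equality $s'+\Lambda'=c+1$ can also occur when exactly one boundary vertex is covered, so it is not ``exactly when'' the boundary is untouched; neither slip affects the inequality $s'+\Lambda'\le c+1$ that your proof actually uses.
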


For a set $S\subseteq V$, we denote $y(S)=\sum_{v\in S}y_v$. Observe that by weak duality the objective value of $y$ cannot be smaller than $\frac{c}{n-1}$. Hence it suffices to show that $y(C(T))\le \frac{c}{n-1}$ for all $T\in \mathcal{T}_k$. Also, observe that as $d$ is a divisor of $n-1$, it holds that $y_{n-1}=0$. Hence, it suffices to analyze the set $C(\T)\setminus\{n-1\}$. Recall that $[u,v]_r$ denotes the interval $\{u,\ldots,v\}$ modulo $r$. Also, notice that $C(\T_v)\setminus \{n-1\} = [v\oplus c]_{n-1}$, for all $V\setminus\{0,1\}$ and $C(\T_0)\setminus \{n-1\}=[0\oplus (c+1)]_{n-1}$. 
The next lemma will be useful to compute $y(C(\T))$ if $\T$ is an efficient strategy. 

\begin{lemma}\label{lm:yshift}
    For any $v\in V$ and integer $\ell\le n-1$ it holds that 
    $$  y([v\oplus\ell]_{n-1}) = \frac{\ell-\lfloor \frac{\ell}{d}\rfloor}{w(d-1)}\quad\text{or}\quad y([v\oplus \ell]_{n-1}) = \frac{\ell-\lceil \frac{\ell}{d}\rceil}{w(d-1)}. $$
\end{lemma}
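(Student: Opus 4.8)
The statement to prove is Lemma~\ref{lm:yshift}: for any $v\in V$ and integer $\ell\le n-1$, the value $y([v\oplus\ell]_{n-1})$ equals either $\frac{\ell-\lfloor \ell/d\rfloor}{w(d-1)}$ or $\frac{\ell-\lceil \ell/d\rceil}{w(d-1)}$. The plan is to count, among the $\ell$ consecutive residues modulo $n-1$ that form the interval $[v\oplus\ell]_{n-1}$, exactly how many are multiples of $d$ (these get weight $0$), the remaining ones each carrying weight $\frac{1}{w(d-1)}$. So everything reduces to bounding the number of multiples of $d$ in a block of $\ell$ consecutive integers, when these integers are taken modulo $n-1$.

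First I would observe that since $d\mid n-1$, reduction modulo $n-1$ respects divisibility by $d$: an integer $w$ is $\equiv 0 \pmod d$ if and only if $w \bmod (n-1) \equiv 0 \pmod d$. Hence the multiples of $d$ inside $[v\oplus\ell]_{n-1}=\{v,v+1,\ldots,v+\ell-1\} \bmod (n-1)$ are in bijection with the multiples of $d$ in the integer window $\{v, v+1, \ldots, v+\ell-1\}$ (here using that $\ell \le n-1$, so the window does not ``wrap past itself'' in a way that double-counts a residue — each residue in $[0,n-2]$ appears at most once). A block of $\ell$ consecutive integers contains either $\lfloor \ell/d\rfloor$ or $\lceil \ell/d\rceil$ multiples of $d$: write $\ell = qd + r$ with $0\le r < d$; the count is $q$ if the ``phase'' of the window is such that no multiple of $d$ falls in the last $r$ positions, and $q+1$ otherwise (and when $r=0$ the count is exactly $q=\lceil\ell/d\rceil=\lfloor\ell/d\rfloor$, so both formulas coincide). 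Therefore the number of zero-weight vertices in the interval is $\lceil \ell/d\rceil$ or $\lfloor \ell/d\rfloor$, and the number of weight-$\frac{1}{w(d-1)}$ vertices is $\ell - \lceil \ell/d\rceil$ or $\ell - \lfloor \ell/d\rfloor$ respectively, which gives exactly the two claimed values after multiplying by $\frac{1}{w(d-1)}$.

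One small subtlety I would address explicitly: the definition of $y$ in Equation~\eqref{eq:yCoprime} assigns $y_v = 0$ precisely when $v = kd$ for some $k\in\mathbb{N}_0$, i.e.\ when $v \in \{0, d, 2d, \ldots, n-1-d\}$ (and also $v=n-1$ since $d\mid n-1$, though $y_{n-1}=0$ is already forced). So ``multiple of $d$'' is the correct characterization of the zero-weight vertices in $V=\{0,\ldots,n-1\}$, and the counting argument above applies verbatim. I would also note $v+\ell-1$ may exceed $n-1$, but since $\ell\le n-1$ the window $\{v,\ldots,v+\ell-1\}$ modulo $n-1$ still hits $\ell$ distinct residues (it could be $n-1$ distinct residues only if $\ell=n-1$, an allowed boundary case where $y$ of the whole set is $\frac{(n-1) - (n-1)/d}{w(d-1)} = \frac{(d-1)w}{w(d-1)}=1$, consistent with $y$ being a distribution — a useful sanity check worth mentioning).

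The main obstacle is essentially bookkeeping rather than any deep idea: one must be careful that modular reduction does not distort the count of multiples of $d$, which is exactly where the hypothesis $d\mid n-1$ is used, and one must handle the edge case $r = \ell \bmod d = 0$ so that the ``or'' in the statement is vacuously satisfied. I do not expect to need complementary slackness or any property of the seeker's strategy here; this is a purely arithmetic lemma about the structure of $y$. Once it is established, it will feed directly into the computation of $y(C(\T))$ for efficient strategies $\T$ (where $C(\T)\setminus\{n-1\}$ is a single interval modulo $n-1$), and then a separate (sub)additivity argument — that gluing two disjoint intervals per Proposition~\ref{prop:non-dominated} does not decrease the covered mass — will finish the proof of Theorem~\ref{thm:not-cp}.
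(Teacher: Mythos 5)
Your proof is correct and takes essentially the same route as the paper's: the paper's one-line argument likewise counts the multiples of $d$ (the zero-weight vertices) in the interval and notes this count is $\lfloor \ell/d\rfloor$ or $\lceil \ell/d\rceil$. Your version merely makes explicit the bookkeeping the paper leaves implicit (that $d\mid n-1$ makes the modular reduction preserve divisibility by $d$, and that $\ell\le n-1$ prevents residue collisions), which is fine.
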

\begin{proof}
    The definition of $y$ implies that for any $v$ and $\ell$ it holds that $y([v\oplus\ell]_{n-1}) = (\ell-m)/(w(d-1)),$ where $m$ is the amount of numbers in $[v\oplus \ell]_{n-1}$ that are multiples of~$d$. It is easy to see that $m= \lfloor \ell/d \rfloor$ or $m= \lceil \ell/d \rceil$. The lemma follows.
\end{proof}

The next technical lemma will be useful to show that  $y(C(\T))$ is maximized when $\T$ is an efficient strategy. In essence, it says that by gluing two intervals into one (and increasing by one the length of the interval, as suggested by Proposition~\ref{prop:non-dominated}), we can only increase the amount of captured probability mass.

\begin{lemma}\label{lm:ymerge}
Consider two disjoint intervals $[u\oplus\ell]_{n-1}$ and $[v\oplus s]_{n-1}$. Then 
$$y([u\oplus \ell]_{n-1})+y([v\oplus s]_{n-1})\le y([u\oplus (\ell+s+1)]_{n-1}).$$
\end{lemma}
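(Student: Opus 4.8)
The plan is to reduce everything to counting multiples of $d$ inside intervals, since by Lemma~\ref{lm:yshift} (and its proof) we have $y([w\oplus m]_{n-1}) = (m - \mu(w,m))/(w(d-1))$, where $\mu(w,m)$ denotes the number of multiples of $d$ in the interval $[w\oplus m]_{n-1}$. After clearing the common positive denominator $w(d-1)$, the claim $y([u\oplus\ell])+y([v\oplus s])\le y([u\oplus(\ell+s+1)])$ becomes
$$
\ell - \mu(u,\ell) + s - \mu(v,s) \;\le\; (\ell+s+1) - \mu(u,\ell+s+1),
$$
i.e. $\mu(u,\ell+s+1) \le \mu(u,\ell)+\mu(v,s)+1$. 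So the whole lemma is the purely combinatorial statement: \emph{the number of multiples of $d$ in an interval of length $\ell+s+1$ starting at $u$ is at most one more than the number of multiples of $d$ in an interval of length $\ell$ starting at $u$ plus the number in an interval of length $s$ starting at $v$.}

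The key step is to bound $\mu(u,\ell+s+1)$ from above and the two right-hand terms from below using the elementary fact recalled in the proof of Lemma~\ref{lm:yshift}: for any starting point $w$ and length $m$, $\lfloor m/d\rfloor \le \mu(w,m) \le \lceil m/d\rceil$. Thus it suffices to prove the arithmetic inequality
$$
\Big\lceil \tfrac{\ell+s+1}{d} \Big\rceil \;\le\; \Big\lfloor \tfrac{\ell}{d} \Big\rfloor + \Big\lfloor \tfrac{s}{d} \Big\rfloor + 1 .
$$
Writing $\ell = q_1 d + r_1$ and $s = q_2 d + r_2$ with $0\le r_1,r_2 < d$, the right-hand side is $q_1+q_2+1$, while $\ell+s+1 = (q_1+q_2)d + (r_1+r_2+1)$, so $\lceil (\ell+s+1)/d\rceil = q_1+q_2 + \lceil (r_1+r_2+1)/d\rceil$. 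Since $r_1+r_2+1 \le (d-1)+(d-1)+1 = 2d-1$, we get $\lceil (r_1+r_2+1)/d\rceil \le 2$. This gives the bound $q_1+q_2+2$, which is one too large. To recover the needed bound I would observe that one cannot simultaneously have $\mu(u,\ell+s+1)$ attain its upper value $\lceil(\ell+s+1)/d\rceil$ and also have $r_1+r_2+1 > d$; more carefully, I would not go through the worst-case $\lceil\cdot\rceil$/$\lfloor\cdot\rfloor$ bounds independently but instead argue directly about the concrete intervals, using that $[u\oplus(\ell+s+1)]_{n-1}$ is the disjoint union of $[u\oplus\ell]_{n-1}$, the single point $u+\ell \bmod (n-1)$, and $[(u+\ell+1)\oplus s]_{n-1}$. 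Hence
$$
\mu(u,\ell+s+1) = \mu(u,\ell) + \mathbf{1}[\,d \mid u+\ell\,] + \mu(u+\ell+1,\, s).
$$
Now I need $\mu(u+\ell+1, s) \le \mu(v,s) + \mathbf{1}[\,d \nmid u+\ell\,]$, i.e. that shifting the length-$s$ interval's starting point changes its multiple-count by at most $1$, and moreover that the $+1$ slack is only used when $u+\ell$ is \emph{not} a multiple of $d$ (so that the two indicator contributions never both fire). The first part is immediate since $\mu(w,s)\in\{\lfloor s/d\rfloor, \lceil s/d\rceil\}$ for every $w$. For the second part: if $d \mid u+\ell$ then the interval $[(u+\ell+1)\oplus s]_{n-1}$ starts just after a multiple of $d$, so its first multiple of $d$ is at offset $d-1$, which forces $\mu(u+\ell+1,s) = \lfloor s/d\rfloor \le \mu(v,s)$; combining, $\mu(u,\ell+s+1)\le \mu(u,\ell) + 1 + \mu(v,s)$, as required. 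If $d\nmid u+\ell$ then the indicator is $0$ and we use $\mu(u+\ell+1,s)\le \lceil s/d\rceil \le \lfloor s/d\rfloor + 1 \le \mu(v,s)+1$, again giving $\mu(u,\ell+s+1) \le \mu(u,\ell) + \mu(v,s) + 1$.

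The main obstacle is exactly this last bookkeeping: the naive triangle-inequality-style bound via $\lceil\cdot\rceil$ and $\lfloor\cdot\rfloor$ loses a factor of $1$, so the proof must exploit the correlation between "is $u+\ell$ a multiple of $d$?" and "how far into $[(u+\ell+1)\oplus s]$ is the first multiple of $d$?". Splitting on whether $d\mid u+\ell$ resolves it cleanly. One should also note the disjointness hypothesis on the two intervals is not actually needed for the stated inequality once it is phrased this way, but it is harmless to keep it (it is the regime in which the lemma is applied, via Proposition~\ref{prop:non-dominated}); I would simply remark that the argument goes through verbatim.
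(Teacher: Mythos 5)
Your proof is correct and follows essentially the same route as the paper: decompose $[u\oplus(\ell+s+1)]_{n-1}$ into $[u\oplus\ell]_{n-1}$ plus a shifted block of length $s+1$, and compare that block with $[v\oplus s]_{n-1}$ using the content of Lemma~\ref{lm:yshift}, which in your phrasing is just the fact that any interval of length $m$ modulo $n-1$ contains $\lfloor m/d\rfloor$ or $\lceil m/d\rceil$ multiples of $d$. Your explicit case split on whether $d\mid u+\ell$ is simply a more careful rendering of the paper's two-step chain of inequalities, so nothing is missing.
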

\begin{proof}
Observe that 
$$y([u\oplus(\ell+s+1)]_{n-1})= y([u\oplus\ell]_{n-1})+y([(u+\ell)\oplus(s+1)]_{n-1}),$$
and hence it suffices to show that $y([(u+\ell)\oplus(s+1)]_{n-1}) \ge y([v\oplus s
]_{n-1})$. Indeed,
$$
  y([v\oplus s]_{n-1}) \le  y([(u+\ell) \oplus s]_{n-1}) + \frac{1}{w(d-1)} \le y([(u+\ell) \oplus (s+1)]_{n-1}),
$$
 where the first inequality follows by Lemma~\ref{lm:yshift}.
\end{proof}

Now, we have enough tools to show Theorem~\ref{thm:not-cp}. 

\begin{proof}[Proof (Theorem~\ref{thm:not-cp})]

    First we check that $y$ is indeed a probability distribution. To do this, note that there are $(n-1)/d = w$ multiples of $d$ in $V\setminus\{n-1\}$. Hence, we have that
$$y(V)= \frac{n-1-w}{w(d-1)}=\frac{d(n-1)-(n-1)}{wd(d-1)} = \frac{(n-1)}{wd}=1,$$
and hence $y$ defines a probability distribution over $V$.

In order to show that $y$ has an objective value of $\frac{c}{n-1}$, by weak duality it suffices to show that $y(C(T))\le \frac{c}{n-1}$ for all $\T\in \mathcal{T}_k.$ First of all, observe that for any $v\in V$ we have that 
\begin{align*}
    y([v\oplus c]_{n-1})\le \frac{c - \lfloor \frac{c}{d}\rfloor}{w(d-1)}
    = \frac{c- \frac{c}{d}}{w(d-1)}
    = \frac{c(d-1)}{(n-1)(d-1)}= \frac{c}{n-1},
\end{align*}
where the second equality follows as $d$ divides $c$ by definition. Hence, if $v\in V\setminus\{0,1\}$ we have that $y(C(\T_v))= y(C(\T_v)\setminus\{n-1\})= y([v\oplus c]_{n-1})\le \frac{c}{n-1}$. Finally, if $v=0$, as $y_0=0$, then $y(T_0)=y([0\oplus (c+1)]_{n-1})= y([1\oplus c]_{n-1}) \le \frac{c}{n-1}$. We conclude that $y(C(T))\le \frac{c}{n-1}$ for any efficient search strategy $\T$. 

Let $\T \in \mathcal{T}_k$ be any tree with $C(\T)$ maximal. We must show that $y(C(T))\le c/(n-1)$. By Proposition~\ref{prop:non-dominated} we know that $$ C(T) = [u_1 \oplus \ell_1] \cup \ldots \cup [u_s \oplus \ell_s],$$
where $0\le u_1< \ldots < u_s\le n-1$ and $u_1\neq 1$. Assume that $s\ge 2$, as otherwise $\T$ is an efficient strategy and we are done. Observe that $C(T)\setminus\{n-1\} = [u_1 \oplus \ell'_1]_{n-1} \cup \ldots \cup [u_s \oplus \ell'_s]_{n-1},$ where $\ell'_t = \ell_t$ for all $t\le s-1$ and $\ell_s'=\ell_s$ if $n-1\not\in [u_s \oplus \ell_s]$ and $\ell'_s=\ell_s-1$ otherwise. Regardless, reinterpreting condition (iv) in Proposition~\ref{prop:non-dominated}, and recalling that intervals are maximal in $C(\T)$, it holds that if $u_1\neq 0$ then $\sum_{t=1}^s \ell'_t=c+1-s$ and $\sum_{t=1}^s \ell'_t=c+2-s$ if $u_1=0$. 

Also by Proposition~\ref{prop:non-dominated}, there exists a search strategy $\T'\in \mathcal{T}$ such that 
\[C(\T')\setminus\{n-1\}=[u_1 \oplus (\ell'_1+\ell'_2-1)]_{n-1} \cup  [u_{3} \oplus \ell'_{3}]_{n-1}\cup \ldots \cup [u_s \oplus \ell'_s]_{n-1}, \]
that is, we merged the first and second interval, gaining one unit in the total length. Observe that regardless of weather $u_1=0$ or not, the new intervals satisfy the conditions of Proposition~\ref{prop:non-dominated}, guaranteeing the existence of $\T'$. Moreover, by Lemma~\ref{lm:ymerge}, we have that 
$$y(C(\T))=y(C(\T)\setminus\{n-1\})\le y(C(\T')\setminus\{n-1\})=y(C(\T')).$$
Iterating this argument we obtain that $y(C(\T))\le y(C(T_{v_1})).$ The theorem follows by weak duality.
\end{proof}

\paragraph*{The coprime case.}
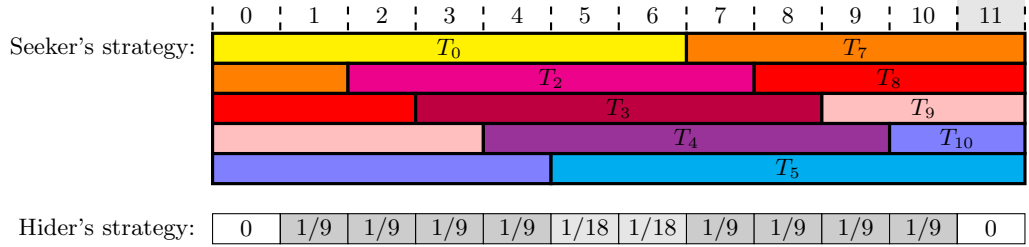
\begin{figure}[htb]

\tikzset{
  q/.style={ultra thick, black},
  nq/.style={dashed, thick, black},
  g/.style={fill=gray!50, draw=black, very thick},
  stop/.style={fill=cyan, draw=black, very thick},
  final/.style={fill=gray!20, draw=black, very thick},
  w/.style={fill=none, draw=none, minimum width=1cm},
  r/.style={fill=gray!20, draw=none, minimum width=1cm, minimum height=1cm}
}

\begin{tikzpicture}[yscale=0.4,xscale=0.9,font=\smaller]      \draw[r] (10, 0.1) rectangle (11,1.1);
    \foreach \i / \s / \n  in {0/nq/w, 1/nq/w, 2/nq/w, 3/nq/w, 4/nq/w, 5/nq/w, 6/nq/w, 7/nq/w, 8/nq/w, 9/nq/w, 10/nq/w, 11/nq/w} 
    {   
        \node[\n] at (\i - 0.5, 0.6) {\i};
	\draw[\s] (\i, 0.1) -- (\i, 1.1); 
    };
    \draw[nq] (-1,0.1) -- (-1, 1.1);
    \draw[nq] (10,0.1) -- (10, 1.1);

    \draw[g,fill=yellow] (-1,-1) rectangle node {$T_0$} (6, 0);
    \draw[g,fill=orange] (6,-1) rectangle  node {$T_7$} (11, 0); \draw[g,fill=orange] (-1,-2) rectangle  (1, -1);
    \draw[g,fill=magenta] (1,-2) rectangle node {$T_2$} (7, -1);
    \draw[g,fill=red] (7,-2) rectangle node {$T_8$} (11, -1); \draw[g,fill=red] (-1,-3) rectangle (2, -2);
    \draw[g,fill=purple] (2,-3) rectangle node {$T_3$} (8, -2); 
    \draw[g,fill=pink] (8,-3) rectangle node {$T_9$} (11, -2); \draw[g,fill=pink] (-1,-4) rectangle (3, -3);
    \draw[g,fill=violet!80] (3,-4) rectangle node {$T_4$} (9, -3); 
    \draw[g,fill=blue!50] (9,-4) rectangle node {$T_{10}$} (11, -3); \draw[g,fill=blue!50] (-1,-5) rectangle (4, -4);
    \draw[stop] (4,-5) rectangle node {$T_{5}$} (11, -4); 

    \draw (-1.1, -6.5) node[left] {\footnotesize Hider's strategy:};
    \draw (-1.1, -0.5) node[left] {\footnotesize Seeker's strategy:};

    \foreach \i / \s /\c in { 1/9/gray!40, 2/9/gray!40, 3/9/gray!40, 4/9/gray!40, 5/18/gray!20, 6/18/gray!20, 7/9/gray!40, 8/9/gray!40, 9/9/gray!40, 10/9/gray!40} {
            \draw[fill=\c] (\i -1, -7) rectangle node{$1/\s$} (\i, -6);
    }
        \draw (-1, -7) rectangle  node{$0$}  (0, -6);
        \draw (10, -7) rectangle  node{$0$}  (11, -6);
    
\end{tikzpicture} 
\caption{Example of the game on a line of length $n=12$ with $k=3$ queries. The searcher chooses one out of 9 strategies. Each strategy is depicted with a distinct color marking the covered cells. The seeker covers uniformly all cells. However the hider chooses a non-uniform distribution where segments have length 1 or 2. The value of the game is $5/9$.}
\label{fig:coprime}
\end{figure}

We now focus on instances of the game where gcd$(c,n-1)=1$. By Lemma~\ref{lm:bezout}, we have that the objective value of the solution $x$ constructed by Algorithm~\ref{alg:greedySeeker} is $\frac{h}{w}$, where $h, w\ge 1$ are the natural numbers such that $h (n-1) = w c + 1$, with $w$ minimal. As in the previous case, we define explicitly a solution $y$ and show it is optimal. The structure of the solution is similar to the previous case but more intricate and its analysis more technical.

Ideally, we would like to assign a mass of $\frac{h}{wc}$ to each coordinate $y_v$. In this way, each efficient strategy $\T\in \mathcal{T}_k$ of length $c$ would capture a probability of $\frac{h}{w}$. However, this simple idea must be refined as there are issues with the boundary, where efficient strategies have length $c+1$. To handle this situation we define a function $g: V \rightarrow \mathbb{Q}$, the {\it ideal mass} function, that we will use as a benchmark for the cumulative distribution of the solution for the hider $y$. It is defined as follows:
\begin{equation*}
g(v) = v \cdot \frac{h}{wc}.    
\end{equation*}
The general idea is that, for any vertex $v \in [1,n-2]$, the cumulative mass of the solution $y$ up to $v$ remains bounded by
\begin{equation}
    g(v) \leq y([1,v])< g(v+1).
    \label{eq:target}
\end{equation}
Such a distribution will give us an optimal solution for [D], as we will show later. To satisfy Equation~\eqref{eq:target}, we partition nodes in $[1,n-2]$ into $w$ segments of two possible lengths: $r:=\lfloor\frac{c}{h}\rfloor$ or $r+1$, where length refers to the amount of nodes contained in a segment. Note that $c>h$, so $r$ is at least 1. In each segment, a total probability mass of $\frac{1}{w}$ is distributed uniformly among its nodes. Nodes $0$ and $n-1$ are assigned 0 probability mass. 

The length and order of the segments is chosen in the following way: start at node $v=1$, and choose the largest $r^* \in \{r, r+1\}$ such that 
\begin{equation}
g(v+ r^* -1) \leq y([1, v-1]) + \frac{1}{w},
\label{eq:SegmentRule}
\end{equation}
then, $r^*$ is the length of the segment starting at $v$. Set $y_i = \frac{1}{r^* w}$ for $i \in \{v, \ldots, v+ r^* - 1\}$. Update $v = v + r^*$ and repeat the same selection rule until $v > n-2$. We call Equation (\ref{eq:SegmentRule}) the \emph{segment rule}. Observe that $\frac{1}{r+1} < \frac{h}{c} < \frac{1}{r} $, so the cumulative mass grows faster than $g$ in short segments and slower than $g$ in long segments. Hence, selecting a long segment decreases the difference $y([1, v+r^*]) - g(v+r^*)$ and a short segment increases it. See Figure~\ref{fig:hider_xmpl} for an example. We start by showing that the construction is correct, that is, that for either $r$ or $r+1$ Equation~\ref{eq:SegmentRule} is satisfied.

\begin{figure}[htb]
    \centering
    \begin{tikzpicture}[yscale=0.6, xscale=0.8, font=\smaller]
\tikzset{
    short/.style={fill=black!60},
    long/.style={fill=black!20},
    b/.style={black!80},
    g/.style={black!55},
    n/.style={fill=white}
}

%eje y
\foreach \y in {0,1,2,3,4,5,6,7,8,9,10,11}
{   
    \node [label=west:$\y$] at (-0.15, \y) {};
    \draw[gray, very thin] (-0.3, \y) -- (14, \y);
}
%\draw (0, -0.15) -- (0, 11.3);
\node[rotate=90] at (-1.2, 5.5) {Multiples of $\frac{1}{w}$};

%\draw (-0.15, 0) -- (14.3, 0);

%%%%% y([1, v]) %%%%%

% g(v)
\draw[red, dashed] plot[mark=x, mark phase=2] coordinates {(0,0) (1,1*11/14) (2,2*11/14) (3,3*11/14) (4,4*11/14) (5,5*11/14) (6,6*11/14) (7,7*11/14) (8,8*11/14) (9,9*11/14) (10,10*11/14) (11,11*11/14) (12,12*11/14) (13,13*11/14) (14,14*11/14)};
\node[red] at (7, 4.5) {$g(v)$};

% g(v+1)
\draw[red, dashed] plot[mark=x, mark phase=2] coordinates {(0,1*11/14) (1,2*11/14) (2,3*11/14) (3,4*11/14) (4,5*11/14) (5,6*11/14) (6,7*11/14) (7,8*11/14) (8,9*11/14) (9,10*11/14) (10,11*11/14) (11,12*11/14) (12,13*11/14) (13,14*11/14)};
\node[red] at (4.5, 5.5) {$g(v+1)$};

\draw[fill=black] (0, 0) circle (2.5pt);
\foreach \x / \y / \l / \c in {1/1/1/b, 2/2/1/b, 3/3/1/b, 4/3.5/0.5/g, 5/4/0.5/g, 6/5/1/b, 7/6/1/b, 8/7/1/b, 9/7.5/0.5/g, 10/8/0.5/g, 11/9/1/b, 12/10/1/b, 13/10.5/0.5/g, 14/11/0.5/g}
{
    \node[minimum height=1cm] (\x) at (\x, \y+0.8) {};
    \node[minimum height=1cm] (-\x) at (\x, \y-\l-0.86) {};
    \path[draw, very thick, \c] (\x) edge (-\x);
    \fill[\c] (\x, \y) circle (2.5pt); 
}

%\draw plot coordinates {(0,0) (1,1) (2,2) (3,3) (5,4) (6,5) (7,6) (8,7) (10,8) (11,9) (12,10) (14,11)};
%\draw[thick] plot[const plot mark mid] coordinates {(0,0) (1,1) (2,2) (3,3) (4, 3.5) (5,4) (6,5) (7,6) (8,7) (9,7.5) (10,8) (11,9) (12,10) (13, 10.5) (14,11)};

%eje x
%\foreach \x in {1,2,3,4,5,6,7,8,9,10,11,12,13,14}
%   \draw (\x cm,1pt) -- (\x cm,-1pt) node[anchor=north] {$\x$};
%\draw (-0.15, 0) -- (14.3, 0);
\foreach \x / \len in {0/n, 1/short, 2/short, 3/short, 4/long, 5/long, 6/short, 7/short, 8/short, 9/long, 10/long, 11/short, 12/short, 13/long, 14/long}
   \node[draw, thin, circle, \len] (\x) [label=below:$\x$] at (\x, -0.6) {};

\path[draw] (4) edge (5);
\path[draw] (9) edge (10);
\path[draw] (13) edge (14);

\end{tikzpicture}
    \caption{Cumulative distribution of the hider's solution for $n=38$ and $k=4$ ($c=14$) in the first 14 vertices. In this instance, $h=11$ and $w=29$, which means short segments (resp. long) have length 1 (resp. 2). In this figure, connected vertices are in the same segment. The probability assigned to each vertex is illustrated by its shade. Notice that $y([1,v])$ always remains within $g(v)$ and $g(v+1)$.}
    \label{fig:hider_xmpl}
\end{figure}
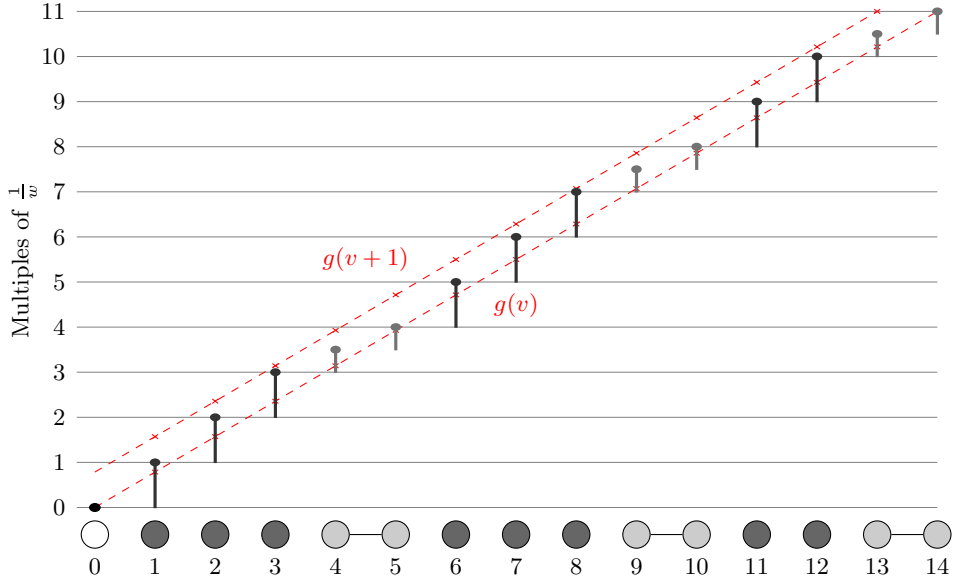

 \begin{lemma} Let $v$ be the first node of a segment. Then $g(v-1)\le y([1,v-1])$ and Equation~\eqref{eq:SegmentRule} holds for $r^*=r.$
 \end{lemma}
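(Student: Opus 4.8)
The statement is an induction on segments: I want to show that when we reach a node $v$ that is the first node of a segment, the cumulative mass so far is at least $g(v-1)$, and from this that a segment of length exactly $r$ always satisfies the segment rule \eqref{eq:SegmentRule}. The natural plan is to prove the invariant $g(v-1)\le y([1,v-1])$ by induction on the segments, simultaneously with the fact that \eqref{eq:SegmentRule} is always satisfiable; the second claim of the lemma (that $r^*=r$ works) should then be an easy consequence of the lower bound $g(v-1)\le y([1,v-1])$ together with the inequality $\frac1r>\frac hc$.

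First I would set up the induction. The base case is $v=1$: then $y([1,0])=0=g(0)$, so the invariant holds, and since $r=\lfloor c/h\rfloor$ we have $\frac{1}{rw}\cdot r = \frac1w$, and $g(r) = r\cdot\frac{h}{wc}\le \frac1w$ because $r\le c/h$; hence \eqref{eq:SegmentRule} holds for $r^*=r$ (indeed with room to spare, which is what later lets us sometimes take $r+1$). For the inductive step, suppose $v$ is the first node of a segment and $g(v-1)\le y([1,v-1])$ holds. The segment rule picks $r^*\in\{r,r+1\}$ as large as possible with $g(v+r^*-1)\le y([1,v-1])+\frac1w$, and then $y([1,v+r^*-1]) = y([1,v-1])+\frac1w$. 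To continue the induction I must show $g(v+r^*-1)\le y([1,v+r^*-1])$, i.e. $g(v'-1)\le y([1,v'-1])$ for $v'=v+r^*$ the next segment start — but that is exactly the inequality enforced by the choice of $r^*$, so the invariant propagates provided some valid $r^*$ exists.

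So the crux is: given $g(v-1)\le y([1,v-1])$, show $r^*=r$ satisfies \eqref{eq:SegmentRule}, i.e. $g(v+r-1)\le y([1,v-1])+\frac1w$. Using $g(v+r-1)=g(v-1)+r\cdot\frac h{wc}$ and $r\cdot\frac h{wc} = \frac{1}{w}\cdot\frac{rh}{c}\le\frac1w$ (since $r=\lfloor c/h\rfloor\le c/h$), we get $g(v+r-1)\le g(v-1)+\frac1w\le y([1,v-1])+\frac1w$, using the inductive hypothesis in the last step. This establishes that $r^*=r$ is always a valid choice, hence the segment rule is well-defined, and in particular proves the lemma's claim for the node $v$ at hand. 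The main (and really only) obstacle is making sure the two halves of the induction are threaded correctly — that the invariant $g(v-1)\le y([1,v-1])$ at a segment start is precisely what both guarantees $r^*=r$ works and is re-established at the next segment start by the maximality in the segment rule; there is no hard estimate, just careful bookkeeping of which inequality feeds which.

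Finally I would remark that the upper bound half of \eqref{eq:target}, namely $y([1,v])<g(v+1)$, is not needed for this particular lemma and will be handled separately; here it suffices to have produced a well-defined construction together with the lower bound $g(v-1)\le y([1,v-1])$ at every segment boundary.
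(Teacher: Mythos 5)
Your proposal is correct and follows essentially the same route as the paper: the same induction over segment starts with base case $g(0)=0=y([1,0])$, the same key estimate $r\cdot\frac{h}{wc}\le\frac{1}{w}$ from $r=\lfloor c/h\rfloor$ showing $r^*=r$ satisfies the segment rule, and the same propagation of the invariant via the fact that the chosen segment adds exactly $\frac{1}{w}$ of mass while satisfying \eqref{eq:SegmentRule}. No substantive difference from the paper's argument.
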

 \begin{proof}
  We show the lemma inductively. For the base case, clearly $g(0)=0\le y([1,0])=0$. Let $v\ge 1$ be the first node of a segment and assume that $g(v-1)\le y([1,v-1])$. Therefore, $g(v+r-1)=g(v-1)+r\frac{h}{wc}\le y([1,v-1]) + r\frac{h}{wc} \le y([1,v-1]) + 1/w$. By construction, regardless of the value of $r^*\in \{r,r+1\}$, Equation~\eqref{eq:SegmentRule} holds. Therefore, as $v+r^*-1$ is the last node of the segment of $v$, it holds that $g(v+r^*-1)\le y([1,v-1])+1/w = y([1,v+r^*-1])$. The induction follows.
 \end{proof}

 The next proposition gives several useful properties of our construction of $y$. 

 \begin{lemma}    
    The constructed solution $y$ satisfies the following properties.
    \begin{enumerate}[a)]
        \item For all $v \leq  n-2$, it holds that $g(v) \leq y([1,v]) < g(v+1).$ \label{it:a}
         \item Node $v\in [1,n-2]$ belongs to the $s$-th segment where 
          $s=\ceil{\frac{vh}{c}}.$ \label{it:b}
        \item $y([1, v]) = g(v)$ if and only if $v$ is a multiple of $c$. If $v$ is a multiple of $c$ then it is the last node of a segment.\label{it:c}
        \item $y_v=y_{v+c}$ for all $v\in [1,n-2-c]$. \label{it:d}
        \item The construction of $y$ finishes with $w$ segments. The last node of the last segment is $n-2$. \label{it:e}
    \end{enumerate}
    \label{lm:yProperties}
\end{lemma}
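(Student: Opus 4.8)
The plan is to first determine the last node of every segment in closed form, and then read off all five items from it. Let $e_0:=0$ and let $e_j$ denote the last node of the $j$-th segment produced by the construction. I would prove by induction on $j$ that $e_j=\lfloor jc/h\rfloor$. Assuming $y([1,e_{j-1}])=(j-1)/w$ and $e_{j-1}=\lfloor (j-1)c/h\rfloor\le n-3$ (so that a $j$-th segment is actually created; the preceding lemma guarantees the construction never stalls, as $r^\ast=r$ always satisfies the segment rule), the rule picks the largest $r^\ast\in\{r,r+1\}$ with $g(e_{j-1}+r^\ast)\le j/w$. Since $g(m)=mh/(wc)$, this is the largest $r^\ast$ with $e_{j-1}+r^\ast\le jc/h$, i.e. with $r^\ast\le\{(j-1)c/h\}+c/h$, where $\{x\}:=x-\lfloor x\rfloor$. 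Using $r=\lfloor c/h\rfloor$, hence $r\le c/h<r+1$, a short computation shows that for both possible outcomes $r^\ast=r$ and $r^\ast=r+1$ one has $e_{j-1}+r^\ast=\lfloor(j-1)c/h\rfloor+\lfloor\{(j-1)c/h\}+c/h\rfloor=\lfloor jc/h\rfloor$. Plugging $j=w$ and $h(n-1)=wc+1$ gives $e_w=\lfloor (n-1)-1/h\rfloor=n-2$, while $(w-1)c/h=(n-1)-1/h-c/h<n-2$ (since $h\le c$), so $e_{w-1}\le n-3$. Therefore the construction makes exactly $w$ segments and the last one ends at $n-2$, which is item (e); as a by-product $y(V)=w\cdot\frac1w=1$.

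For item (a), observe that on each block $\{e_{j-1},e_{j-1}+1,\dots,e_j\}$ the maps $v\mapsto y([1,v])$ (of slope $1/(r_jw)$, where $r_j$ is the length of segment $j$) and $v\mapsto g(v)=vh/(wc)$ are both affine, hence so is the excess $E(v):=y([1,v])-g(v)$; consequently $E$ on such a block lies between its two endpoint values. Writing $jc=q_jh+t_j$ with $0\le t_j<h$, one computes $E(e_j)=\frac{j}{w}-\lfloor\frac{jc}{h}\rfloor\frac{h}{wc}=\frac{jc-q_jh}{wc}=\frac{t_j}{wc}\in[0,\frac{h-1}{wc}]$. Since the segments tile $[1,n-2]$ with $e_0=0$, this gives $0\le E(v)\le\frac{h-1}{wc}<\frac{h}{wc}$ for all $v\le n-2$, which is exactly $g(v)\le y([1,v])<g(v+1)$.

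Items (b), (c), (d) are then bookkeeping built on $e_j=\lfloor jc/h\rfloor$. For (b): $v$ lies in segment $j$ iff $\lfloor(j-1)c/h\rfloor<v\le\lfloor jc/h\rfloor$, and since $v$ is an integer this is equivalent to $(j-1)c/h<v\le jc/h$, i.e. $j-1<vh/c\le j$, i.e. $j=\lceil vh/c\rceil$. For (c): affineness of $E$ on each block together with $E\ge 0$ forces any zero of $E$ to occur at a block endpoint $e_j$ (an interior zero of an affine function with nonnegative endpoint values forces both endpoint values, i.e. $t_{j-1}$ and $t_j$, to vanish, which is impossible when $h\ge2$; the degenerate case $h=1$, where every segment has length $c$ and $E\equiv0$, is checked directly), and $E(e_j)=t_j/(wc)=0$ iff $h\mid jc$ iff $h\mid j$ (using $\gcd(c,h)=1$, a consequence of $h(n-1)-wc=1$) iff $e_j=\frac jh c$ is a multiple of $c$; conversely if $v=mc$ then $v=\lfloor(mh)c/h\rfloor=e_{mh}$, so $v$ is the last node of a segment. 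For (d): by (b), $v$ is in segment $j=\lceil vh/c\rceil$ and $v+c$ is in segment $j+h$, and these segments have equal length since $e_{j+h}-e_{j+h-1}=\lfloor jc/h+c\rfloor-\lfloor(j-1)c/h+c\rfloor=\lfloor jc/h\rfloor-\lfloor(j-1)c/h\rfloor=e_j-e_{j-1}$; as every node of a segment of length $r_j$ gets mass $\frac1{r_jw}$, we conclude $y_v=y_{v+c}$.

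The main obstacle is the inductive identification $e_j=\lfloor jc/h\rfloor$ in the first step: one must verify separately, for the short-segment and long-segment outcomes of the segment rule, that $e_{j-1}+r^\ast$ equals $\lfloor jc/h\rfloor$, carefully tracking the fractional part of $(j-1)c/h$ and the maximality of $r^\ast$, and one must handle the boundary of the construction (the first segment, termination exactly at $n-2$, and the degenerate case $h=1$). Once this identity is in hand, items (a)--(d) are elementary manipulations with floors and ceilings.
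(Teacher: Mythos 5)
Your route is genuinely different from the paper's and, with one caveat noted below, it checks out. The paper proves part (a) directly by induction along the construction (using that $y([1,\cdot])$ grows faster than $g$ on short segments and slower on long ones), then gets (b) from (a), and (c)--(e) from (b) by floor/ceiling manipulations. You instead pin down the construction completely via the closed form $e_j=\lfloor jc/h\rfloor$ for the last node of the $j$-th segment, proved by induction on $j$ from the segment rule; the two cases $\lfloor \{(j-1)c/h\}+c/h\rfloor\in\{r,r+1\}$ indeed both give $e_{j-1}+r^\ast=\lfloor jc/h\rfloor$, and the endpoint values $E(e_j)=(jc\bmod h)/(wc)$ together with affineness of the excess $E$ on each segment yield (a) with the slightly sharper bound $E\le (h-1)/(wc)$. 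Items (b), (d), (e) then fall out essentially for free, and the termination bookkeeping ($e_w=n-2$, $e_{w-1}\le n-3$) is handled more transparently than in the paper. This is a legitimate and arguably cleaner alternative to the paper's argument.

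The one weak spot is the parenthetical in your proof of (c) claiming the degenerate case $h=1$ "is checked directly": there the check in fact fails. If $h=1$ (equivalently $c\mid n-2$; e.g.\ $k=3$, $n=14$ gives $h=1$, $w=2$), every segment is short of length $c$ and $E\equiv 0$, so $y([1,v])=g(v)$ for \emph{every} $v\in[1,n-2]$ and the "only if" direction of (c) is false (take $v=1$). To be fair, the paper's own proof shares this blind spot: its assertion that $g(v)=y([1,v])$ forces $v$ to be the last node of a long segment relies on the slopes of $y([1,\cdot])$ on short and long segments differing strictly from the slope of $g$, which needs $h\ge 2$. So the defect lies in the statement of (c) for $h=1$ rather than in your argument, and it is harmless downstream, since the later lemmas only invoke the "if" direction and the fact that multiples of $c$ end segments --- both of which your closed form delivers for all $h$. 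Still, you should flag $h=1$ as an explicit exception (or restrict (c) to $h\ge2$) rather than defer it to an unstated direct check.
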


\begin{proof}
     To show a), observe that the inequality $g(v-1) \leq y([0, v-1])$ holds trivially for $v=1$. We proceed by induction. Let $v \geq 1$ be the first node of a segment of length $r^*$. By the previous lemma we know that $g(v-1)\le y([1,v-1])$ and that $g(v+r^*-1)\le y([1,v+r^*-1])$. Hence,  $g(v+r') \leq y([1, v+r'])$ also holds for $v+r'$ with $0 \le r' < r^*-1$: it is true for $r'=-1$ and for $r'=r^*$ and $g(v+r')$ and $y([1, v+r'])$ both grow linearly in said interval. 
    
    Now we focus on the right hand side inequality. Again, the case $v=0$ trivially holds. For the inductive step, assume that for some segment that starts at node $v \geq 1$ we have that $y([1, v-1]) < g(v)$. We have two options: if the segment has length $r^*=r+1$, then it is clear that $y([1, v+r']) < g(v + r')$ for all $0\le r' \leq r+1$: Indeed, in said range $g(\cdot)$ grows at a rate of $\frac{h}{wc}$ while $y([1,\cdot])$ grows at a rate of $\frac{1}{(r+1)w} < \frac{h}{wc}$. 
    
    If instead the length of the segment is $r$, we know that Equation \eqref{eq:SegmentRule} is violated by $r^*=r+1$, i.e. 
    $$
        y([1, v+r-1]) = y([1, v-1]) + \frac{1}{w} < g(v+r).
    $$
    With the same argument as before, the inequality must also be satisfied for $v + r'$, with $0 < r' < r$.

    To show b), let $v \in V$ be the final vertex of the $s$-th segment, for $s\ge 1$. Using a), we get that
    $$
    v\cdot \frac{h}{wc} \leq y([1, v]) = \frac{1}{w}\cdot s < (v+1) \cdot \frac{h}{wc},
    $$
    and hence $v\cdot \frac{h}{c} \leq s < (v+1) \cdot \frac{h}{c} < v\cdot \frac{h}{c} +1 $ since $h< c$ as $h(n-1)-wc=1$ and $w\le n-2$. This implies that $s=\ceil{v\cdot \frac{h}{c}}$ and $s+1=\ceil{(v+1)\cdot \frac{h}{c}}$. For the same reasoning, if $v'$ is the last vertex of the $(s+1)$-th segment, then $s+1=\ceil{v'\cdot \frac{h}{c}}$. This implies that for every vertex $u$ in the $(s+1)$-segment, $s+1$ equals $\ceil{u\cdot \frac{h}{c}}$. Hence, b) holds for the $s$-th segments where $s\ge 2$. Finally, if $v'$ is the last node of the first segment, then $1=\ceil{v'\cdot \frac{h}{c}}$. For all $1\le v\le v'$ it also holds that $1\le \ceil{v\cdot \frac{h}{c}}\le \ceil{v'\cdot \frac{h}{c}}=1$, which implies b) for the case $s=1$.

    For c), we use b) to observe that if $v$ is a multiple of $c$ then $\ceil{v\frac{h}{c}}= v\frac{h}{c}$ and thus $v$ must be the last node of the $(v\frac{h}{c})$-th segment, as $\ceil{(v+1)\frac{h}{c}} = v\frac{h}{c} +1$. Thus $y([1,v])= \frac{1}{w}\cdot(v\frac{h}{c})= g(v).$ 
    
    For the other implication, notice that as $g(u)\le y([1,u])$ for all $u$, if $g(v)=y([1,v])$ then $v$ must be the last node of a long segment. Hence, by b),
    $$g(v)= v \frac{h}{cw} = y([1,v])=\frac{1}{w}\cdot \ceil{v\frac{h}{c}},$$
    which implies that $\frac{vh}{c}$ must be integral. As by Bezout's identity  $h$ and $c$ are coprime (since a common divisor of $h$ and $c$ would also be a divisor of 1), this implies that $v$ is a multiple of $c$. 

    To show d), consider first the case that $v \in \{1, \ldots, (n-2)-c\}$. Observe that node $v+c$ is in segment number
    $$
    \ceil{\frac{(v+c)h}{c}} = \ceil{\frac{vh}{c}} + h.
    $$
    As this holds for every $v$, we conclude that $v$ is in a segment of length $r^*$ if and only if $v+c$ is in a segment of length $r^*$. Hence, $y_v=y_{v+c}$.
    
    Consider now e). By b), node $n-2$ is in segment
    $$
    \ceil{\frac{(n-2)h}{c}} = \ceil{\frac{wc + (1-h)}{c}} = \ceil{w-\frac{(h-1)}{c}} = w,
    $$
    where the last equality follows as $(h-1)/c < 1$. Additionally,
    $$
    \ceil{\frac{(n-1)h}{c}} =  \ceil{w+\frac{1}{c}} = w + 1,
    $$
    so $n-2$ is the end of the $w$-th segment.
\end{proof}

In what follows, we introduce two fundamental results that are crucial for our construction. The two of them combined will make it possible to use the gluing argument utilized in the not coprime case.

\begin{lemma}
    The following equalities hold,
    \begin{enumerate}[i)]
        \item $y([v \oplus \ell]) \leq y([1\oplus \ell])$ for all $\ell,v\in [1,c]$, and
        \item $y([v \oplus \ell]) \ge y([t\oplus  \ell])$, for all $\ell,v\in [1,c]$ where $t := (n-1) \mod c$.
    \end{enumerate}
    Moreover, the segment starting at node 1 and the one ending at node $t-1$ have length $r$, and the segment starting at $t$ and the one ending at $c$ have length $r+1$.
    \label{lm:yExtrema}
\end{lemma}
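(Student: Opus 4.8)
My plan is to pass to the cumulative distribution $Y(v):=y([1,v])$ and the discrepancy $\epsilon(v):=Y(v)-g(v)$. By Lemma~\ref{lm:yProperties} we have $0\le\epsilon(v)<h/(wc)$ for $v\in[0,n-2]$, with $\epsilon(v)=0$ precisely when $c\mid v$; moreover $\epsilon(v)-\epsilon(v-1)=y_v-h/(wc)$ is the constant $+\tfrac{c\bmod h}{rwc}>0$ throughout every short (length-$r$) segment and the constant $-\tfrac{(r+1)h-c}{(r+1)wc}<0$ throughout every long (length-$r{+}1$) one, so $\epsilon$ is piecewise linear with only two slopes, rising on short segments and falling on long ones; in particular every one-step change of $\epsilon$ has magnitude strictly below $h/(wc)$. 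For an interval $[v\oplus\ell]\subseteq[1,n-2]$ the $g$-terms telescope, giving $y([v\oplus\ell])=g(\ell)+\epsilon(v+\ell-1)-\epsilon(v-1)$; hence $y([1\oplus\ell])=g(\ell)+\epsilon(\ell)$, and, by periodicity, $y([v\oplus c])=h/w$ for every such $v$. Thus part~(i) is exactly the subadditivity $\epsilon(v+\ell-1)\le\epsilon(v-1)+\epsilon(\ell)$, and part~(ii), after writing $y([v\oplus\ell])=h/w-y([(v+\ell)\oplus(c-\ell)])$, amounts to showing that $[(t+\ell)\oplus(c-\ell)]$ attains the maximum from~(i), i.e.\ $y([(t+\ell)\oplus(c-\ell)])=Y(c-\ell)$.

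I would settle the ``moreover'' part first, since it also yields the value $\epsilon(t-1)$ that drives the rest. As $\gcd(c,n-1)=1$ and $c\ge2$ we have $c\nmid n-1$, so $n-1=qc+t$ with $0<t<c$; from $h(n-1)=wc+1$ we get $ht\equiv1\pmod c$, say $ht=mc+1$. By part~(b) of Lemma~\ref{lm:yProperties}, $\lceil(t-1)h/c\rceil=m$ and $\lceil th/c\rceil=m+1$ (using $1\le h<c$), so $t-1$ is the last node of segment $m$ and $t$ the first node of segment $m+1$; hence $Y(t-1)=m/w$ and $\epsilon(t-1)=\bigl(mc-(t-1)h\bigr)/(wc)=(h-1)/(wc)$ — the largest value $\epsilon$ can take. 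Each of the four length claims then follows by assuming the wrong length and contradicting $0\le\epsilon<h/(wc)$: a length-$r{+}1$ segment starting at $1$ would force $\epsilon(r{+}1)=\bigl(c-(r{+}1)h\bigr)/(wc)<0$; a length-$r$ segment ending at $c$ would (using $Y(c)=g(c)$) force $\epsilon(c-r)=(rh-c)/(wc)<0$; a length-$r{+}1$ segment ending at $t-1$ would force $\epsilon(t-r-2)=\epsilon(t-1)+\bigl((r{+}1)h-c\bigr)/(wc)\ge h/(wc)$; and a length-$r$ segment starting at $t$ would force $\epsilon(t+r-1)=\epsilon(t-1)+(c-rh)/(wc)\ge h/(wc)$ — the last two using $(r{+}1)h-c=h-(c\bmod h)\ge1$ and $c\bmod h\ge1$, valid since $\gcd(h,c)=1$ and $h\ge2$. (The degenerate case $h=1$, which forces $c\mid n-2$ and all segments of length $r=c$, is handled separately and trivially.)

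For part~(i) I would prove subadditivity of $\epsilon$ first at the segment-end nodes $v_s:=\lfloor sc/h\rfloor$, where $\epsilon(v_s)=(sc\bmod h)/(wc)$. For any $i,j$ one has $v_i+v_j=v_{i+j}-\delta$ with $\delta\in\{0,1\}$, and $\delta=1$ iff $(ic\bmod h)+(jc\bmod h)\ge h$; if $\delta=0$ then $\epsilon(v_i)+\epsilon(v_j)=\epsilon(v_{i+j})=\epsilon(v_i+v_j)$, while if $\delta=1$ then $\epsilon(v_i)+\epsilon(v_j)=\epsilon(v_{i+j})+h/(wc)>\epsilon(v_{i+j}-1)=\epsilon(v_i+v_j)$, using that a single step of $\epsilon$ moves it by less than $h/(wc)$ — a ``no bad carry'' bookkeeping. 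Since $\epsilon$ is piecewise linear with only those two slopes between consecutive $v_s$, the general inequality $\epsilon(a+\ell)\le\epsilon(a)+\epsilon(\ell)$ then follows by interpolating the breakpoint case against the worst placements of $a$ and $\ell$ inside their segments. For part~(ii) I would establish the palindromic identity $\epsilon(u)+\epsilon\bigl((t-1)+c-u\bigr)=\epsilon(t-1)$ for all $u$ — the reflection symmetry inherent in the greedy segment construction, which again holds at the breakpoints by the same carry bookkeeping and extends by linearity — and apply it with $u=t+\ell-1$ to get $\epsilon(t-1)-\epsilon(t+\ell-1)=\epsilon(c-\ell)$. Then $y([(t+\ell)\oplus(c-\ell)])=g(c-\ell)+\epsilon(t+c-1)-\epsilon(t+\ell-1)=g(c-\ell)+\epsilon(t-1)-\epsilon(t+\ell-1)=Y(c-\ell)$, which by~(i) is the maximum, so $y([t\oplus\ell])=h/w-Y(c-\ell)\le y([v\oplus\ell])$ for all admissible $v$.

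The only intervals with $v,\ell\in[1,c]$ that are not automatically inside the support $[1,n-2]$ are those touching a zero-mass node ($0$ or $n-1$); such an interval decomposes into at most two genuine sub-intervals of $[1,n-2]$ of total length at most $\ell-1$, each controlled by the already-proven cases for a strictly smaller length, and the two bounds recombine to the asserted inequality. The step I expect to be the main obstacle is the subadditivity of $\epsilon$ in part~(i): $\epsilon$ is in general not concave on a period (short and long segments can interleave, producing several local maxima), so the argument cannot rely on a convexity bound and must instead exploit the rigid two-slope, balanced structure produced by the segment rule, with the carry identities above as the engine; carrying the breakpoint analysis over to arbitrary positions by interpolation is where the bulk of the careful bookkeeping lies.
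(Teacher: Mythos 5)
Your discrepancy framework is sound as far as it goes: with $\epsilon(v):=y([1,v])-g(v)$ one indeed has $0\le\epsilon(v)<h/(wc)$ by Lemma~\ref{lm:yProperties}, the two-slope picture is correct, and your treatment of the ``moreover'' clause (computing $\epsilon(t-1)=(h-1)/(wc)$ and ruling out the wrong segment lengths by the bounds on $\epsilon$) is complete and arguably cleaner than the paper's. The genuine gap is the step you yourself flag as the main obstacle and then do not supply: passing from subadditivity of $\epsilon$ at the breakpoints $v_s=\lfloor sc/h\rfloor$ to subadditivity at arbitrary positions. This is not an ``interpolation'' corollary of the breakpoint case --- the inequality $\epsilon(a+\ell)\le\epsilon(a)+\epsilon(\ell)$ at general positions is \emph{literally equivalent} to part (i) (since $\epsilon(a+\ell)-\epsilon(a)\le\epsilon(\ell)$ is $y([(a+1)\oplus\ell])\le y([1\oplus\ell])$), the two sides cannot be optimized independently over ``worst placements'' because the location of $a+\ell$ relative to the breakpoints is governed by the same carries, and $\epsilon$ is not concave. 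This is exactly where the paper's proof does its work: a minimal-counterexample induction on $\ell$, counting heavy (short-segment) versus light (long-segment) vertices to force $\ell$ and $v+\ell-1$ to start segments of specific types, and contradicting the segment rule \eqref{eq:SegmentRule}. Likewise, the palindromic identity $\epsilon(u)+\epsilon(t-1+c-u)=\epsilon(t-1)$, which carries all of part (ii) in your plan, is asserted rather than proved; it is true (it amounts to the reflection symmetry $y_v=y_{n-1-v}$, i.e.\ $\lfloor (w-s)c/h\rfloor=n-2-\lfloor sc/h\rfloor$, a short computation from $h(n-1)=wc+1$), but that argument must be written out, since the paper's proof of (ii) proceeds quite differently (via Equation~\eqref{eq: thin window} and a second induction).

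The closing paragraph on intervals touching $0$ or $n-1$ also cannot stand as written. For part (i) the decomposition does recombine easily inside your framework (the two pieces carry at most $g(\ell-2)+2(h-1)/(wc)<g(\ell)\le y([1\oplus\ell])$), but you should say this rather than gesture at it; for part (ii) the wrap-around inequality is simply false, so no recombination can prove it. Concretely, take $k=4$, $n=18$, so $c=14$, $\gcd(c,n-1)=1$, $h=5$, $w=6$, $r=2$, $t=3$; the construction puts mass $1/12$ on $\{1,2,15,16\}$ and $1/18$ on $\{3,\ldots,14\}$, and then $y([14\oplus 14])=7/9<5/6=y([3\oplus 14])$. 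The statement is only needed (and only used, in Lemmas~\ref{lm:yCtmax} and~\ref{lm:ymergeCoprime}) for intervals that do not wrap around through the zero-mass endpoints, so your argument must be restricted accordingly instead of claiming the asserted inequality for all $v,\ell\in[1,c]$. In summary: the skeleton is an interesting alternative to the paper's heavy/light counting, and the ``moreover'' part is done, but the core of parts (i) and (ii) --- the breakpoint-to-general extension and the palindrome identity --- is missing, and the boundary discussion contains a false claim.
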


\begin{proof}
        We know that vertices are assigned one of two possible probabilities, depending on the length of the segment in which they are located. In this proof, we call vertices in short segments \emph{heavy} and vertices in long segments \emph{light}. 
    
    Let us start with i). We proceed by induction. For an interval of length $1 \leq \ell \leq r$, the maximum mass attainable is that of $\ell$ heavy vertices. As $g(1+r) > 1/w$, we see that only a short segment can satisfy Equation \eqref{eq:SegmentRule} at $v=1$, so the first segment has length $r$. Then, $y([1\oplus \ell]) \leq y([v\oplus \ell])$ for all $\ell \leq r$ and for all~$v\in [1,c]$.  
    
    Now, assume by contradiction that $\ell > r$ is the smallest length for which there exists a vertex $v \in \{2, \ldots, c-1\}$ such that $y([v\oplus \ell]) > y([1\oplus \ell])$. Without loss of generality, vertex $v$ is the first vertex of a short segment. If this were not the case, simply slide interval $[v \oplus \ell]$ to the closest vertex $v'$ that marks the beginning of a short segment. Then, we have $y[v' \oplus \ell] \geq y[v \oplus \ell]$, since we have at worse traded heavy vertices for other heavy vertices. Continuing, observe that $y([v\oplus \ell]) > y([1\oplus \ell])$ implies that there is at least one more heavy vertex in interval $[v \oplus \ell]$ than in interval $[1 \oplus \ell]$, because both intervals have the same length. By the inductive hypothesis, we also know that
    $$y([1\oplus (\ell-1)]) \ge y([v\oplus (\ell-1)]).$$  
    Hence, by the same token, the number of heavy vertices in $[v \oplus (\ell-1)]$ is at most the same as in $[1 \oplus (\ell - 1)]$. For these two facts to be true at the same time, both intervals must contain the same number of light and heavy vertices, that is,
    \begin{equation}
    y[1 \oplus (\ell-1)] = y[v \oplus (\ell-1)]. 
    \label{eq:lemma3.6}
    \end{equation}
    From this statement we can deduce that $\ell$ is a light vertex and $v + \ell - 1$ is a heavy vertex. Next, we will show that both vertices $\ell$ and $v + \ell - 1$ are, in fact, the first of their corresponding segment. From what we have established so far, we know that vertex $\ell$ is in a long segment and vertex $v + \ell - 1$ is in a short segment. From the inductive hypothesis, we also know that the segment of vertex $\ell-1$ is not shorter than the segment of vertex $v + \ell - 1$. If it were shorter, then the proposition would not hold for length $\ell-2$ either, which contradicts our initial assumption. This leaves three possibilities, and we will check all of them. 
    
    The first option is that vertices $\ell-1$ and $v + \ell - 2$ are both in a short segment. Since vertex $\ell$ is in a long segment, $\ell - 1$ has to be the last vertex of its own segment. Moreover, vertex 1 is the first vertex of a short segment, so interval $[1 \oplus (\ell - 1)]$ contains only complete short segments. Since the length of short segments is $r$, the number of heavy vertices in $[1 \oplus (\ell - 1)]$ has to be a multiple of $r$. We already established that there are the same number of heavy vertices in interval $[1 \oplus (\ell-1)]$ as in interval $[v \oplus (\ell-1)]$, and by assumption $v$ is the first vertex of a short segment. Thus, there are also only complete short segments in $[1 \oplus (\ell-1)]$. We conclude that $v + \ell - 1$ must equally be the last vertex of its segment. 
    
    The second option is that vertices $\ell-1$ and $v + \ell - 2$ are both in a long segment. This case has the same analysis as the previous case, but instead the focus is on the number of light vertices and proving that $\ell-1$ is the last vertex of a long segment. 
    
    The last case is that vertex $\ell-1$ is in a long segment and vertex $v + \ell - 2$ is in a short segment. All short segments in interval $[1 \oplus (\ell - 1)]$ are complete, so as we have done before, we obtain that $[v \oplus (\ell-1)]$ contains a multiple of $r$ heavy vertices, so $v + \ell - 1$ has to be the last vertex of a short segment. The same reasoning applied to the number of light vertices in interval $[v \oplus (\ell-1)]$ shows that $\ell-1$ is the last vertex of a long segment. In any case, we have shown that $v + \ell -1$ and $\ell$ are the first vertices of a short and a long segment, respectively. However, we can also deduce that $v+\ell-1$ is the first vertex of a long segment. In fact,
    \begin{align*}\label{eq:ine1toell}
    g(v+\ell -1 + r) &= g(\ell+r) + g(v-1) \le y([1\oplus (\ell-1)]) + \frac{1}{w} + y([1,v-1]) \\ &= y([v\oplus (\ell-1)]) + \frac{1}{w} + y([1,v-1]) = y([1,v+\ell-2]) + \frac{1}{w},
    \end{align*}
    where the first inequality follows as $\ell$ is the first node of a long segment and Lemma~\ref{lm:yProperties} part \ref{it:a}). The second equality follows by equation \eqref{eq:lemma3.6}. Then, the last equality and the segment selection rule (\ref{eq:SegmentRule}) implies that $v+\ell -1$ is the first vertex of a long segment, which produces a contradiction.

    For ii), the logic is equivalent, but the base case is more technical and the inequalities are switched to the opposite direction. First, we show that vertex $t$ is the first vertex of a long segment. Let $t:=(n-1) \mod c$, or equivalently, 
    $$t  = (n-1) - c \floor{\frac{n-1}{c}}.$$
    With this equality, we check that $t$ is the first vertex of a segment using Lemma~\ref{lm:yProperties} item \ref{it:b}),
    \begin{align*}
        \ceil{ \frac{(t-1)h}{c}} &= \ceil{\left((n-2) - c \floor{\frac{n-1}{c}} \right)\frac{h}{c}} \\
        &= \ceil{\frac{h(n-1)}{c} - \frac{h}{c}} - h \floor{\frac{n-1}{c}} \\
        &= \ceil{\frac{wc + 1}{c} - \frac{h}{c}} - h \floor{\frac{n-1}{c}} \tag{Bezout}\\
        &= \ceil{w - \frac{h-1}{c}} - h \floor{\frac{n-1}{c}} \\
        &= w - h  \floor{\frac{n-1}{c}} \tag{$\frac{h-1}{c} < 1$}.
    \end{align*}
    With an analogous computation we get $\ceil{ \frac{th}{c}} = w+1- h  \floor{\frac{n-1}{c}} > \ceil{ \frac{(t-1)h}{c}}$, and thus $t$ is the beginning of a segment. Moreover, 
    \begin{align}
    y([1,t-1]) &=  \frac{1}{w}\left(w - h  \floor{\frac{n-1}{c}}\right) = \frac{1}{w}\left(w - h  \floor{\frac{n-1}{c}}\right) \notag\\
    & = \frac{h}{cw}\left(\frac{cw}{h} - c\floor{\frac{n-1}{c}}\right) = \frac{h}{cw}\left(\frac{h(n-1)-1}{h} - c\floor{\frac{n-1}{c}}\right) \notag\\
    &= \frac{h}{cw}\left(n-1 - \frac{1}{h} - c\floor{\frac{n-1}{c}}\right) = g(t) - \frac{1}{cw} \label{eq: thin window}.
    \end{align}
    Hence, the segment starting at $t$ is long if and only if $g(t) - \frac{1}{cw} + \frac{1}{w} \ge g(t+r)$, which is equivalent to showing that $rh = \floor{\frac{c}{h}} h \le c-1$, which holds as $c$ and $h$ are coprime. 

    As $t$ belongs to a long segment, we conclude that $y([t\oplus \ell]) \le y([v\oplus \ell])$ for all $\ell \leq r+1$ and for all $v\in [1,c]$, which proves the base case. For the general case, we proceed by contradiction and consider a minimal $\ell > r+1$ for which a vertex $v\in[1,c]$ exists such that $y([t\oplus \ell])> y([v\oplus \ell])$. Without loss of generality, we assume that $v$ is the first vertex of a long segment. Using the same logic as in the proof for i), we can deduce $y([t\oplus( \ell-1)]) = y([v\oplus (\ell-1)])$ and that $v+\ell-1$ and $t+\ell-1$ are the first vertices of a long and a short segment, respectively. Thus, 
    \begin{align*}\label{eq:ine1toell}
    g(t + \ell -1 + r) &= g(v +\ell - 1 +r) + g(t-v) \\
    &\le y([1, v + \ell- 2]) + \frac{1}{w} + g(t)-g(v) \\ 
    &= y([1,v-1]) + y([v\oplus (\ell-1)]) + \frac{1}{w} + g(t) - g(v) \\ 
    &\leq y([t \oplus (\ell-1)]) + \frac{1}{w} + g(t) - \frac{1}{wc} \\
    &=  y([t \oplus (\ell-1)]) + \frac{1}{w} + y([1, t-1])  = y([1, t + \ell - 2]) + \frac{1}{w},\\
    \end{align*}
    where the first inequality follows as $v + \ell - 1$ is the first vertex of a long segment. For the second inequality, we use $y([1, v-1]) \leq g(v) - \frac{1}{wc}$, which follows from part \ref{it:a}) of Lemma \ref{lm:yProperties}  since the largest denominator in the strict inequality is $wc$. The third equality follows from Equation \ref{eq: thin window}. The last equality implies that $t + \ell - 1$ is a light vertex, which is a contradiction. Point ii) follows.

    It only remains to prove that $c$ is part of a long segment and $t-1$ is part of a short segment. Using the left-hand side of part \ref{it:a} of Lemma \ref{lm:yProperties}, and the fact that $c$ is the last vertex of its segment, we have $g(c-r^*) \leq y([1, c]) - \frac{1}{w}$, where $r^*$ is the length of the segment. By replacing part \ref{it:c} of Lemma \ref{lm:yProperties} in the inequality, we get $g(r^*) \geq \frac{1}{w}$. Consequently, we get $r^*=r+1$, that is, vertex $c$ is in a long segment. Similarly for vertex $t-1$, the right-hand side of part \ref{it:a} of Lemma \ref{lm:yProperties}, and the fact that $t-1$ is the last vertex of its segment yield $g(t - r^*) + \frac{1}{wc} \geq y([1, t-1]) - \frac{1}{w}$. By replacing Equation (\ref{eq: thin window}) in the inequality, we get $g(r^*) \leq \frac{1}{w}$, so vertex $t-1$ lies in a short segment and the lemma follows.
\end{proof}

\begin{lemma}\label{lm:ySpread}
    Let $t=n-1 \mod c$. Then for all $\ell \in [1, c]$, we have that
    $$
        y([1\oplus \ell])\leq y([t\oplus (\ell+1)]).
    $$
\end{lemma}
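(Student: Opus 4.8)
The plan is to turn the statement into an inequality between cumulative masses of $y$, which makes it a comparison of two windows whose lengths differ by one, and then to close it with Lemma~\ref{lm:yExtrema}(ii) together with a short modular computation.

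\emph{Reduction.} Since $\ell\le c$ and $t=(n-1)\bmod c\le n-1-c$, the sets $[1\oplus\ell]$, $[t\oplus(\ell+1)]=[t,t+\ell]$ and $[(\ell+1)\oplus t]=[\ell+1,t+\ell]$ are honest integer intervals inside $\{1,\dots,n-1\}$ — they contain no vertex outside $\{1,\dots,n-2\}$ except possibly the zero-mass vertex $n-1$. Hence $y([t,t+\ell])=y([1,t+\ell])-y([1,t-1])$; splitting $y([1,t+\ell])=y([1\oplus\ell])+y([(\ell+1)\oplus t])$ and substituting $y([1,t-1])=g(t)-\tfrac1{cw}$ from Equation~\eqref{eq: thin window}, the claim $y([1\oplus\ell])\le y([t\oplus(\ell+1)])$ becomes \emph{equivalent} to
\[
y\bigl([(\ell+1)\oplus t]\bigr)\ \ge\ y\bigl([1\oplus(t-1)]\bigr).
\]
If $\ell\le c-1$ then $\ell+1\in[1,c]$, so Lemma~\ref{lm:yExtrema}(ii) applied with window length $t\in[1,c]$ gives $y([(\ell+1)\oplus t])\ge y([t\oplus t])$, and it remains to prove the identity
\[
y\bigl([t\oplus t]\bigr)=y\bigl([1\oplus(t-1)]\bigr)\qquad(\star).
\]
For the last value $\ell=c$ one argues directly: according to whether $c+t\le n-2$ or $c+t=n-1$, the periodicity $y_v=y_{v+c}$ of Lemma~\ref{lm:yProperties} part~\ref{it:d}) gives $y([(c+1)\oplus t])=y([1\oplus t])$ or $y([(c+1)\oplus t])=y([1\oplus(t-1)])$ respectively, and both are at least $y([1\oplus(t-1)])$.

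\emph{Proof of $(\star)$.} Write $n-1=t+cq$ with $q=\lfloor(n-1)/c\rfloor\ge1$; substituting into $h(n-1)=wc+1$ yields $ht-1=c(w-hq)$, i.e. $th\equiv1\pmod c$. Moreover $h$ is coprime to the even number $c=2^k-2$, hence odd. If $h=1$ then $t=1$ and $(\star)$ is trivial (indeed the whole lemma then reads $y([1,\ell])\le y([1,\ell+1])$). If $h\ge3$ then $2th\equiv2$ and $(2t-1)h\equiv c+2-h$ modulo $c$, and a one-line floor/ceiling computation gives $\ceil{2th/c}=\ceil{(2t-1)h/c}+1$. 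By Lemma~\ref{lm:yProperties} part~\ref{it:b}) this means vertex $2t$ opens a new segment (one checks $2t\le n-2$ from $t<c$), so vertex $2t-1$ closes a segment, and therefore $y([1,2t-1])=\tfrac1w\ceil{(2t-1)h/c}=\tfrac{2th-2}{cw}=g(2t)-\tfrac2{cw}$. Combining this with $y([1,t-1])=g(t)-\tfrac1{cw}$ and $g(2t)=2g(t)$,
\[
y\bigl([t\oplus t]\bigr)=y([1,2t-1])-y([1,t-1])=g(t)-\tfrac1{cw}=y\bigl([1\oplus(t-1)]\bigr),
\]
which is $(\star)$.

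\emph{Main obstacle.} The delicate step is $(\star)$: the coarse estimate $y([1,v])\ge g(v)$ carries an error of order $h/(cw)$, whereas $(\star)$ holds with only $1/(cw)$ of slack, so one cannot avoid pinning down exactly which vertices terminate a segment near $t$ and near $2t$ — and the congruence $th\equiv1\pmod c$ ($t$ is the inverse of $h$ modulo $c$) is precisely what makes this possible. A more self-contained alternative, in the spirit of the proof of Lemma~\ref{lm:yExtrema}, is a minimal-counterexample argument directly on $\ell$: take the least $\ell$ violating the inequality, use the case $\ell-1$ to force vertex $\ell$ into a short segment and vertex $\ell+t$ into a long one, identify the surrounding segment boundaries, and derive a contradiction with the segment rule~\eqref{eq:SegmentRule}; but the reduction-plus-$(\star)$ route above seems shorter.
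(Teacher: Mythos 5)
Your proof is correct, but it takes a genuinely different route from the paper. The paper argues by cases on $\ell$: the boundary ranges $\ell\le r$ and $c-r-1\le\ell\le c-1$ are handled via the segment-length facts at vertices $1$, $t-1$, $t$, $c$ from Lemma~\ref{lm:yExtrema}, and the middle range $\ell\in[r+1,c-r-2]$ via a shift-by-$t$ correspondence of segments (showing node $\ell'$ and node $t+\ell'$ always lie in segments of the same type), which yields equality there. You instead complement the window: the claim is equivalent to $y([(\ell+1)\oplus t])\ge y([1,t-1])$, which Lemma~\ref{lm:yExtrema}(ii) (applied with window length $t$) reduces to the single identity $y([t\oplus t])=y([1,t-1])$; you then prove this identity by exploiting $th\equiv 1\pmod c$ ($t$ is the inverse of $h$ modulo $c$) to locate the segment boundaries at $t-1$ and $2t-1$ exactly, via Lemma~\ref{lm:yProperties}\,\ref{it:b}) and Equation~\eqref{eq: thin window}. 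I verified the reduction (all three intervals sit inside $\{1,\dots,n-1\}$ since $t\le n-1-c$), the floor/ceiling step giving $\ceil{2th/c}=\ceil{(2t-1)h/c}+1$, and the $\ell=c$ case via periodicity; all are sound. Your argument is uniform in $\ell\le c-1$ and avoids the paper's range-by-range analysis, at the price of leaning more heavily on the full strength of Lemma~\ref{lm:yExtrema}(ii) and on the modular identity; the paper's version is more self-contained about how the two segment patterns align. One small wording slip: when $h=1$ (equivalently $t=1$), $(\star)$ is not ``trivial'' --- it is false as an equality, since $y([1\oplus 1])=y_1>0=y(\emptyset)$ --- but only the inequality $y([t\oplus t])\ge y([1,t-1])$ is needed, and it (as well as the lemma itself, which then reads $y([1,\ell])\le y([1,\ell+1])$) holds trivially, exactly as your parenthetical observes; so this does not affect correctness, but the sentence should be rephrased.
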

\begin{proof}
    First observe that by periodicity of $y$ the case $\ell=c$ is trivial as $y([1\oplus c])=([t\oplus c])$.
    
    Consider now the case $\ell \leq r$. By Lemma \ref{lm:yExtrema}, node $1$ is the start of a short segment and $t$ is the start of a long segment. This implies that $y([1\oplus \ell]) = \frac{\ell}{w r}$ and that $y([t\oplus (\ell+1)])= \frac{\ell+1}{w (r+1)}$. Hence, for this case it suffices to show that  $\frac{\ell}{r} \le \frac{\ell+1}{r+1}$, which holds as $\ell \le r$.
    
    Similarly, assume that $c-r-1\leq \ell \leq c-1$. In this case $y([1\oplus \ell]) = y([1,c]) - y([\ell+1,c]) = y([1,c]) - \frac{c-\ell}{w(r+1)}$, where the last equality holds as $c$ is the last node of a long segment.

    Observe that that $t-1 \ge r$ as $t-1$ is the last node of a segment by Lemma~\ref{lm:yExtrema}. Hence, by the periodicity of $y$, it holds that $$y([t\oplus (\ell+1)]) = y([1,t+\ell -c]) + y([t,c]) = y([1,c]) - y([t+\ell-c+1,t-1]).$$ As $t-1$ is the last node of a short segment it holds that $y([t\oplus (\ell+1)]) = y([1,c]) - \frac{c-\ell-1}{wr}$. Therefore, for this case it suffices to show that  $\frac{c-\ell-1}{wr}\le \frac{c-\ell}{w(r+1)},$ which holds as $\ell \ge c-r-1$.

    To finish, let $\ell \in [r+1, c-r-2]$. For this case, we will actually prove that $$y([1\oplus \ell])= y([t\oplus (\ell+1)])= y([1,t+\ell])-y([1,t-1]).$$
    
    Consider an integer $1\le \ell'\le c-1$. Let us compute in which segment node $t+\ell'$ belongs to by using Lemma~\ref{lm:yProperties} part \ref{it:b}),
    \begin{align*}
        \ceil{\frac{(t+\ell')h}{c}} &= \ceil{\frac{((n-1) - c \floor{\frac{n-1}{c}}+\ell')h}{c}} = \ceil{\frac{(n-1)h}{c}  +\frac{\ell' h}{c}} -h\floor{\frac{n-1}{c}}\\
        & = \ceil{w +  \frac{ \ell' h+1}{c}} -h\floor{\frac{n-1}{c}} = w+\ceil{\frac{ \ell' h+1}{c}} -h\floor{\frac{n-1}{c}}, 
    \end{align*}
    where the third equality uses that $(n-1)h-cw=1$. Moreover, we observe that as $1\le \ell' \le c-1$ then $\ceil{\frac{\ell'h}{c}} = \ceil{\frac{\ell'h+1}{c}}$. Indeed, if this is not the case, then for $s=\ceil{\frac{\ell'h}{c}}$ it must hold that
    $$\frac{\ell'h}{c}\le s < \frac{\ell'h+1}{c},$$
    and thus $\ell'h \le c\cdot s < \ell'h+1$. As all numbers are integers we get that $\ell'h = c\cdot s$, and thus $s=\frac{\ell'h}{c}$. Since by Bezout's identity $h$ and $c$ are coprime, it must hold that $\ell'$ is a multiple of~$c$, which contradicts the fact that $\ell'<c$. We have shown that for all $\ell'\le c-1$, 
    $$\ceil{\frac{(t+\ell')h}{c}} = \ceil{\frac{ \ell' h}{c}}+  w-h\floor{\frac{n-1}{c}}.$$
    In other words, node $\ell'$ is in the $s$-th segment if and only if $t+\ell'$ is in the $(s+w-h\floor{\frac{n-1}{c}})$-th segment. This implies that $\ell'$ is the last node of a segment if and only if $t+\ell'$ is the last node of a segment. Also recall that $1$ is the first node of a short segment and $t$ is the first node of a long segment. Therefore, $[t\oplus (r+1)]$ is a long segment, and for all $r+1\le \ell' \le c-t+1$, node $t+\ell'$ is in a long segment if and only if $\ell'$ belongs to a long segment. Therefore, we conclude that
    \begin{align*}
        y([t\oplus (\ell+1)]) &= y([t\oplus (r+1)]) + y([t+r+1, t+\ell]) = \frac{1}{w} +  y([t+r+1, t+\ell])\\  &=  \frac{1}{w} +  y([r+1, \ell]) = y([1,\ell]).
    \end{align*}
    The lemma follows.
\end{proof}

With these properties we can bound the probability mass captured by an efficient strategy. The derived properties are analogous to the ones obtained in Lemmas~\ref{lm:yshift} and~\ref{lm:ymerge} for the not coprime case. 

\begin{lemma}\label{lm:yCtmax}
For all $v\in V\setminus\{1\}$ it holds that $y(C(T_v))\le h/w$.
\end{lemma}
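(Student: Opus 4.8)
The plan is to bound $y(C(T_v))\le h/w$ for each efficient strategy $T_v$ ($v\in V\setminus\{1\}$) by a case analysis on how $C(T_v)$ sits relative to the two zero-mass boundary nodes $0$ and $n-1$; since $y_0=y_{n-1}=0$, only $C(T_v)\cap[1,n-2]$ matters. The workhorse is the periodicity of $y$: iterating part~\ref{it:d}) of Lemma~\ref{lm:yProperties} (each step $y_w=y_{w-c}$ being licit because the indices encountered stay in $[c+1,n-2]$ until they drop into $[1,c]$) yields $y_u=y_{u'}$ whenever $u,u'\in[1,n-2]$ with $u\equiv u'\pmod c$. Consequently, for any interval $I\subseteq[1,n-2]$ of length $\ell\le c$, the value $y(I)$ equals $y$ of a length-$\ell$ interval with left endpoint in $[1,c]$, so $y([t\oplus\ell])\le y(I)\le y([1\oplus\ell])=y([1,\ell])$ by Lemma~\ref{lm:yExtrema}; and if $\ell=c$ exactly, then $I$ meets every residue class modulo $c$ once, so $y(I)=\sum_{i=1}^{c}y_i=y([1,c])=g(c)=h/w$ using part~\ref{it:c}) of Lemma~\ref{lm:yProperties}.

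First I would treat the strategies whose covered set does not wrap around the boundary. For $v=0$, $C(T_0)=[0,c]$ and $y(C(T_0))=y([1,c])=h/w$. For $v\in\{2,\dots,n-2-c\}$, Corollary~\ref{cor:efficient} gives $C(T_v)=[v\oplus c]$, a length-$c$ interval inside $[1,n-2]$, so $y(C(T_v))=h/w$ by the remark above. For $v=n-1-c$, $C(T_v)=[n-1-c,n-1]$, so $y(C(T_v))=y([n-1-c,n-2])$, again a length-$c$ interval inside $[1,n-2]$, hence $=h/w$. (Any of these index ranges that is empty is simply skipped.)

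The remaining indices are $v\in\{n-c,\dots,n-1\}$, where $C(T_v)=[v\oplus(c+1)]$ wraps past $n-1$ and $0$. Writing $a:=v+c-n\in\{0,\dots,c-1\}$, one gets $C(T_v)\cap[1,n-2]=[v,n-2]\,\cup\,[1,a]$, a disjoint union of intervals of lengths $c-1-a$ and $a$. By periodicity and Lemma~\ref{lm:yExtrema}, $y([v,n-2])\le y([1\oplus(c-1-a)])=y([1,c-1-a])$, hence $y(C(T_v))\le y([1,c-1-a])+y([1,a])$. Since $[1,c]$ is the disjoint union of $[1,c-1-a]$ and $[c-a,c]$, it remains to prove $y([1,a])\le y([c-a,c])$, which I would obtain by chaining Lemma~\ref{lm:ySpread} ($y([1\oplus a])\le y([t\oplus(a+1)])$) with Lemma~\ref{lm:yExtrema} ($y([t\oplus(a+1)])\le y([(c-a)\oplus(a+1)])=y([c-a,c])$, as $c-a\in[1,c]$). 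The degenerate sub-cases $a=0$ and $a=c-1$, where one of the two intervals is empty, reduce to the trivial inequality $y([1,c-1])\le y([1,c])$.

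The inequality arguments themselves are short once the periodicity reduction is in place; I expect the main obstacle to be the bookkeeping that makes the reduction legitimate: checking that every interval appearing in the argument --- the covered sets of Corollary~\ref{cor:efficient} as well as the shifted and split pieces --- lies inside $[1,n-2]$, does not wrap modulo $n$ unintentionally, and has exactly the claimed length, so that Lemmas~\ref{lm:yExtrema} and~\ref{lm:ySpread} (whose hypotheses restrict lengths and left endpoints to $[1,c]$) genuinely apply. One should also confirm that the four index ranges above exhaust $V\setminus\{1\}$; they do, even for the smallest admissible instances, since $n>2^k$ forces $n-1-c\ge 2$ and the ranges $\{2,\dots,n-2-c\}$, $\{n-1-c\}$, $\{n-c,\dots,n-1\}$ are consecutive.
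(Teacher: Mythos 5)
Your proof is correct and takes essentially the same route as the paper's: reduce to $[1,n-2]$ using $y_0=y_{n-1}=0$, use the periodicity from Lemma~\ref{lm:yProperties}\ref{it:d}) to show non-wrapping covered sets capture exactly $y([1,c])=h/w$, and split wrapping covered sets into two intervals that are compared, via Lemmas~\ref{lm:yExtrema} and~\ref{lm:ySpread}, to a partition of $[1,c]$. The only (immaterial) difference is that you apply the Lemma~\ref{lm:ySpread}/Lemma~\ref{lm:yExtrema}(ii) chain to the piece $[1,a]$ rather than to the piece $[v,n-2]$ as the paper does, with slightly more explicit boundary-case bookkeeping.
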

\begin{proof}
    It suffices to prove that $y([v\oplus c]_{n-1})\le h/w$ for all $v$, as $y_{n-1}=0$ by construction. Consider first the case that $v \in \{1, \ldots, n-c-1\}$. By Lemma~\ref{lm:yProperties}, $y([v\oplus c]_{n-1})=y([1\oplus c]_{n-1})$ and node $c$ is the last node of the $h$-th segment. As all segments accumulate a mass of $1/w$, we obtain that $y([v\oplus c]_{n-1})=h/w$.  

    For the case $v\ge n-c-2$, then we observe that $y(C(T_v))= y([v,n-2]) + y([1,c+v-n])= y([v\oplus (n-v-1)]) + y([1\oplus (c+v-n)])$. Using Lemmas~\ref{lm:yExtrema} and~\ref{lm:ySpread}, we have that $y([v\oplus (n-v-1)])\le y([1\oplus (n-v-1)])\le y([t\oplus (n-v)])\le y([(c+v-n+1)\oplus(n-v)])$, and thus
    \begin{align*}
    y(C(T_v))& \le y([1\oplus (c+v-n)]) + y([(c+v-n+1)\oplus(n-v)]) \\
     & = y([1\oplus c])=h/w.
    \end{align*}    
\end{proof}

\begin{lemma}\label{lm:ymergeCoprime}
Consider two disjoint intervals $[u\oplus\ell]_{n-1}$ and $[v\oplus s]_{n-1}$ for $\ell+s\le c$. Then 
$$y([u\oplus \ell]_{n-1})+y([v\oplus s]_{n-1})\le y([u\oplus (\ell+s+1)]_{n-1}).$$
\end{lemma}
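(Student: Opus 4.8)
The plan is to mimic the structure of the proof of Lemma~\ref{lm:ymerge} from the not-coprime case, reducing the two-interval bound to a one-interval comparison via periodicity of $y$. By periodicity of $y$ (Lemma~\ref{lm:yProperties} part~\ref{it:d}), we have
$$y([u\oplus(\ell+s+1)]_{n-1}) = y([u\oplus\ell]_{n-1}) + y([(u+\ell)\oplus(s+1)]_{n-1}),$$
so it suffices to show $y([(u+\ell)\oplus(s+1)]_{n-1}) \ge y([v\oplus s]_{n-1})$. The natural route is to bound the right side from above by the maximum mass of a length-$s$ interval, and the left side from below by the minimum mass of a length-$(s+1)$ interval, and then invoke Lemma~\ref{lm:ySpread} to connect them.

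Concretely, first I would apply Lemma~\ref{lm:yExtrema} part~i) to get $y([v\oplus s]_{n-1}) \le y([1\oplus s]_{n-1})$; here I must check the hypothesis $s \le c$, which follows from $\ell + s \le c$ and $\ell \ge 1$. Next, Lemma~\ref{lm:ySpread} (applicable since $s\le c$, and with $t = (n-1)\bmod c$) gives $y([1\oplus s]_{n-1}) \le y([t\oplus(s+1)]_{n-1})$. Finally, Lemma~\ref{lm:yExtrema} part~ii) gives $y([t\oplus(s+1)]_{n-1}) \le y([(u+\ell)\oplus(s+1)]_{n-1})$, provided $s+1 \le c$, which again follows from $\ell+s\le c$ and $\ell\ge 1$. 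Chaining these three inequalities yields $y([v\oplus s]_{n-1}) \le y([(u+\ell)\oplus(s+1)]_{n-1})$, which combined with the periodicity identity above completes the proof.

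The main thing to watch is making sure all the length hypotheses of Lemmas~\ref{lm:yExtrema} and~\ref{lm:ySpread} are met — in particular that the starting index in Lemma~\ref{lm:yExtrema} (which is stated for intervals inside $[1,c]$, i.e. indices and lengths in $[1,c]$) can be applied after reducing the starting node modulo $n-1$; here the periodicity of $y$ is exactly what legitimizes replacing $[v\oplus s]_{n-1}$ and $[(u+\ell)\oplus(s+1)]_{n-1}$ by shifted copies lying where the extremal lemmas apply. There is a minor edge case when $s+1 = c$ (or $s = c$, impossible here since $\ell\ge 1$), but Lemma~\ref{lm:ySpread} already handles $\ell = c$ by periodicity, and Lemma~\ref{lm:yExtrema} is stated for the full range $[1,c]$, so no separate treatment is needed. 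I expect the proof to be short — essentially three lines invoking the already-established extremal and spreading lemmas — with the only real work being the bookkeeping of which interval length each lemma is applied to.
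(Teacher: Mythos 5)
Your chain is exactly the paper's: bound $y([v\oplus s]_{n-1})$ above by $y([1\oplus s])$ via Lemma~\ref{lm:yExtrema}~i), pass to $y([t\oplus(s+1)])$ via Lemma~\ref{lm:ySpread}, come back to $y([(u+\ell)\oplus(s+1)]_{n-1})$ via Lemma~\ref{lm:yExtrema}~ii), and finish by splitting $[u\oplus(\ell+s+1)]_{n-1}$ into the two adjacent pieces. (That splitting identity is plain additivity of $y$ over disjoint nodes, not the periodicity of Lemma~\ref{lm:yProperties}~\ref{it:d}), but this mislabel is cosmetic.) Your length bookkeeping ($s\le c-1$, $s+1\le c$) is also the right check and is correct.

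The one substantive issue is your justification for applying the extremal lemmas to intervals taken modulo $n-1$. Lemma~\ref{lm:yExtrema} is proved for intervals with start and length in $[1,c]$, and it extends to any interval contained in $[1,n-2]$ via the period-$c$ identity $y_v=y_{v+c}$ of Lemma~\ref{lm:yProperties}~\ref{it:d}). It does \emph{not} extend, by ``reducing the starting node modulo $n-1$'', to intervals that wrap through node $0$: the distribution $y$ is not periodic modulo $n-1$ (node $0$ carries zero mass, and the segment pattern ending at $n-2$ does not continue into the segment pattern starting at $1$), so a wrapped interval cannot simply be shifted into the range where Lemma~\ref{lm:yExtrema} applies; bounding its mass requires a separate two-piece argument of the kind carried out in Lemma~\ref{lm:yCtmax}. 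The paper sidesteps this with the explicit reduction that if $0\in[u\oplus\ell]_{n-1}\cup[v\oplus s]_{n-1}$ then without loss of generality $0\in[u\oplus\ell]_{n-1}$, which forces $[v\oplus s]\subseteq[1,n-2]$ before Lemma~\ref{lm:yExtrema}~i) is invoked. As written, your step $y([v\oplus s]_{n-1})\le y([1\oplus s])$ is unjustified when $0\in[v\oplus s]_{n-1}$; you need either the paper's reduction or a direct treatment of the wrapped case.
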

\begin{proof} We argue similarly as in the previous proof. Assume without loss of generality that if $0 \in [u\oplus \ell]_{n-1} \cup [v\oplus s]_{n-1}$ then $0 \in [u\oplus \ell] $, and thus $[v \oplus s]\subseteq [1,n-2].$ Then, using Lemmas~\ref{lm:yExtrema} and \ref{lm:ySpread},
\begin{align*}
y([u\oplus \ell]_{n-1})+y([v\oplus s]_{n-1}) & \le y([1\oplus s])+y([u\oplus \ell]_{n-1}) \\
 & \le y([t\oplus (s+1)])+y([u\oplus \ell]_{n-1}) \\
 & \le y([(u+\ell\mod n-1 )\oplus (s+1)])+y([u\oplus \ell]_{n-1}) \\
 &= y([u\oplus (\ell+s+1)]_{n-1}).
\end{align*}
\end{proof}

Now we have all tools needed to conclude our main theorem.

\begin{theorem} 
The value of the game is $\frac{h}{w}$. In particular Algorithm~\ref{alg:greedySeeker} yields an optimal solution for the seeker and \eqref{eq:yCoprime} and \eqref{eq:SegmentRule} yield an optimal solution for the hider for the coprime and not coprime case, respectively.
\end{theorem}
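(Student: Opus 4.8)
The plan is to treat this theorem as the capstone that assembles the pieces already established and to close everything with weak LP duality. First I would invoke Lemma~\ref{lm:bezout}: the seeker's strategy $x$ output by Algorithm~\ref{alg:greedySeeker} is a feasible point of $\Delta_k$ whose objective value equals $h/w$ in both regimes (with $h,w$ as in Equation~\eqref{eq:gcd}). Since $x$ is feasible for [P], weak duality already gives $u^*\ge h/w$. Hence the only remaining task is to exhibit, in each of the two regimes, a feasible solution $y$ for [D] of objective value at most $h/w$; by weak duality this forces $u^*=h/w$, makes both $x$ and $y$ optimal for their LPs, and, via von Neumann's minimax theorem, turns $(x,y)$ into the claimed Nash equilibrium. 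Here I would also recall that in the unit profit case the objective value of a vector $y\in\Delta_V$ in [D] is simply $\max_{T\in\mathcal{T}_k} y(C(T))$, since $p(h_T(v))=1$ exactly when $v\in C(T)$.

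When $\gcd(c,n-1)>1$ there is nothing left to do: Theorem~\ref{thm:not-cp} already states that the vector $y$ of Equation~\eqref{eq:yCoprime} is feasible for [D] with value $c/(n-1)=h/w$. So the remaining work is the coprime case, using the vector $y$ produced by the segment rule~\eqref{eq:SegmentRule}. I would first check $y\in\Delta_V$: by part~\ref{it:e}) of Lemma~\ref{lm:yProperties} the construction produces exactly $w$ segments (the last one ending at $n-2$), each carrying total mass $1/w$, while nodes $0$ and $n-1$ receive zero mass; hence $y(V)=w\cdot\frac{1}{w}=1$. It then remains to show $y(C(T))\le h/w$ for every $T\in\mathcal{T}_k$, which by the reduction above certifies that $y$ has objective value at most $h/w$ in [D].

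For this bound I would restrict to trees $T$ with $C(T)$ maximal and induct on the number $s$ of intervals in the Proposition~\ref{prop:non-dominated} description of $C(T)$. The base case $s=1$ is exactly Lemma~\ref{lm:yCtmax}: after discarding $n-1$ (on which $y$ vanishes), a covered set built from a single interval is the covered set of an efficient strategy $T_v$, so $y(C(T))\le h/w$. For $s\ge 2$ I would mirror the merging step from the proof of Theorem~\ref{thm:not-cp}: pass to $C(T)\setminus\{n-1\}$, which does not change the $y$-mass and yields intervals modulo $n-1$ whose lengths sum to at most $c$; then Proposition~\ref{prop:non-dominated} guarantees that fusing the first two intervals into a single interval (gaining one unit of length, consistently with the drop of $s$ by one) again describes a legal maximal covered set, and since the two fused intervals have total length at most $c$, Lemma~\ref{lm:ymergeCoprime} shows this fusion does not decrease $y(C(T))$. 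Iterating the fusion $s-1$ times collapses $C(T)$ to a single interval whose $y$-mass is at least $y(C(T))$, and the base case bounds the latter by $h/w$.

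I expect the theorem's proof itself to be short, since all of the analytic difficulty has been pushed into Lemmas~\ref{lm:yProperties}--\ref{lm:ymergeCoprime}. The one place where I would slow down is the bookkeeping between ``modulo $n$'' and ``modulo $n-1$'' and the special role of node $0$ when the merged intervals live near the boundary and when $0\in C(T)$ (so that the relevant efficient strategy has length $c+1$ rather than $c$); but this is precisely the maneuver already carried out in the proof of Theorem~\ref{thm:not-cp}, and the intended plan is simply to reuse it. No genuinely new obstacle is expected.
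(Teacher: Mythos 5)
Your proposal is correct and follows essentially the same route as the paper: reduce to the coprime case via Theorem~\ref{thm:not-cp}, verify that $y$ is a distribution using part~\ref{it:e}) of Lemma~\ref{lm:yProperties}, bound $y(C(T))$ by iterated interval merging via Proposition~\ref{prop:non-dominated} and Lemma~\ref{lm:ymergeCoprime} down to the efficient-strategy case handled by Lemma~\ref{lm:yCtmax}, and close with weak duality against the seeker's value $h/w$ from Lemma~\ref{lm:bezout}. The extra care you flag about the modulo-$n$ versus modulo-$(n-1)$ bookkeeping is exactly the maneuver the paper reuses from the proof of Theorem~\ref{thm:not-cp}, so no new content is needed.
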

\begin{proof}
By Theorem~\ref{thm:not-cp} it suffices to argue about the coprime case. First, let us argue that the corresponding $y$ solution is indeed a probability distribution, that is, $y([0,n-1])=1$. Indeed, $y([0,n-1]) = y([1,n-2])$, and by Lemma~\ref{lm:yProperties} part~\ref{it:e}) we have that $y([1,n-2])=w \cdot \frac{1}{w}=1$ as each segment has a total mass of $1/w$ by construction.

With the same argument as in the proof of Theorem~\ref{thm:not-cp}, using Lemma~\ref{lm:ymergeCoprime}, for any $T\in \mathcal{T}_k$ with $C(T)$ maximal with 
$$C(T)=[u_1 \oplus \ell_1] \cup \ldots \cup [u_s \oplus \ell_s],$$
as in Proposition~\ref{prop:non-dominated}, we have that $y(C(T))\le y(C(T_{v_1}))$, where $v_1\neq 1$. By Lemma~\ref{lm:yCtmax}, $y(C(T))\le h/w$ for all efficient strategy $T\in \mathcal{T}_k$. We conclude that the objective value of $y$ is at most $h/w$. As our constructed solution $x$ for the primal also attains this objective, we conclude that both solutions must be optimal by weak duality.
\end{proof}

\section{A Dynamic Programming Algorithm for the General Case }\label{sec:DP}

In this section we consider problems [P] and [D] for the case when $G$ is a general tree and the profit $p$ is an arbitrary non-decreasing function.

A natural approach for computing the optimal profit is using the Ellipsoid method on the dual. For this, we need an algorithm to separate the first set of inequalities of [D], i.e., we need to solve $\max_{\T \in \mathcal{T}_k} \sum_{v \in V}y_{v} \cdot p(h_T(v))$ for a given $y\in \Delta_V$. In other words, we need to solve the \emph{best-response} problem for the seeker. 
This problem turns out to be NP-hard. 

\begin{theorem}\label{thm:bestResponseHardness}
Computing the best-response for the seeker is NP-hard, even if $G$ has constant diameter or constant degree, and if $p(t)=n-t$ for all $t\in \{0,\ldots,n\}$.
\end{theorem}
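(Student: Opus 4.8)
The plan is to prove NP-hardness by a reduction from a classical NP-hard problem; the natural candidate is \textbf{3-Partition} (or \textbf{Exact Cover by 3-Sets}), since the best-response problem has the flavour of packing items of various sizes into $k$ query-levels, and profit $p(t)=n-t$ rewards covering mass as early (i.e., as shallow in the search tree) as possible. First I would fix the profit function $p(t)=n-t$, so the seeker's objective $\sum_v y_v\, p(h_T(v))$ rewards covering high-weight vertices at small depth; intuitively the seeker wants to build a shallow search tree that covers the ``heavy'' part of the distribution, and the budget $k$ (which is part of the input, as the theorem's hypothesis requires) controls how much total ``area'' is available. The key modelling observation is that a search strategy of height $\le k$ on a tree, restricted to where it spends its queries, behaves like choosing prefixes of binary trees, and the number of vertices one can cover at depth exactly $t$ along one root-to-leaf branch is geometrically constrained — this is where one encodes ``item sizes'' as gadget subtrees hanging off a spine.

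Concretely I would build the instance as follows. Take the $3$-Partition instance with integers $a_1,\dots,a_{3m}$ summing to $mB$, with each $a_i\in(B/4,B/2)$, asking whether they can be partitioned into $m$ triples each summing to $B$. Construct a ``central'' tree $G$ consisting of a small core from which $m$ identical ``slots'' branch out, and attach, for each integer $a_i$, a path (or small caterpillar) gadget $P_i$ of length proportional to $a_i$; the distribution $y$ puts carefully chosen mass on the gadget vertices and essentially no mass elsewhere. The budget $k$ is set so that the seeker, after spending queries to separate the $m$ slots, has exactly enough remaining depth in each slot to cover three gadgets whose total size is $B$ and no more — the non-increasing, strictly decreasing profit $p(t)=n-t$ forces the seeker to not waste depth, so an optimal response attains a threshold value $\theta$ if and only if the gadgets can be grouped into $m$ bundles of total size $\le B$ each, i.e., iff the $3$-Partition instance is a YES instance. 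To get the ``constant diameter'' variant one makes the slots hang directly off a single core vertex and uses bounded-height bushy gadgets instead of long paths; to get the ``constant degree'' variant one replaces the high-degree core by a balanced binary routing tree of logarithmic depth (absorbing its depth cost into $k$) and uses path gadgets of bounded degree. One must double-check that $p(t)=n-t$ remains the profit function in both variants — since $n$, the number of vertices, changes with the construction, this is just a matter of defining $p$ on $\{0,\dots,n\}$ after $n$ is fixed, which is allowed.

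The bulk of the proof is the correctness of the reduction, split into two directions. For the ``if'' direction (YES instance $\Rightarrow$ high best-response value): given a valid partition into triples, explicitly build the search tree — query the core/routing structure to isolate each slot, then inside slot $j$ perform the binary-search-like strategy that covers the three gadgets assigned to triple $j$, verifying via the depth bookkeeping that every gadget vertex is covered at depth $\le k$ and computing $\sum_v y_v p(h_T(v))$ to meet the threshold $\theta$. For the ``only if'' direction (high best-response value $\Rightarrow$ YES instance): argue that any search tree achieving value $\ge\theta$ must (a) not waste any query on low/zero-mass vertices, (b) allocate its depth budget so that each slot receives a fixed depth allowance, and (c) within a slot, the covered-set characterization (analogous to Proposition~\ref{prop:non-dominated}, or a direct counting argument on how many heavy gadget vertices fit under a height-$d$ subtree) forces the gadgets covered there to have total size $\le B$; since all sizes sum to $mB$ and there are $m$ slots, every slot carries exactly $B$, yielding the partition.

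The main obstacle I expect is step (c) of the ``only if'' direction: controlling exactly how a height-bounded search strategy can distribute its coverage across a branching gadget, so that ``total size of covered gadgets $\le B$'' is genuinely forced rather than circumvented by the seeker covering partial gadgets or mixing gadgets across slots. This requires the gadgets and the mass distribution $y$ to be designed so that there is no benefit to partial coverage — e.g., by making each gadget's mass concentrated at a single deep vertex reachable only by committing the full sub-budget, or by an ``all-or-nothing'' penalty structure — and a clean accounting lemma quantifying, for the edge-query model on a tree, the maximum weighted profit obtainable from a subtree given a depth budget $d$. Getting that lemma tight (and robust to the constant-diameter and constant-degree modifications) is the technical heart; everything else is careful but routine gadget bookkeeping.
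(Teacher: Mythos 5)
Your proposal goes a genuinely different way from the paper, but as it stands it has a real gap, and it is also much harder than what is needed. The paper's proof is a two-line observation: take $k=n$ (so the budget is not binding) and $p(t)=n-t$; then maximizing $\sum_v y_v\,p(h_T(v)) = n - \sum_v y_v\,h_T(v)$ is exactly the problem of minimizing the expected number of edge queries to locate a target drawn from the known distribution $y$, which Cicalese et al.\ already proved NP-hard even on trees of diameter at most $4$ or degree at most $16$. The ``constant diameter or constant degree'' clause in the statement is inherited directly from that known result, not re-derived.

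The concrete gap in your reduction is the accounting in your step (c), and it is not merely a detail to be filled in: the encoding of item sizes as gadget lengths does not match the way depth budgets behave in the edge-query model. A search strategy of residual height $d$ inside a slot can cover up to roughly $2^d$ vertices (this is exactly the content of Proposition~\ref{prop:non-dominated} on the line: a budget of $k$ covers $2^k-2$ or $2^k-1$ nodes), and isolating a single heavy vertex in a path gadget of length proportional to $a_i$ costs only about $\log_2 a_i$ queries, not $a_i$. So the resource consumed per gadget is logarithmic/exponential in nature, while 3-Partition needs an additive constraint ``sum of sizes $\le B$ per slot''; your threshold argument that a value-$\theta$ strategy forces each slot to carry total size exactly $B$ does not follow from the construction as sketched, and designing gadgets and a mass distribution $y$ that genuinely force all-or-nothing additive behavior is precisely the hard part (it is essentially the difficulty that makes the known hardness proof for expected search cost on trees nontrivial). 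If you want a self-contained proof you would need that accounting lemma worked out and made robust to the bounded-diameter and bounded-degree variants; otherwise the efficient route is the paper's: specialize the parameters so the best-response problem coincides with an already-known NP-hard search problem and cite that result.
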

\begin{proof}
Consider the case where $k=n$ and $p(t)=n-t$ for all $t\in \{0,\ldots,n\}$. Observe that in this case the best-response problem is $\max_{\T \in \mathcal{T}_n} \sum_{v \in V}y_{v} \cdot h_T(v)$, that is, we aim to find $\T$ (of any height), that maximizes 
$$
\sum_{v \in V}y_{v} \cdot (n-h_T(v)) = n-\sum_{v \in V}y_{v} h_T(v).
$$
In other words, as $n$ is a constant, we aim to minimize 
the expected time (i.e., number of queries) needed to find the target $v^*$ if $\mathbb{P}(v^*=v)=y_v$ for all $v\in V$. This problem was shown to be NP-hard, even if $G$ has diameter at most 4 or if it has degree at most 16~\cite{cicalese_complexity_2011}.   
\end{proof}

In the remaining of this section we show that the best-response problem can be solved in time $O(n^22^{2k})$ for an arbitrary tree $G$. With the Ellipsoid algorithm~\cite{grotschel_ellipsoid_1981}, this implies an algorithm with time complexity $\text{poly}(n)2^{O(k)}$, which is fixed-parameter tractable on $k$.
In order to use the Ellipsoid method, we are interested in separating the first set of inequalities of linear program [D]. This poses the following problem: given a tree $G=(V, E)$, a mixed strategy of the hider $y \in \Delta_V$ and a non-increasing profit function $p: \{1, \ldots, k+1\} \to \N_0$ with $p(k+1)=0$, what is the search strategy in $\BSTs_k$ that maximizes the expected profit of the seeker? In this section, we introduce a dynamic program to solve this question, based on a characterization of search strategies using edge labelings.

\subsection{Edge Labelings to describe Search Strategies} \label{sec:edge-labelings}

Edge labelings are functions that map edges of a graph to numbers. In what follows, we specify the conditions an edge labeling must obey in order to properly encode a search strategy in $\BSTs_k$. Such a labeling will be called \emph{valid}. Subsequently, we formulate the objective function of the best-response problem in terms of valid labelings. Finally, we provide a characterization of valid labelings that will be used by the dynamic program defined in the next subsection.

The relationship between edge labelings and (unrestricted) search strategies was established independently in \cite{dereniowski2008edge, onak2006generalization}. We give a modified definition that captures the limited height of the search strategies in $\BSTs_k$.
\begin{definition}[Valid Labeling] \label{def:validlabel} A function $f: E \to \{0, \ldots, k\}$ is a \emph{valid labeling} for tree $G=(V,E)$ if for each pair of distinct edges $e_1, e_2 \in E$ such that $f(e_1) = f(e_2) > 0$, there is an edge $e_3$ on the simple path from $e_1$ to $e_2$ for which $f(e_3) > f(e_1)$. The set of valid labelings with range $\{0, \ldots, k\}$ is denoted by $\mathcal{F}_k$
\end{definition}
We remark that every valid labeling reaches its maximum at a unique edge trivially, except for the degenerate all zero labeling. 
Intuitively, a valid labeling maps an edge $e$ to the remaining budget of a search strategy right before it queries $e$ (see Figure \ref{fig:def-search-tree} for an example). Observe that a search strategy finds a vertex the moment all of its incident edges are queried. This suggests the following notation: for a vertex $v \in V$ and a valid labeling $f \in \mathcal{F}_k$, let $h_f(v)=k + 1 - \min \{f(e) : e \in \delta(v)\}$. Consequently, the \emph{expected profit} of a valid labeling is given by
\begin{equation}\label{eq:labeling objective}
    \sum_{v \in V} y_v \cdot p\left(h_f(v)\right).
\end{equation}
Then, we can show that the best-response problem can be solved by maximizing this equation.
\begin{proposition}\label{prop:labeling=BST}
    The maximum expected profit of a valid labeling is equal to the objective value of $y$, that is,
    $$
    \max_{f\in \mathcal{F}_k} \sum_{v \in V} y_v \cdot p(h_f(v)) = \max_{T \in \BSTs_k} \sum_{v \in V} y_v \cdot p(h_T(v)).
    $$
\end{proposition}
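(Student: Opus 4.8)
The plan is to prove the equality in Proposition~\ref{prop:labeling=BST} by establishing two inequalities, each via an explicit construction that converts one type of object into the other while (weakly) preserving, for every vertex $v$, the quantity $h_{(\cdot)}(v)$ — or at least not decreasing the profit $p(h_{(\cdot)}(v))$. Since $p$ is non-increasing, it suffices to show that for every $T \in \BSTs_k$ there is a valid labeling $f$ with $h_f(v) \le h_T(v)$ for all $v$, and conversely that for every valid labeling $f$ there is a search strategy $T \in \BSTs_k$ with $h_T(v) \le h_f(v)$ for all $v$. In fact I expect the correspondence to be essentially bijective on the relevant structure, so that equality $h_f(v) = h_T(v)$ holds, but the inequality version is all that is needed and is more robust to boundary cases (e.g.\ when a search strategy ``wastes'' queries).

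First I would go from a search strategy $T \in \BSTs_k$ to a labeling. Recall each internal node $\nu$ of $T$ carries an edge label $e(\nu) \in E$; define $f(e(\nu))$ to be $k$ minus the depth of $\nu$ in $T$ (the root having depth $0$), i.e.\ the remaining budget just before querying $e(\nu)$, and set $f(e) = 0$ for any edge never queried along the relevant branch. Because $T$ has height at most $k$, every internal node has depth at most $k-1$, so $f$ takes values in $\{0,\dots,k\}$. To check validity: suppose $f(e_1) = f(e_2) = b > 0$ with $e_1 = e(\nu_1)$, $e_2 = e(\nu_2)$; then $\nu_1$ and $\nu_2$ sit at the same depth $k-b$, hence neither is an ancestor of the other, so the connected component $V(\nu_1)$ and $V(\nu_2)$ are disjoint, which forces $e_1$ and $e_2$ to lie in different components of $G - e(\mu)$ where $\mu$ is the deepest common ancestor of $\nu_1,\nu_2$; the simple path in $G$ between $e_1$ and $e_2$ must then cross $e(\mu)$, and $f(e(\mu)) > b$ since $\mu$ is strictly shallower. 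This gives the required $e_3$. Finally, for each vertex $v$, the leaf $\lambda$ with $V(\lambda)=\{v\}$ is reached exactly when all edges of $\delta(v)$ have been queried; the last such query occurs at depth $h_T(v)-1$, i.e.\ with label $k+1-h_T(v)$, and this is the minimum label over $\delta(v)$, so $h_f(v) = k+1-\min\{f(e):e\in\delta(v)\} = h_T(v)$ (and if $v$ is not covered, $\min$ over $\delta(v)$ can only be smaller, giving $h_f(v)\ge h_T(v)$, which is the direction we want here — actually we want $h_f(v)\le h_T(v)$, so we should be careful and note that covered vertices give equality and uncovered vertices give $p(h_f(v))=p(k+1)=0=p(h_T(v))$ as well when $f$ restricted to $\delta(v)$ is all zero; the only subtlety is to confirm an uncovered vertex genuinely has a zero-labeled incident edge, which follows because its incident edges were never all queried on its branch).

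For the reverse direction, given a valid labeling $f \in \mathcal{F}_k$, I would build a search strategy recursively, mirroring the structure the dynamic program will exploit. If $f \equiv 0$, output the trivial single-node tree. Otherwise, by the remark after Definition~\ref{def:validlabel}, $f$ attains its maximum $b$ at a unique edge $e^*$; make $e^*$ the root query, splitting $G$ into $G_u, G_v$, and restrict $f$ to each side. The key point is that $f|_{E(G_u)}$ and $f|_{E(G_v)}$ are again valid labelings, now with all labels $\le b-1$ on each side strictly below the root's label — so recursively the strategies on each side have height at most $b-1 \le k-1$, and the whole tree has height at most $b \le k$, landing in $\BSTs_k$. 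One checks by induction that the resulting $T$ satisfies $h_T(v) \le h_f(v)$ for all $v$: the recursion queries the incident edges of $v$ in decreasing order of $f$-label, so $v$ becomes isolated no later than the step where the smallest-labeled incident edge is queried, whose depth is $k - \min\{f(e):e\in\delta(v)\}$, i.e.\ at time $h_f(v)$.

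The main obstacle, and the step I would write most carefully, is the validity-preservation claim in both directions — specifically, that restricting a valid labeling to a component of $G - e^*$ stays valid, and that the labeling extracted from $T$ satisfies Definition~\ref{def:validlabel}. Both hinge on the same combinatorial fact: two edges with equal positive label must be ``separated'' by a higher-labeled edge, which is exactly the statement that equal-label edges lie in distinct subtrees hanging off a strictly-higher-labeled ancestor edge. Everything else (range bounds, the height-$\le k$ bookkeeping, the vertex-by-vertex $h$ computation, the $p$ non-increasing monotonicity argument) is routine once this structural correspondence is nailed down. I would also make explicit that the ``$\le$'' in both $h$-comparisons combined with $p$ non-increasing yields ``$\ge$'' for the profit in one direction and ``$\le$'' in the other, hence the two maxima coincide.
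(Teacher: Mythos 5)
Your proposal is correct and follows essentially the same route as the paper: the paper likewise maps $T$ to the labeling $f_T(e)=k-\mathrm{depth}(\nu)$ (verifying validity via the deepest common ancestor) and, for the converse, recursively roots the strategy at the unique maximum-label edge and restricts $f$ to the two components (its Lemma on recovering a strategy from a labeling). The only difference is cosmetic: you state the second comparison as $h_T(v)\le h_f(v)$, which is the direction actually needed with $p$ non-increasing, whereas the paper's auxiliary lemma states the inequality with the opposite sign (an apparent typo), but the construction and argument are the same.
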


We leave the proof of this proposition for the appendix. With this equivalence established, we turn our attention to obtaining a ``local'' description of valid labelings. Such a description will give us a recipe for constructing valid labelings in an algorithmic fashion. Let $G=(V, E)$ be a rooted out-tree and let $f:E \to \{0, \ldots, k\}$ be an arbitrary labeling (not necessarily valid). In this setting, an edge $e$ that points from vertex $u$ to $v$ is denoted $e=(u, v)$. We say that an edge $e'$ is \emph{visible} from edge $e$ if on the directed path from $e$ ending in $e'$, there is no other edge $e''$ such that $f(e'') > f(e')$. In other words, the edges visible from $e$ are those that are not ``screened'' by greater values of $f$. The \emph{visibility sequence} of $e$, denoted $L(e)$, is the enumeration in ascending order of the labels of edges visible from $e$. We remark that the first label of $L(e)$ equals $f(e)$. We extend the definition of visibility sequence to vertices. The visibility sequence of a vertex $u$, $S(u)$, is the union of the visibility sequences of its outgoing edges (see Figure \ref{fig:visibility}). The following result gives us the conditions that we need to impose on visibility sets to obtain a valid labeling.

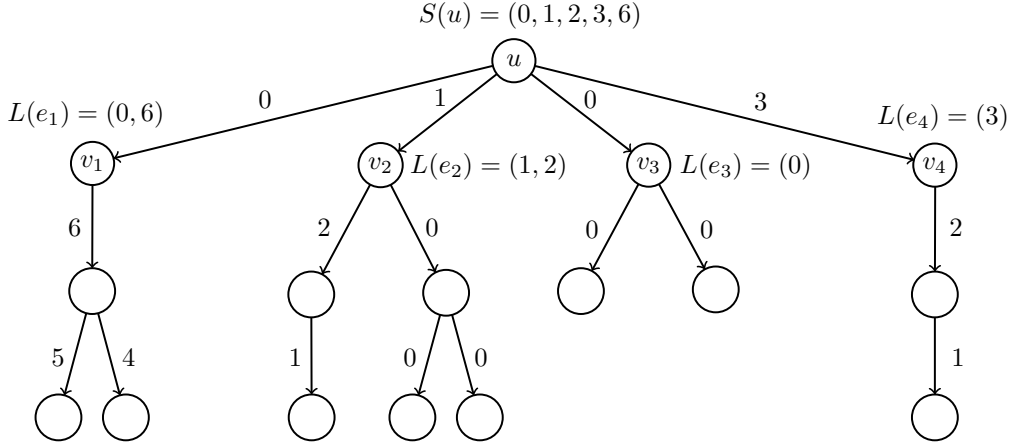
\begin{figure}[htb]
    \centering
    \tikzset{every picture/.style={line width=0.75pt}} %set default line width to 0.75pt        

\begin{tikzpicture}[x=0.75pt,y=0.75pt,yscale=-0.85,xscale=0.85]
%uncomment if require: \path (0,597); %set diagram left start at 0, and has height of 597

% Text Node
\draw    (358.22, 49.2) circle [x radius= 12.81, y radius= 12.81]   ;
\draw (358.22,49.2) node  [font=\small]  {$u$};
% Text Node
\draw    (279.13, 111.2) circle [x radius= 13.12, y radius= 13.12]   ;
\draw (279.13,111.2) node  [font=\small]  {$v_2$};
% Text Node
\draw    (108.22, 110.2) circle [x radius= 12.81, y radius= 12.81]   ;
\draw (108.22,110.2) node  [font=\small]  {$v_1$};
% Text Node
\draw    (438.22, 111.2) circle [x radius= 12.81, y radius= 12.81]   ;
\draw (438.22,111.2) node  [font=\small]  {$v_3$};
% Text Node
\draw    (608, 111.2) circle [x radius= 12.81, y radius= 12.81]   ;
\draw (608,111.2) node  [font=\small]  {$v_4$};
% Text Node
\draw    (108, 186) circle [x radius= 13.6, y radius= 13.6]   ;
\draw (102,178.4) node [anchor=north west][inner sep=0.75pt]    {};
% Text Node
\draw    (88, 261) circle [x radius= 13.6, y radius= 13.6]   ;
\draw (82,253.4) node [anchor=north west][inner sep=0.75pt]    {};
% Text Node
\draw    (128, 261) circle [x radius= 13.6, y radius= 13.6]   ;
\draw (122,253.4) node [anchor=north west][inner sep=0.75pt]    {};
% Text Node
\draw    (608, 188) circle [x radius= 13.6, y radius= 13.6]   ;
\draw (602,180.4) node [anchor=north west][inner sep=0.75pt]    {};
% Text Node
\draw    (608, 261) circle [x radius= 13.6, y radius= 13.6]   ;
\draw (602,253.4) node [anchor=north west][inner sep=0.75pt]    {};
% Text Node
\draw    (238, 188) circle [x radius= 13.6, y radius= 13.6]   ;
\draw (232,180.4) node [anchor=north west][inner sep=0.75pt]    {};
% Text Node
\draw    (318, 187) circle [x radius= 13.6, y radius= 13.6]   ;
\draw (312,179.4) node [anchor=north west][inner sep=0.75pt]    {};
% Text Node
\draw    (398, 186) circle [x radius= 13.6, y radius= 13.6]   ;
\draw (392,178.4) node [anchor=north west][inner sep=0.75pt]    {};
% Text Node
\draw    (478, 185) circle [x radius= 13.6, y radius= 13.6]   ;
\draw (472,177.4) node [anchor=north west][inner sep=0.75pt]    {};
% Text Node
\draw    (238.22, 261) circle [x radius= 13.6, y radius= 13.6]   ;
\draw (238.22,253.4) node [anchor=north] [inner sep=0.75pt]    {};
% Text Node
\draw    (298, 261) circle [x radius= 13.6, y radius= 13.6]   ;
\draw (292,253.4) node [anchor=north west][inner sep=0.75pt]    {};
% Text Node
\draw    (338, 261) circle [x radius= 13.6, y radius= 13.6]   ;
\draw (332,253.4) node [anchor=north west][inner sep=0.75pt]    {};
% Text Node
\draw (205,65.4) node [anchor=north west][inner sep=0.75pt]  [font=\small]  {$0$};
% Text Node
\draw (309,64.4) node [anchor=north west][inner sep=0.75pt]  [font=\small]  {$1$};
% Text Node
\draw (398,65.4) node [anchor=north west][inner sep=0.75pt]  [font=\small]  {$0$};
% Text Node
\draw (499,66.4) node [anchor=north west][inner sep=0.75pt]  [font=\small]  {$3$};
% Text Node
\draw (92,141.4) node [anchor=north west][inner sep=0.75pt]  [font=\small]  {$6$};
% Text Node
\draw (240,141.4) node [anchor=north west][inner sep=0.75pt]  [font=\small]  {$2$};
% Text Node
\draw (304,141.4) node [anchor=north west][inner sep=0.75pt]  [font=\small]  {$0$};
% Text Node
\draw (399,142.4) node [anchor=north west][inner sep=0.75pt]  [font=\small]  {$0$};
% Text Node
\draw (467,142.4) node [anchor=north west][inner sep=0.75pt]  [font=\small]  {$0$};
% Text Node
\draw (615,141.4) node [anchor=north west][inner sep=0.75pt]  [font=\small]  {$2$};
% Text Node
\draw (616,218.4) node [anchor=north west][inner sep=0.75pt]  [font=\small]  {$1$};
% Text Node
\draw (223,218.4) node [anchor=north west][inner sep=0.75pt]  [font=\small]  {$1$};
% Text Node
\draw (124,218.4) node [anchor=north west][inner sep=0.75pt]  [font=\small]  {$4$};
% Text Node
\draw (82,218.4) node [anchor=north west][inner sep=0.75pt]  [font=\small]  {$5$};
% Text Node
\draw (291,218.4) node [anchor=north west][inner sep=0.75pt]  [font=\small]  {$0$};
% Text Node
\draw (333,218.4) node [anchor=north west][inner sep=0.75pt]  [font=\small]  {$0$};
% Text Node
\draw (122,129.4) node [anchor=north west][inner sep=0.75pt]    {};
% Text Node
\draw (300,12.4) node [anchor=north west][inner sep=0.75pt]  [font=\small]  {$S( u) =( 0,1, 2, 3,6)$};
% Text Node
\draw (572,73.4) node [anchor=north west][inner sep=0.75pt]  [font=\small]  {$L(e_4) =( 3)$};
% Text Node
\draw (455,102.5) node [anchor=north west][inner sep=0.75pt]  [font=\small]  {$L(e_3) =( 0)$};
% Text Node
\draw (295,102.5) node [anchor=north west][inner sep=0.75pt]  [font=\small]  {$L(e_2) =( 1,2)$};
% Text Node
\draw (56,71.4) node [anchor=north west][inner sep=0.75pt]  [font=\small]  {$L(e_1) =(0,6)$};
% Connection
\draw[->]    (345.77,52.24) -- (120.66,107.16) ;
% Connection
\draw[->]    (348.14,57.1) -- (289.46,103.1) ;
% Connection
\draw[->]    (368.34,57.04) -- (428.1,103.35) ;
% Connection
\draw[->]    (370.65,52.28) -- (595.79,108.11) ;
% Connection
\draw[->]    (111.51,199.15) -- (124.49,247.85) ;
% Connection
\draw[->]    (104.49,199.15) -- (91.51,247.85) ;
% Connection
\draw[->]    (108.18,123) -- (108.04,172.4) ;
% Connection
\draw[->]    (608,201.6) -- (608,247.4) ;
% Connection
\draw[->]    (432.15,122.48) -- (404.44,174.02) ;
% Connection
\draw[->]    (444.3,122.47) -- (471.55,173.02) ;
% Connection
\draw[->]    (272.94,122.77) -- (244.42,176.01) ;
% Connection
\draw[->]    (285.12,122.88) -- (311.79,174.89) ;
% Connection
\draw[->]    (314.45,200.13) -- (301.55,247.87) ;
% Connection
\draw[->]    (321.55,200.13) -- (334.45,247.87) ;
% Connection
\draw[->]    (238.04,201.6) -- (238.18,247.4) ;
% Connection
\draw[->]    (608.18,124) -- (608.04,174.4) ;

\end{tikzpicture}
    \caption{An out-tree rooted at $u$ with an edge labeling. For $i=1, \ldots, 4$, edge $(u, v_i)$ is denoted by $e_i$.}
    \label{fig:visibility}
\end{figure}

\begin{proposition} \label{prop: visibility}
    \label{prop:equiv1}
A labeling $f:E \to \{0, \ldots, k\}$ of tree $G=(V, E)$ is valid if and only if for an arbitrary orientation of the tree it holds that
    \begin{enumerate}
        \item for every $u \in V$, if $e=(u', u) \in E$ then $f(e) \not \in S(u)$ or $f(e) = 0$ (or both), and
        \item for every $u \in V$, for distinct edges $e, e' \in \delta^+(u)$, the sets $L(e)$ and $L(e')$ are disjoint, except possibly for label $0$.
    \end{enumerate}
\end{proposition}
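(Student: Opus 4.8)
The plan is to prove both directions by relating the "global" condition in Definition~\ref{def:validlabel} (no two edges with the same positive label unless a strictly larger label separates them) to the two "local" conditions stated in terms of visibility sequences. The key observation to set up first is a structural fact about visibility: for a rooted tree with labeling $f$ and an edge $e$ pointing into a vertex $u$, the label $f(e')$ of an edge $e'$ in the subtree below $u$ appears in $S(u)$ precisely when no edge on the directed path from the appropriate child-edge of $u$ down to $e'$ carries a strictly larger label; and moreover, because the maximum of a set of visible labels is attained at the "top" edge, $S(u)$ is built by merging the $L(e)$'s over outgoing edges $e\in\delta^+(u)$. I would state this as a short preliminary claim so the two conditions can be manipulated cleanly.

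For the "only if" direction, assume $f$ is valid and fix an arbitrary orientation. For condition 1: suppose $e=(u',u)$ has $f(e)>0$ and $f(e)\in S(u)$. Then some edge $e'$ in the subtree rooted at $u$ is visible from an outgoing edge of $u$ and has $f(e')=f(e)$. The simple path from $e$ to $e'$ runs through $u$ and then down a visibility path, so by visibility no edge on the $u$-to-$e'$ portion exceeds $f(e')=f(e)$; and the edge $e$ itself does not exceed $f(e)$. Hence no edge on the whole simple path from $e$ to $e'$ has label $>f(e)$, contradicting validity. For condition 2: if $e,e'\in\delta^+(u)$ are distinct and some label $\ell>0$ lies in $L(e)\cap L(e')$, then there are edges $e_1$ (visible from $e$) and $e_2$ (visible from $e'$) with $f(e_1)=f(e_2)=\ell$; the simple path from $e_1$ to $e_2$ passes through $u$ and consists of the two visibility sub-paths, on neither of which any label exceeds $\ell$ — again contradicting validity.

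For the "if" direction, assume conditions 1 and 2 hold and take two distinct edges $e_1,e_2$ with $f(e_1)=f(e_2)=\ell>0$; I must exhibit an edge of label $>\ell$ on the simple path between them. Consider the simple path $P$ from $e_1$ to $e_2$ and let $u$ be its highest vertex (closest to the root). I would argue by contradiction: if every edge of $P$ has label $\le\ell$, then walking from $e_1$ up to $u$, no larger label is encountered, so $\ell$ is visible along that branch and thus $\ell\in L(e)$ for the outgoing edge $e$ of $u$ heading toward $e_1$ (using that $e_1\ne$ the in-edge of $u$, or handling that boundary case via condition 1); symmetrically $\ell\in L(e')$ for the outgoing edge $e'$ toward $e_2$. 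If $e\ne e'$ this contradicts condition 2. If $e=e'$, then $u$ is not on the interior of $P$ in the branching sense — i.e. $P$ goes up from $e_1$ to some vertex, never branches, so actually one of $e_1,e_2$ is an ancestor edge of the other along a single directed path; in that case the lower edge's label $\ell$ would be visible from the upper edge, i.e. $\ell\in S(\cdot)$ against condition 1 applied at the appropriate vertex, or directly $\ell\in L(e_{\text{upper}})$ twice. Either way we reach a contradiction, so some edge on $P$ exceeds $\ell$.

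The main obstacle I anticipate is the bookkeeping around the vertex $u$ at the top of the path $P$ and the degenerate sub-cases: distinguishing whether the path leaves $u$ through two distinct out-edges (the generic case, handled by condition 2), through the in-edge and one out-edge, or whether one of $e_1,e_2$ is itself incident to $u$ (handled by condition 1). Getting the orientation-dependent definitions of $L(\cdot)$ and $S(\cdot)$ to line up with the orientation-independent simple-path condition of Definition~\ref{def:validlabel} is where care is needed; the underlying idea in each case is the same one-line visibility argument, but one must check that "the simple path from $e_1$ to $e_2$" decomposes as an up-path followed by a down-path through $u$, and that visibility along a directed path is exactly the absence of a strictly larger intervening label.
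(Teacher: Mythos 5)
Your proposal is correct and follows essentially the same route as the paper's proof: the forward direction contradicts validity via the visibility paths (condition 1 at the head of an in-edge, condition 2 at a branching vertex), and the backward direction splits according to whether the two equally-labeled edges lie on a directed path (handled by condition 1) or branch at their topmost common vertex (handled by condition 2). Your extra care with the boundary sub-cases at the top vertex is sound but does not change the argument.
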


\begin{proof}
    Let $f$ be a valid labeling, and give an arbitrary orientation to $G$. Let $u \in V$ be a vertex and assume that there exists an edge $e=(u', u) \in E$. For property one, assume by contradiction that $l = f(e) \in S(u)$ and that $l > 0$. By the definition of visibility, there is an edge $e'$ at the end of a directed path $P$ that starts at $u$, such that $l = f(e') \geq f(e'')$ for all $e'' \in P$. However, this contradicts the assumption that $f$ is a valid labeling. Now, let us prove Property 2. Let $u \in V$ be a vertex and let $e_1, e_2 \in \delta^+(u)$ be outgoing edges of $u$. Again by contradiction, assume that there is a label $l>0$ that belongs to both $L(e_1)$ and $L(e_2)$. Let $e_1'$ and $e_2'$ be the edges visible from $e_1$ and $e_2$, respectively, such that $l = f(e_1') = f(e_2')$. Since $f$ is a valid labeling, there exists $e_3$ on the path that connects $e_1'$ and $e_2'$ such that $f(e_3)>l$. However, edge $e_3$ lies either in the directed path from $e_1$ to $e_1'$ or in the directed path from $e_2$ to $e_2'$. Assume that it lies in the former path without loss of generality. Then $e_1'$ is not visible from $e_1$, which contradicts our initial assumption. 

    Now assume that $f$ satisfies properties 1 and 2. Let $e_1, e_2 \in E$ be edges with the same positive label. If they are the endpoints of a directed path, Property 1 guarantees that there is an edge $e_3$ that connects $e_1$ and $e_2$ such that $f(e_1)$. In the case where there is no directed path that connects $e_1$ and $e_2$, then let $u$ be their common ancestor. In this case, Property 2 guarantees the existence of $e_3$ in the path that connects $e_1$ with $e_2$ such that $f(e_3) > f(e_1)$. We conclude that $f$ is a valid labeling.
\end{proof}

\subsection{Dynamic Program}
The idea of the dynamic program is to compute an optimal valid labeling from the bottom up, following an arbitrary orientation of the tree. For every edge and vertex, we need to keep track of their visibility sets and make sure that the conditions of Proposition~\ref{prop:equiv1} are satisfied. 

The dynamic program consists of two tables $B$ and $C$, with recurrences interleaving one another. For an edge $e=(u,v)$, consider the sub-tree consisting of this edge and the sub-tree rooted at $v$. By $B[e,L]$ we denote the maximum expected profit of a valid labeling of this sub-tree with visibility sequence $L$ for $e$. Vertex $u$ is not counted in the profit. Note that the label of $e$ is $\min L$.

Now, for some vertex $v$ with outgoing edges $e_1, \ldots, e_d$ and an integer $1 \leq i \leq d$, consider the subtree consisting of the edges $e_1,\ldots,e_i$ together with the sub-trees attached to the endpoints of these edges. By $C[v,i,S]$ we denote the maximum profit of a valid labeling of this sub-tree with visibility sequence $S$ at vertex $v$. Again vertex $v$ is not counted in the profit. For convenience we include the degenerate case $i=0$ corresponding to an empty sub-tree, which includes the case when $v$ is a leaf.

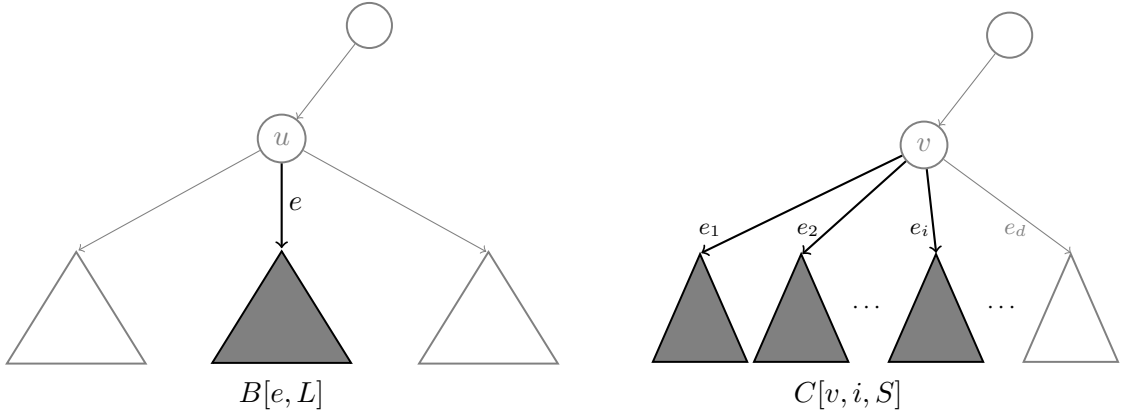
\begin{figure}[htb]
    \centering
    \begin{subfigure}[t]{0.45\textwidth}
        \centering
        \tikzset{every picture/.style={line width=0.75pt}} %set default line width to 0.75pt        

\begin{tikzpicture}[x=0.75pt,y=0.75pt,yscale=-0.7,xscale=0.7]
%uncomment if require: \path (0,585); %set diagram left start at 0, and has height of 585

%Flowchart: Extract [id:dp9930147882203486] 
\draw[gray]   (202,270) -- (252,350.5) -- (152,350.5) -- cycle ;
%Flowchart: Extract [id:dp6579209689222607] 
\draw[gray]   (499,270) -- (549,350.5) -- (449,350.5) -- cycle ;
%Flowchart: Extract [id:dp09937003953564427] 
\draw  [fill=black!50] (350,269.5) -- (400,350) -- (300,350) -- cycle ;

% Text Node

\node[draw, circle, gray] (c) at (350, 188.2) {$u$};
% Text Node
\node[draw, circle, gray, minimum height=0.6cm] (p) at (413.22,106.7)    {$ $};
% Text Node
\node (t2) at (350, 275) {};
% Text Node
\node (t1) at (196, 273)    {};
% Text Node
\node (t3) at (505, 274)   {};
% Text Node
\draw (360,235.2) node {$e$};
% Connection

\begin{scope}[every edge/.style={->, draw}]
    \path[gray, thin] (p) edge (c);
    \path[gray, thin] (c) edge (t1);
    \path[black, thick] (c) edge (t2);
    \path[gray, thin] (c) edge (t3);
\end{scope}

%\draw[->, thin, gray]    (404.22,118.25) -- (357.12,178.7) ;
% Connection
%\draw[->, thick]    (350,200.24) -- (350,268.5) ;
% Connection
%\draw[->, thin, gray]    (339.2,194.06) -- (204,269.48) ;
% Connection
%\draw[->, thin, gray]    (360.28,193.98) -- (497.5,269.07) ;

\draw (350, 373) node {$B[e, L]$};

\end{tikzpicture}
    \end{subfigure}
    \hspace{0.02\textwidth}
    \begin{subfigure}[t]{0.45\textwidth}
        \centering
        \tikzset{every picture/.style={line width=0.75pt}} %set default line width to 0.75pt        

\begin{tikzpicture}[x=0.75pt,y=0.75pt,yscale=-0.68,xscale=0.68]
%uncomment if require: \path (0,405); %set diagram left start at 0, and has height of 405

%Flowchart: Extract [id:dp6662389879439079] 
\draw[fill=black!50]   (160,310) -- (195,390) -- (125,390) -- cycle ;
%Flowchart: Extract [id:dp7239160010407644] 
\draw[fill=black!50]   (235,310) -- (270,390) -- (200,390) -- cycle ;
%Flowchart: Extract [id:dp11993178530442816] 
\draw[fill=black!50]   (335,310) -- (370,390) -- (300,390) -- cycle ;
%Flowchart: Extract [id:dp954108335835911] 
\draw[color=gray]   (435,310) -- (470,390) -- (400,390) -- cycle ;

% Text Node
%\draw[fill=black!50]    (326.12, 229) circle [x radius= 12.04, y radius= 12.04]   ;

\node[draw, circle, gray] (v) at (326.12,229) {$v$};
% Text Node

\node[draw, circle, gray, minimum height=0.6cm] (p) at (389.62,147.5) {};

\node (t1) at (153, 313)  {};
\node (t2) at (228, 317)  {};
\node (ti) at (336, 317)  {};
\node (td) at (443, 315)  {};

\draw (167.42,292) node  [font=\footnotesize]  {$e_{1}$};
\draw (240,292) node  [font=\footnotesize]  {$e_{2}$};
\draw (323,292) node  [font=\footnotesize]  {$e_{i}$};
\draw (285,350) node  [font=\smaller]  {$\cdots$};
\draw (385,350) node  [font=\smaller]  {$\cdots $};
\draw[gray] (394,292) node  [font=\footnotesize]  {$e_{d}$};

\begin{scope}[every edge/.style={->, draw}]
    \path[gray, thin] (p) edge (v);
    \path[black, thick] (v) edge (t1);
    \path[black, thick] (v) edge (t2);
    \path[black, thick] (v) edge (ti);
    \path[gray, thin] (v) edge (td);
\end{scope}

% Connection
%\draw[->, gray, thin]    (380.62,159.05) -- (333.52,219.5) ;
% Connection
%\draw[->, thick]    (329.1,240.66) -- (346.63,309.2) ;
% Connection
%\draw[->, thick]    (315.3,234.28) -- (160,310.11) ;
% Connection
%\draw[->, gray, thin]    (336.13,235.69) -- (447.1,309.98) ;
% Connection
%\draw[->, thick]    (317.72,237.62) -- (246.8,310.46) ;

\draw (270, 414.5) node {$C[v, i, S]$};

\end{tikzpicture}
    \end{subfigure}
    \caption{Illustration of table values $B$ and $C$. The shaded areas correspond to the portion of the tree that is considered by the table.}
    \label{fig:main}
\end{figure}

To introduce the recursion for $B$, let $e=(u, v)$ be an edge and let $L$ be a non-empty visibility sequence and denote $l_1=\min L$. Let $d$ be the degree of $v$, which is $0$ in case $v$ is a leaf. If $l_1=0$, vertex $v$ is not covered and we have the recursion
\begin{equation} \label{eq: table B0}
    B[e, L] = \max\{C[v,d,L], C[v,d,L\setminus(0)]\}.
\end{equation}
If $l_1>0$, we have the recursion
\begin{equation} \label{eq: table B1}    
B[e, L] = 
        \max_{S' \subseteq [l_1]} y(v) \cdot p(k+1-\min\{S' \cup (l_1)\}) + C[v, d, (S' \cup L) \backslash (l_1)],
\end{equation}
where $[l_1]$ denotes the set $\{0,1,\ldots,l_1-1\}$.  For the second case, $S'$ corresponds to the set of labels screened by $l_1$, but present in the visibility sequences of outgoing edges from $v$. The label $l_1$ is positive, so it must be excluded from visibility sequence of $v$ to satisfy Property 1 of Proposition \ref{prop:equiv1}. To prevent $e$ to be unlabeled, we define $B[e, \emptyset] = -\infty$. Next, we focus on table $C$. We define

For the special case $i=0$ of table $C$ we define 
\begin{equation}  \label{eq: table C0}
C[v,0,S]=
\begin{cases}
0 &\text{if } S=\emptyset \\
-\infty & \text{if }S\neq \emptyset.
\end{cases}
\end{equation}
For an index $1\leq i\leq d$ we define
\begin{equation} \label{eq: table C1}
    C[v, i, S] = 
        \max_{L_i \subseteq S} B[e_i, L_i] + 
        \max\{ C[v, i-1, S \backslash L_i ], C[v, i-1, S \backslash (L_i \backslash (0))]\}.
\end{equation}
The idea is to decompose the visibility sequence $S$ of vertex $v$ in almost disjoint visibility sequences $L_i$ for each edge $e_i$, with the potential exception of label $0$. This is done in order to preserve the second property of Proposition \ref{prop: visibility}. We achieve that by assigning $L_i \subseteq S$ to edge $e_i$ and calling table $C$ again on the remaining edges with updated visibility sequence $S \backslash (L_i \backslash (0))$. The maximum of two choices translates the fact that $e_i$ could be the only edge among $e_1,\ldots,e_i$ to be labeled 0, or not the only one. In both cases $L_i$ would contain $0$, but in the second case determines whether the recursion in $C[v,i-1,\cdot]$ continues also with a sequence starting with 0 or not.

The maximum value of Equation \eqref{eq:labeling objective} over all valid labelings can be computed by solving
\begin{equation} \label{eq: DP equation}
\max_{S} y(r) \cdot p(k+1-\min S) +  C[r, d, S],    
\end{equation}
where $d$ is the degree of the fixed root $r$. Notice, however, that the resulting labeling might not be maximal. This is easily solved by making the algorithm break ties in favor of lexicographically larger visibility sequences. This, together with the fact that $p$ is non-increasing, results in a maximal valid labeling. By following the dynamic program constructed through equations \ref{eq: table B0}, \ref{eq: table B1} and \ref{eq: table C0}, we obtain the following theorem :

\begin{theorem}
    The best-response problem can be solved in $\mathcal{O}(n^2 2^{2k})$.
\end{theorem}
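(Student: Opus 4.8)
The plan is to chain the two reductions already established and then verify the dynamic program. By Proposition~\ref{prop:labeling=BST} the best-response value equals $\max_{f\in\mathcal{F}_k}\sum_{v\in V}y_v\,p(h_f(v))$, and by Proposition~\ref{prop:equiv1} a labeling $f$ is valid if and only if, after fixing an arbitrary orientation, it satisfies the two local conditions on the visibility sequences $L(e)$ and $S(u)=\bigcup_{e\in\delta^+(u)}L(e)$. So it suffices to show that the tables $B$ and $C$, read off from Equations~\eqref{eq: table B0}, \eqref{eq: table B1}, \eqref{eq: table C0} and \eqref{eq: table C1}, compute this maximum. Concretely I would fix a root $r$, orient all edges away from $r$, and prove by bottom-up structural induction the invariants: $B[e,L]$ is the maximum of $\sum_{w}y_w\,p(h_f(w))$ over valid labelings $f$ of the subtree made of $e=(u,v)$ and the subtree rooted at $v$, taken over those $f$ with $L(e)=L$, where the sum excludes $w=u$; and $C[v,i,S]$ is the analogous maximum over the subtree formed by $e_1,\dots,e_i$ and their pendant subtrees, restricted to $L(e_1)\cup\dots\cup L(e_i)=S$, excluding $v$. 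The base cases are $B[e,\emptyset]=-\infty$ and Equation~\eqref{eq: table C0}.

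For the inductive step one checks that each recurrence enforces exactly the conditions of Proposition~\ref{prop:equiv1} while correctly charging the profit of a newly completed vertex. In Equation~\eqref{eq: table C1}, writing $S=L_i\cup(S\setminus L_i)$ (disjoint except possibly for the label $0$) is precisely Property~2, and the twofold maximum over $C[v,i-1,\cdot]$ records whether $e_i$ is or is not the unique $0$-labelled edge among $e_1,\dots,e_i$, i.e.\ whether $0$ must still be available downstream. In Equation~\eqref{eq: table B1} with $l_1=\min L>0$: deleting $l_1$ from the sequence handed up through $e$ is forced by Property~1 ($l_1=f(e)\notin S(v)$ when $l_1>0$), the set $S'\subseteq[l_1]$ is the ``screened'' part, so that $S(v)=(S'\cup L)\setminus\{l_1\}$; since the first label of a visibility sequence is the label of its edge, the minimum label incident to $v$ equals $\min(S'\cup\{l_1\})$, whence $y_v\,p(k+1-\min(S'\cup\{l_1\}))$ is exactly $y_v\,p(h_f(v))$ for $h_f(v)=k+1-\min_{e'\in\delta(v)}f(e')$. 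The case $l_1=0$ (Equation~\eqref{eq: table B0}) leaves $v$ uncovered and merely forwards $L$ or $L\setminus\{0\}$. Aggregating at the root via Equation~\eqref{eq: DP equation} adds the contribution of $r$, which has no incoming edge so $h_f(r)=k+1-\min S$. Finally, breaking ties toward lexicographically larger visibility sequences, combined with $p$ being non-increasing, turns the computed labeling into a \emph{maximal} one; this needs the small side observation that enlarging a visibility sequence neither increases any $h_f(\cdot)$ nor breaks Proposition~\ref{prop:equiv1}.

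I expect the main obstacle to be this interface between the ``screened'' set $S'$, the optional label $0$, and the two-way maxima: one must run the induction in both directions, showing (i) that every valid labeling of the subtree is reproduced by some sequence of choices in the recurrences, and (ii) that every sequence of choices yields a valid labeling with the claimed profit. The rest---base cases, correctness of Equation~\eqref{eq: DP equation}, and the maximality post-processing---is routine once the invariants are stated carefully.

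For the running time: there are $O(n)$ oriented edges, and since $\sum_{v}d(v)=O(n)$ there are $O(n)$ pairs $(v,i)$ with $0\le i\le d(v)$. Because $f$ is valid, at most one edge of each positive label is visible from a given edge (Definition~\ref{def:validlabel}), so a visibility sequence is determined by a subset of $\{0,\dots,k\}$; hence tables $B$ and $C$ together have $O(n\,2^{k})$ entries. Evaluating one entry maximizes over $\le 2^{k+1}$ subsets ($S'\subseteq[l_1]$ for $B$, $L_i\subseteq S$ for $C$), each costing a constant number of table look-ups plus set operations on sequences of length $O(k)$, i.e.\ $O(k\,2^{k})$ per entry. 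Multiplying by $O(n\,2^{k})$ entries and using $k\le n$ gives $O(n^{2}2^{2k})$, and the final maximization in Equation~\eqref{eq: DP equation} contributes only $O(2^{k})$ more; this proves the bound.
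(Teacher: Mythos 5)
Your proposal is correct and follows essentially the same route as the paper: the paper's proof of this theorem is precisely the running-time count — $O(n2^k)$ table entries (an edge, or a pair $(v,i)$, together with a label subset) with $O(2^k)$-type work per entry via the recurrences \eqref{eq: table B0}--\eqref{eq: table C1} — with correctness of the dynamic program resting on Propositions~\ref{prop:labeling=BST} and~\ref{prop:equiv1}, exactly as you set it up. Your accounting ($O(k2^k)$ per entry, then $k\le n$) and your stated invariants for $B[e,L]$ and $C[v,i,S]$ are, if anything, a bit more explicit than what the paper writes, but the approach is the same.
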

\begin{proof}
For the given dynamic program, there are $O(n 2^k)$ variables, as both tables $B$ and $C$ are described by a single edge or vertex and a label set. Each can be computed in time $O(2^k)$, in bottom up order of the tree. Therefore, Equation \ref{eq: DP equation} can be solved in time $\mathcal{O}(n^2 2^{2k})$. 
\end{proof}

\appendix

\section{Missing Proofs of Section~\ref{sec:line}}

\subsection{Proof of Proposition~\ref{prop:non-dominated}}

The following lemmas will be useful to show Proposition~\ref{prop:non-dominated}.

\begin{lemma}\label{lm:leaves}
    Let $G=(V, E)$ be a path graph with $n$ vertices. We say that $p_0, p_1, \ldots, p_{j} \in \N_0$ with $0 = p_0  < p_1 < \cdots < p_{j-1} < p_{j} = n$ define a \emph{partition into $j$ intervals} of $V$. Then, $\{V(\ell): \lambda \in L\} = \{[p_i, p_{i+1}-1]: 0 \leq i \leq j-1\}$ for some search strategy $T \in \BSTs_k$ with leaves $L$ if and only if $j \leq 2^k$.
\end{lemma}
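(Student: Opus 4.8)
The plan is to observe that on a path a search strategy in $\BSTs_k$ is nothing more than a rooted binary tree of height at most $k$ whose leaves, read left to right, are the blocks of the partition, with every internal node labelled by the boundary edge separating the block of intervals handled by its left subtree from the block handled by its right subtree. Once this dictionary is set up, both implications become short; the only genuine content is a careful halving argument for the budget.

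For the necessity direction I would first record that in any search strategy the sets $V(\lambda)$ for $\lambda\in L$ are nonempty and pairwise disjoint: nonemptiness because querying an edge $uv$ always yields two components $G_u\ni u$ and $G_v\ni v$, and disjointness is the "partition into connected components" property already noted in the text. Hence $|L|$ equals the number $j$ of intervals in the partition. Since the tree underlying a strategy in $\BSTs_k$ is binary (each internal node has exactly two children) and has height at most $k$, a one-line induction on the height gives $|L|\le 2^k$, and therefore $j\le 2^k$.

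For sufficiency I would argue by induction on $j$ that every partition given by $0=p_0<\dots<p_j=n$ into $j\le 2^k$ intervals is realised by some $T\in\BSTs_k$. The base case $j=1$ is the trivial one-node strategy, of height $0$. For $j\ge 2$ we have $2^k\ge 2$, hence $k\ge 1$; set $m:=\ceil{j/2}$, so that $1\le m\le j-1$. Querying the edge $\{p_m-1,p_m\}\in E$ at the root splits $V$ into the subpath on vertices $0,\dots,p_m-1$, which inherits the sub-partition $p_0<\dots<p_m$ into $m$ intervals, and the subpath on vertices $p_m,\dots,n-1$, which inherits the sub-partition $p_m<\dots<p_j$ into $j-m$ intervals. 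Because $m=\ceil{j/2}\le\ceil{2^k/2}=2^{k-1}$ and $j-m=\floor{j/2}\le 2^{k-1}$, the induction hypothesis provides strategies of height at most $k-1$ for the two subpaths realising these sub-partitions; attaching them below the root gives the required strategy in $\BSTs_k$ realising the original partition.

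I do not expect a real obstacle here. The points that need a little care are: checking that $\{p_m-1,p_m\}$ is actually an edge of the relevant subpath (it is, since $1\le m\le j-1$ forces $0\le p_m-1$ and $p_m\le n-1$); checking that the two inherited sub-partitions are again partitions into intervals of honest subpaths, so the induction applies verbatim with the budget correctly decremented by one; and keeping the $\ceil{\cdot}/\floor{\cdot}$ bookkeeping tight, since $j=2^k$ makes $\ceil{j/2}\le 2^{k-1}$ an equality. An equivalent route, should one prefer a non-recursive presentation, is to fix any balanced binary tree with exactly $j$ leaves (its height is $\ceil{\log_2 j}\le k$), assign the $j$ intervals to its leaves in left-to-right order, and label each internal node with the unique boundary edge it is then forced to query; the verification that this is a valid search strategy realising the partition is the same.
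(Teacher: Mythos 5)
Your proof is correct and takes essentially the same approach as the paper: for necessity, counting the leaves of a binary search tree of height at most $k$, and for sufficiency, a recursive split of the partition at a breakpoint with the budget decremented by one. The only cosmetic difference is that you induct on the number of intervals $j$ and split at the median breakpoint $\lceil j/2\rceil$, whereas the paper inducts on $k$ and splits at breakpoint index $2^{k-1}$; both give the same height bound.
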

\begin{proof}
    If $T \in \BSTs_k$ is a search strategy with leaves $L$, then $\{V(\lambda) : \lambda \in L\}$ is trivially a partition into $|L|$ intervals, since each $V(\lambda)$ corresponds to a connected component of $V$, and connected components of path graphs are intervals. Since $T$ has height at most $k$, then $|L| \leq 2^k$. 
    
    For the other direction, we prove by strong induction on $k$ that for every path graph $G=(V, E)$ with $n$ vertices and for every partition into $j$ intervals $p_0, \ldots, p_j$, if $j \leq 2^k$ then there exists $T \in \BSTs_k(G)$ with leaves $L$ such that $\{V(\lambda): \lambda \in L\} = \{[p_i, p_{i+1}-1]: 0 \leq i \leq j\}$, where $\BSTs_k(G)$ is the set of search strategies with height at most $k$ for $G$. For $k=0$, the result is trivial. Assume now that the desired result holds for all $k' < k$, with $k \geq 2$ being some natural number. Let $G=(V, E)$ be a path graph with $n \geq 2^k$ and let $p_0, \ldots, p_j$ be a partition into $j$ intervals, with $j \leq 2^k$. Without loss of generality, assume $j > 2^{k-1}$, and select $i = 2^{k-1}$. Notice that $p_0, \ldots, p_{i}$ and ${p_{i}, p_{i+1}, \ldots, p_j}$ define partitions of at most $2^{k-1}$ intervals of $V(G_{p_i})$ and $V(G_{u_i + 1})$, respectively. From the inductive hypothesis, we obtain search strategies $T_1 \in \BSTs_{k-1}(G_{p_i})$ and $T_2 \in \BSTs_k(G_{p_i+1})$ with leaves $L_1$ and $L_2$, respectively, such that $\{V(\lambda): \lambda \in L_1\} = \{[p_l, p_{l+1}-1]: 0 \leq l \leq i\}$ and $\{V(\lambda): \lambda \in L_2\} = \{[p_l, p_{l+1}-1]: i \leq l \leq j\}$. Lastly, we define $T = (N, A)$ with root $\rho$ labeled $e(\rho) = \{p_i, p_i+1\}$. The two children of $\rho$ are defined as $T_1$ and $T_2$, so by definition $T \in \BSTs_k$. Also, the set of leaves of $T$ is $L_1 \cup L_2$, proving the lemma.
\end{proof}

As the proof of the lemma suggests, the partition into intervals of a given search strategy is simply a representation of the edge queries performed by the strategy on the graph in a compact format. 

\begin{lemma}\label{lm:non-dominated}
    Let $G=(V, E)$ be a path graph with $n$ vertices and let $T \in \BSTs_k$ be some search strategy, with $2^k < n$. For a vertex $v \in V$, denote the unique leaf $\lambda$ of $T$ such that $V(\lambda) \ni v$ as $\lambda(v)$. Then, $C(T)$ is a maximal covered set if and only if 
    \begin{enumerate}[a)]
        \item $T$ has exactly $2^k$ leaves, and
        \item for any vertex $v \in V$ such that $v-1, v \not \in C(T)$, $\lambda(v-1) = \lambda(v)$.
    \end{enumerate}
\end{lemma}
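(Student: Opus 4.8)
The plan is to work entirely with the partition-into-intervals description furnished by Lemma~\ref{lm:leaves}: a search strategy $T\in\BSTs_k$ corresponds to a partition $P$ of $V$ into at most $2^k$ intervals, its leaves $\lambda$ are the blocks $V(\lambda)$, and $C(T)$ is exactly the set of \emph{singleton} blocks of $P$ (a vertex $v$ is covered iff $\{v\}$ is a block). I would prove the two implications separately, the forward one by exhibiting an explicit enlargement of $C(T)$ and the backward one by a counting argument on the blocks.

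For the forward direction (maximality implies (a) and (b)) I would argue by contrapositive. If (a) fails, the partition of $T$ has $j<2^k$ blocks; since $j<2^k<n$, some block $[a,b]$ has $b>a$, and replacing it by $[a,a]$ and $[a+1,b]$ yields a partition with $j+1\le 2^k$ blocks, hence a strategy $T'\in\BSTs_k$ by Lemma~\ref{lm:leaves}, whose covered set contains $C(T)\cup\{a\}\supsetneq C(T)$. If (b) fails at some $v$, then $v-1$ and $v$ are both uncovered and lie in distinct blocks; writing these blocks as $[a,v-1]$ and $[v,b]$, uncoveredness forces $a\le v-2$ and $b\ge v+1$, so deleting the breakpoint at $v$ and inserting one right after $a$ keeps the number of blocks equal to $2^k$ while creating the new singleton $\{a\}$, again contradicting maximality.

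For the backward direction (conditions (a) and (b) imply maximality) the key preliminary step is the structural claim that, under (a) and (b), the non-singleton blocks of $P$ are \emph{precisely} the maximal runs of consecutive $T$-uncovered vertices: condition (b) rules out two adjacent non-singleton blocks and forces a covered vertex immediately on each side of every non-singleton block, while the definition of $C(T)$ identifies "uncovered" with "lies in a non-singleton block". This yields $2^k=|P|=|C(T)|+r$, where $r$ is the number of uncovered runs. Now take any $T'\in\BSTs_k$ with $C(T)\subseteq C(T')$, with partition $P'$, $|P'|\le 2^k$. Every $c\in C(T)\subseteq C(T')$ is a singleton block of $P'$, so every block of $P'$ is either one of these $|C(T)|$ singletons or is contained in a unique uncovered run $R$; the $P'$-blocks inside a given run $R$ partition $R$, and writing $m_R\ge 1$ for their number we get $|P'|=|C(T)|+\sum_R m_R$. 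Subtracting, $\sum_R(m_R-1)=|P'|-2^k\le 0$, whence $m_R=1$ for every $R$, so $P'=P$ and $C(T')=C(T)$; thus $C(T)$ admits no strict enlargement.

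I expect the main obstacle to be the structural claim in the backward direction — establishing the exact correspondence between non-singleton blocks and maximal uncovered runs, and then running the block-count bookkeeping cleanly — since this is where condition (b) and the identification of covered vertices with singleton blocks must be used with care; the interval surgeries in the forward direction are routine once set up.
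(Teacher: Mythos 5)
Your proof is correct and follows essentially the same route as the paper: both reduce to the interval-partition description of Lemma~\ref{lm:leaves}, prove the forward direction by the same breakpoint surgery that creates a new singleton out of a non-singleton block, and prove the backward direction by a counting argument based on the observation that condition (b) forces the uncovered vertices to form runs coinciding with the non-singleton blocks. The only cosmetic difference is in the backward bookkeeping: you count the $P'$-blocks inside each maximal uncovered run (even concluding $P'=P$), whereas the paper counts the breakpoints of $T$ that any enlarging strategy $T'$ is forced to reuse; the two counts are equivalent.
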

\begin{proof}
    For the first implication, we proceed by contrapositive. Let $T \in \BSTs_k$ be a search strategy and assume it has less than $2^k$ leaves. Recall that $C(T)$ are the vertices $v$ such that $V(\lambda(v)) = \{v\}$. By Lemma \ref{lm:leaves}, we can find search strategy $T' \in \BSTs$ with $2^k$ leaves such that it defines a partition with the same intervals and at least one more of length 1, hence $C(T') \supset C(T)$. Now assume that there exists a vertex $v \in V$ such that neither $v$ nor $v - 1$ belong to $C(T)$, but $\lambda(v-1) \neq \lambda(v)$. For $v' = \min\{\lambda(v-1)\}$, notice that $V(\lambda(v-1)) = [v', v-1]$, so $v'$ is not covered by $T$ either. We will find a search strategy that covers all the same vertices than $T$ plus vertex $v'$. Let $p_0, \ldots, p_j$ be the partition into $j$ intervals defined by $T$, then there is $1 \leq i \leq j-1$ such that $p_i = v$. Define a partition into $j$ intervals $q_0, \ldots, q_j$ as follows: 
    $$q_l = \begin{cases}
        p_l & \text{ if $l \neq i$},\\
        v'+1 & \text{ if $l = i$}.
    \end{cases}$$
    By Lemma \ref{lm:leaves}, there is a search strategy $T' \in \BSTs$ that induces this partition. As the definition of the partition suggests, there is a leaf $\lambda'$ of $T'$ such that $V(\lambda') = \{v'\}$, namely the leaf associated to interval $[q_{i-1}, q_i-1]$. Moreover, all other intervals of length 1 are preserved, so if $v \in C(T)$, then $v\in C(T')$ as well. We conclude that $C(T') \supset C(T)$.

    For the other direction, let $T \in \BSTs_k$ such that Properties a) and b) hold. Let $p_0, \ldots, p_{2^k}$ be the partition into intervals associated to $T$. We remark that Property b) implies that for all $i \in \{1, \ldots, 2^k-1\}$, we have that either $p_i = p_{i-1} + 1$ or $p_i = p_{i+1} - 1$. We also remark that $v \in C(T)$ if and only if there exists $i \in \{0, \ldots, 2^k-1\}$ such that $u_i = v$ and $u_i = u_{i+1}-1$. Now, assume by contradiction that there is a search strategy $T' \in \BSTs_k$ such that $C(T') = C(T) \cup \{v\}$, for some $v \not \in C(T)$. Let $q_0, \ldots, q_j$ be the partition into intervals induced by $T'$. For the two remarks mentioned initially, we have that for every $1 \leq i \leq 2^k-1$ there has to be some $1 \leq l \leq j-1$ such that $p_i = q_l$. Otherwise, there would be a vertex that is covered only by $T$. This implies that $j \geq 2^k$. Now, because $v \not \in C(T)$, assume without loss of generality that there is no $1 \leq i \leq 2^k - 1$ such that $p_i = v$. We know that there exists $1 \leq l \leq j-1$ with $q_l = v$, because $v \in C(T')$. This implies that $j > 2^k$, which contradicts that $T' \in \BSTs_k$. 
\end{proof}

Now we have all the tools to prove Proposition \ref{prop:non-dominated}.
\begin{proof}[Proof of Proposition \ref{prop:non-dominated}]
    Let $C(T)$ be a maximal covered set for some $T \in \BSTs_k$, and let $[u_1 \oplus \ell_1], \ldots , [u_s \oplus \ell_s]$ be maximal intervals such that their union equals $C(T)$. A trivial corollary of Lemma \ref{lm:leaves} is that if a vertex $v$ is not covered by $T$, then there is at least one neighbour of $v$ that is also not covered by $T$. Using the contrapositive of this fact on vertex $v=0$, we deduce that if $1 \in C(T)$, then $0 \in C(T)$ as well. Thereby no maximal interval can start at vertex 1. The same reasoning is used to discard maximal intervals ending at vertex $n-2$, hence proving Property i). Next, observe that Property ii) is equivalent to the mentioned corollary restricted to all vertices other than $0$ and $n-1$, which as we mentioned follows directly from Lemma \ref{lm:leaves}. To show Property iii), we go by cases. If $\{0, n-1\} \cap C(T) = \emptyset$, then we know of at least two leaves of $T$ associated to intervals of length at least 2: one containing vertex 0 and another one containing vertex $n-1$. By Property b) of Lemma \ref{lm:non-dominated}, there is also exactly one leaf associated to leafs of length at least two in between every maximal interval, that is, exactly $s-1$ such leaves. As a consequence, we have that
    $$
        \sum_{t=1}^s \ell_s = |C(T)| = 2^k - 2 - (s - 1) = c+1-s,
    $$
    where we use that $T$ has exactly $2^k$ leaves by Property a) of Lemma \ref{lm:non-dominated}. In the case where $\{0, n-1\} \cap C(T) \neq \emptyset$, then we have exactly one leave of length at least  alongside each maximal interval. Accordingly, we have
    $$
        \sum_{t=1}^s \ell_s = |C(T)| = 2^k - s = c+1-s.
    $$

    Now we address the other direction. We will construct a partition into intervals using the description of $C$ and we will show that the tree $T$ defined is such that $C(T) = C$ and that it has Properties a) and b) of Lemma \ref{lm:non-dominated}. Start by defining $p_0 = 0$. If $\{0, n-1\} \cap C = \emptyset$, then for each maximal interval $1 \leq t \leq s$ construct $p_i^t = u_t + i$ for $0 \leq i \leq \ell_t$. In the case where $\{0, n-1\} \cap C \neq \emptyset$, then for each $1 \leq t \leq s$ construct $p_i^t = (u_t + i) \mod{n}$ for $0 \leq i \leq \ell_t$, unless the expression equals 0, which happens exactly once by the fact that the intervals are maximal modulo $n$. When this is the case, simply omit the construction of $p_i^t$. Afterwards, sort the constructed $p_i^t$ in a unique sequence of numbers $p_1 < \ldots < p_{j-1}$ and complete the partition into intervals by defining $p_j = n$. We can guarantee that every constructed number is distinct from each other because the intervals that define $C$ are maximal. If $\{0, n-1\} \cap C = \emptyset$, we created $\ell_t + 1$ numbers for each $1 \leq t \leq s$. Consequently, by Property iii) we have
    \begin{equation*}
        j-1 = \sum_{t=1}^s (\ell_t + 1) = (c + 1 - s) + s = 2^k - 1, 
    \end{equation*}
    and if $\{0, n-1\} \cap C \neq \emptyset$, the same reasoning with the consideration of the omitted number gives
    \begin{equation}
        j-1 = \sum_{t=1}^s (\ell_t + 1) - 1 = (c + 2 - s) + s - 1 = 2^k - 1. 
    \end{equation}
    Either way, we use Lemma \ref{lm:leaves} to infer the existence of $T \in \BSTs_k$ that induces the defined partition into intervals and that satisfies Property a) of Lemma \ref{lm:non-dominated}. Moreover, for every $v \in C$, we constructed some $p_i$ where $v = p_i = p_{i+1} - 1$, i.e. $\{v\} = [p_i, p_{i+1}-1]$, therefore $v \in C(T)$ and we obtain that $C(T) \supseteq C$. Now, let $v$ be a vertex that is not contained in $C$. By Property i) and ii), we can assume without loss of generality that $v-1 \not \in C$ as well. By construction, there exists $i$ for which $p_i \leq v-1$ and $v \leq p_{i+1}-1$, that is, $v-1$ and $v$ are both contained in $[p_i, p_{i+1}-1]$. Then, we have that neither $v$ nor $v-1$ belong to $C(T)$, and $\lambda(v-1) = \lambda(v)$. This last equality implies Property b) of Lemma \ref{lm:non-dominated}. We conclude that $C(T)= C$ and that $C(T)$ is a maximal covered set.
    \end{proof}

 \subsection{Proof of Corollary~\ref{cor:efficientAlg}}
    
    As stated in Lemma~\ref{lm:bezout}, the values $h$ and $w$ can be computed in time $O(\log n)$ by using the Extended Euclidean algorithm. Take a number $t$ uniformly at random in $\{0,\ldots,w-1\}$ (using $O(\log w)\le O(\log n)$ bits), which represents the $(t+1)$-th tree inserted to $\X$ in the algorithm. If $t=0$ this tree corresponds to $\T_0$, otherwise, it corresponds to the efficient strategy $\T_v$ with $v= (t\cdot c \mod n-1) +1$. For a given $v$, the queries implied by $\T_v$ can be easily determined as in the proof of Proposition~\ref{prop:non-dominated}.

\section{Proof of Proposition ~\ref{prop:labeling=BST}}
    We first provide an auxiliary lemma.

\begin{lemma} \label{lm:labeling<=BST}
    Let $f \in \mathcal{F}_k$ be a valid labeling. There is a search strategy $T \in \BSTs_k$ such that for every edge $v \in V$, we have $h_T(v) \geq h_f(v)$.
\end{lemma}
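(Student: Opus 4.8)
The content of the lemma is that every valid labeling is ``dominated'' by a genuine search strategy: there should be some $T\in\mathcal T_k$ that finds each vertex no later than $f$ prescribes, i.e.\ with $h_T(v)\le h_f(v)$ for all $v\in V$ (equivalently $p(h_T(v))\ge p(h_f(v))$, since $p$ is non-increasing, which is what is needed for one inequality of Proposition~\ref{prop:labeling=BST}). The plan is to build $T$ recursively by always querying the edge that carries the largest label, and to prove correctness by induction on the budget.

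Concretely, I would prove the following slightly more general statement, which is what makes the recursion close: for every tree $H$, every integer $K\ge 0$, and every valid labeling $g\colon E(H)\to\{0,\dots,K\}$, there is a search strategy $T$ for $H$ of height at most $K$ with $h_T(v)\le h_g(v)$ for all $v\in V(H)$, where $h_g(v)=K+1-\min\{g(e):e\in\delta(v)\}$ and, as usual, $h_T(v)=K+1$ when $v$ is not covered by $T$. The base cases are immediate: if $H$ has a single vertex, or $K=0$, or $g\equiv 0$, then the trivial one-node strategy works, since then $h_T(v)=K+1=h_g(v)$ for every $v$ (or there is nothing to check).

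For the inductive step, suppose $g\not\equiv 0$. As remarked after Definition~\ref{def:validlabel}, validity forces $g$ to attain its maximum $m:=\max g\ge 1$ at a \emph{unique} edge $e^*=uv$. Let $T$ query $e^*$ first and recurse on the two components $G_u,G_v$ of $H-e^*$ with the restricted labelings $g_u=g|_{E(G_u)}$ and $g_v=g|_{E(G_v)}$. These restrictions are again valid (the simple path between two edges of a subtree stays inside that subtree, so the screening edge guaranteed by validity survives), and, since $e^*$ is the unique maximizer, they take values in $\{0,\dots,m-1\}\subseteq\{0,\dots,K-1\}$. By the induction hypothesis applied with budget $K-1$ we obtain strategies $T_u,T_v$ of height at most $K-1$ with $h_{T_u}\le h_{g_u}$ and $h_{T_v}\le h_{g_v}$ (these $h$'s computed with parameter $K-1$); gluing $T_u,T_v$ beneath a root labeled $e^*$ gives the desired $T$ of height at most $K$. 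To verify $h_T(v)\le h_g(v)$, take $v\in V(G_u)$ (the $G_v$ side is symmetric). If $v\ne u$, then all edges incident to $v$ lie in $G_u$, so $h_g(v)=1+h_{g_u}(v)$ and likewise $h_T(v)=1+h_{T_u}(v)$ (the uncovered case is consistent because $1+K=K+1$), and the bound follows from the induction hypothesis. If $v=u$ and $u$ has some incident edge inside $G_u$, that edge has label $<m$ because $e^*$ is the unique maximizer, so removing $e^*$ does not change $\min\{g(e):e\in\delta(u)\}$, and we conclude exactly as in the previous case. Finally, if $v=u$ and $u$ is isolated in $G_u$, then $h_g(u)=K+1-m\ge 1=h_T(u)$ since $m\le K$. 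Applying the statement with $H=G$, $K=k$, $g=f$ yields the lemma.

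The step I expect to require the most care is this last verification at the endpoint $u$ of the first queried edge, together with the budget accounting. The recursion closes precisely because the maximal label is attained \emph{uniquely}: this is simultaneously what lets $g_u,g_v$ fit inside the reduced budget $K-1$, and what guarantees that deleting $e^*$ affects the ``$\min$ over incident edges'' quantity only at $u$, and only when $u$ becomes isolated — a case dispatched by the crude estimate $h_T(u)=1\le K+1-m$. Everything else (validity of restrictions, and the consistency of the $h_T(v)=K+1$ convention through the recursion) is routine.
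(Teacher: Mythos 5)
Your proof is correct and takes essentially the same route as the paper's: induct on the budget, query the unique maximum-label edge $e^*$, restrict the labeling to the two components (which stays valid and fits the reduced budget), and glue the recursively obtained strategies under a root labeled $e^*$ — with your verification at the endpoints of $e^*$ and the budget re-normalization spelled out more carefully than the paper's ``follows by construction.'' Note that you prove $h_T(v)\le h_f(v)$ rather than the stated $h_T(v)\ge h_f(v)$, and this is the right reading: as written the statement is vacuously satisfied by the one-node strategy (for which $h_T(v)=k+1$), whereas the direction actually invoked in the proof of Proposition~\ref{prop:labeling=BST}, and the one your argument establishes, is $h_T(v)\le h_f(v)$; the ``$\ge$'' in the lemma is evidently a typo.
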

\begin{proof}
    By induction on $k$, we prove that for every tree $G=(V, E)$ and for every valid labeling $f \in \mathcal{F}_k(G)$, there is a search strategy $T \in \BSTs_k(G)$, where $\mathcal{F}_k(G)$ is the set of all valid labelings with range $\{0, \ldots, k\}$  and $\BSTs_k(G)$ is the set of all search strategies with height at most $k$. Let $G=(V, E)$ be an arbitrary tree. For $k=1$, let $f \in \mathcal{F}_1(G)$ be a nontrivial valid labeling. Then, there is a unique edge $e^* \in E$ for which $f(e^*)=1$. Define a search strategy $T =(N, A) \in \BSTs_1(G)$ with root $\rho$ labeled $e(\rho) = e^*$. The only case where $h_f(v) = 1$ for some $v \in V$ is when $\delta(v) = \{e^*\}$. In this situation, there is a leaf $\lambda$ of $T$ such that $V(\lambda) = \{v\}$. Since the height of $T$ is 1, we have $h_T(v) = 1 = f_T(v)$, hence proving the base case. Now, assume that the desired result holds for $k-1$, with $k \geq 3$. Let $f \in \mathcal{F}_k(G)$ be a valid labeling and let $ e^* = uv = \argmax_{e \in E}f(e)$. Let $f|_{E(G_u)}$ be the labeling defined by $f$ restricted to edges in $E(G_u)$, and define $f|_{E(G_v)}$ analogously. It is easy to see that $f|_{E(G_u)} \in \mathcal{F}_{k-1}(G_u)$ and $f|_{E(G_v)} \in \mathcal{F}_{k-1}(G_v)$. By the inductive hypothesis, there are search strategies $T_u \in \BSTs_{k-1}(G_u)$ and $T_v \in \BSTs_{k-1}(G_v)$ such that the lemma holds. Then, define $T \in \BSTs_k(G)$ as a tree rooted in node $\rho$ labeled with edge $e(\rho)=e^*$, and let the children of $\rho$ be the subtrees $T_u$ and $T_v$. The result follows by construction.
\end{proof}
Notice that the proof of the lemma shows how to recover a search strategy from a valid labeling. Now we continue with the proof of the proposition.

\begin{proof}[Proof of Proposition \ref{prop:labeling=BST}]
    Let $T=(N, A) \in \BSTs$ be some search strategy. This strategy induces a labeling $f_T$ in the following way:
    $$
        f_T(e) = \begin{cases}
            k - h_T(\nu) & \text{for $\nu \in N$ s.t. $e(\nu) = e$}, \\
            0 & \text{else.}
        \end{cases}, \quad \text{for all } e \in E,
    $$
    where $h_T(\nu)$ gives the length of the path from the root of $T$ to $\nu$.
    We show that $f_T$ belongs to $\mathcal{F}_k$. Let $e_1, e_2 \in E$ be distinct edges such that $f_T(e_1) = f_T(e_2)>0$. By definition, there are (distinct) nodes $\nu_1, \nu_2 \in N$ such that $k - h_T(\nu_1) = k - h_T(\nu_2) = f(e_1)$. Since both nodes have the same height, they must have a common ancestor $\nu_3 \in N$. By definition, $\nu_3$ must be labeled with an edge $e_3$ such that $e_1$ and $e_2$ are in different connected components of $G - e_3$, which implies that $e_3$ lies in the path that connects $e_1$ with $e_2$. Moreover, $f_T(e_3) = k - h_T(\nu_3) > k - h_T(\nu_1) = f_T(E_1)$ and $f_T(e) \leq k$ for all $e \in E$, which proves that $f_T \in \mathcal{F}_k$. Now, it is easy to see that $h_T(v) = h_{f_T}(v)$ for all $v \in V$, from which we obtain
    $$
    \max_{f\in \mathcal{F}_k} \sum_{v \in V} y_v \cdot p(h_f(v)) \geq \max_{T \in \BSTs_k} \sum_{v \in V} y_v \cdot p(h_T(v)).
    $$

    To prove the other side, let $f \in \mathcal{F}_k$. We use Lemma \ref{lm:labeling<=BST} to obtain a search strategy $T \in \BSTs_k$ such that $h_T(v) \geq h_f(v)$ for all $v \in V$. It follows that
    $$
    \max_{f \in \mathcal{F}_k} \sum_{v \in V} y_v \cdot p(h_f(v)) \leq \max_{T \in \BSTs_k} \sum_{v \in V} y_v \cdot p(h_T(v)).
    $$
\end{proof}

\bibliography{bib}

\begin{thebibliography}{10}

\bibitem{EMPHASIS}
Explosive material production (hidden) agile search and intelligence system,
  {HORIZON 2020}.
\newblock \url{https://cordis.europa.eu/project/id/261381 }.
\newblock Accessed: 2024-04-21.

\bibitem{ahmed_first_2020}
W.~Ahmed, N.~Angel, J.~Edson, K.~Bibby, A.~Bivins, J.~W. O'Brien, P.~M. Choi,
  M.~Kitajima, S.~L. Simpson, J.~Li, B.~Tscharke, R.~Verhagen, W.~J.M. Smith,
  J.~Zaugg, L.~Dierens, P.~Hugenholtz, K.~V. Thomas, and J.~F. Mueller.
\newblock First confirmed detection of {SARS}-{CoV}-2 in untreated wastewater
  in {Australia}: {A} proof of concept for the wastewater surveillance of
  {COVID}-19 in the community.
\newblock {\em Science of The Total Environment}, 728:138764, 2020.

\bibitem{PNAS2024}
J.~Baboun, I.~S. Beaudry, L.~M. Castro, F.~Gutierrez, A.~Jara, B.~Rubio, and
  J.~Verschae.
\newblock Identifying outbreaks in sewer networks: {An} adaptive sampling
  scheme under network's uncertainty.
\newblock {\em Proceedings of the National Academy of Sciences (PNAS)},
  121(14):e2316616121, April 2024.

\bibitem{ben-asher_cost_1997}
Y.~Ben-Asher and E.~Farchi.
\newblock The cost of searching in general trees versus complete binary trees.
\newblock Technical report, 1997.

\bibitem{ben1999optimal}
Yosi Ben-Asher, Eitan Farchi, and Ilan Newman.
\newblock Optimal search in trees.
\newblock {\em SIAM Journal on Computing}, 28(6):2090--2102, 1999.

\bibitem{berendsohnFastApproximationSearch2023}
B.~A. Berendsohn, I.~Golinsky, H.~Kaplan, and L.~Kozma.
\newblock Fast {{Approximation}} of {{Search Trees}} on {{Trees}} with
  {{Centroid Trees}}.
\newblock {\em LIPIcs, Volume 261, ICALP 2023}, 261:19:1--19:20, 2023.

\bibitem{berendsohnSplayTreesTrees2022}
B.~A. Berendsohn and L.~Kozma.
\newblock Splay trees on trees.
\newblock In {\em Proceedings of the 2022 {{Annual ACM-SIAM Symposium}} on
  {{Discrete Algorithms}} ({{SODA}})}, Proceedings, pages 1875--1900. {Society
  for Industrial and Applied Mathematics}, January 2022.

\bibitem{bezout1779theorie}
E.~B{\'e}zout.
\newblock {\em Th{\'e}orie g{\'e}n{\'e}rale des {\'e}quations alg{\'e}brique}.
\newblock De l'imprimerie de Ph. D. Pierres, 1779.

\bibitem{carmo2004searching}
Renato Carmo, Jair Donadelli, Yoshiharu Kohayakawa, and E~Laber.
\newblock Searching in random partially ordered sets.
\newblock {\em Theoretical Computer Science}, 321(1):41--57, 2004.

\bibitem{cicalese_complexity_2011}
F.~Cicalese, T.~Jacobs, E.~Laber, and M.~Molinaro.
\newblock On the complexity of searching in trees and partially ordered
  structures.
\newblock {\em Theoretical Computer Science}, 412:6879--6896, 2011.

\bibitem{cicalese_improved_2014}
F.~Cicalese, T.~Jacobs, E.~Laber, and M.~Molinaro.
\newblock Improved {Approximation} {Algorithms} for the {Average}-{Case} {Tree}
  {Searching} {Problem}.
\newblock {\em Algorithmica}, 68:1045--1074, April 2014.

\bibitem{cicalese_binary_2012}
F.~Cicalese, T.~Jacobs, E.~Laber, and C.~Valentim.
\newblock The binary identification problem for weighted trees.
\newblock {\em Theoretical Computer Science}, 459:100--112, 2012.

\bibitem{cormen_introduction_2009}
T.~Cormen, C.~E. Leiserson, R.~L. Rivest, and C.~Stein.
\newblock {\em Introduction to {Algorithms}}.
\newblock The MIT Press, 3rd edition, 2009.

\bibitem{dereniowski2008edge}
Dariusz Dereniowski.
\newblock Edge ranking and searching in partial orders.
\newblock {\em Discrete Applied Mathematics}, 156(13):2493--2500, 2008.

\bibitem{cristo_pollution_2008}
C.~Di~Cristo and A.~Leopardi.
\newblock Pollution {Source} {Identification} of {Accidental} {Contamination}
  in {Water} {Distribution} {Networks}.
\newblock {\em Journal of Water Resources Planning and Management},
  134(2):197--202, 2008.

\bibitem{domokos_identification_2022}
E.~Domokos, V.~Sebesty\'en, V.~Somogyi, A.~J. Tr\'ajer, R.~Gerencs\'er-Berta,
  B.~Ol\'an\'e~H., E.~G. T\'oth, F.~Jakab, G.~Kemenesi, and J.~Abonyi.
\newblock Identification of sampling points for the detection of {SARS}-{CoV}-2
  in the sewage system.
\newblock {\em Sustainable Cities and Society}, 76:103422, 2022.

\bibitem{emamjomeh-zadeh_deterministic_2016}
E.~Emamjomeh-Zadeh, D.~Kempe, and V.~Singhal.
\newblock Deterministic and probabilistic binary search in graphs.
\newblock In {\em Proceedings of the forty-eighth annual {ACM} symposium on
  {Theory} of {Computing}}, {STOC} '16, pages 519--532, 2016.

\bibitem{grotschel_ellipsoid_1981}
M.~Gr\"otschel, L.~Lov\'asz, and A.~Schrijver.
\newblock The ellipsoid method and its consequences in combinatorial
  optimization.
\newblock {\em Combinatorica}, 1:169--197, June 1981.

\bibitem{knuthOptimumBinarySearch1971}
D.~E. Knuth.
\newblock Optimum binary search trees.
\newblock {\em Acta Informatica}, 1:14--25, 1971.

\bibitem{lam_optimal_2001}
T.~W. Lam and F.~L. Yue.
\newblock Optimal {Edge} {Ranking} of {Trees} in {Linear} {Time}.
\newblock {\em Algorithmica}, 30:12--33, May 2001.

\bibitem{larson_sampling_2020}
R.~C. Larson, O.~Berman, and M.~Nourinejad.
\newblock Sampling manholes to home in on {SARS}-{CoV}-2 infections.
\newblock {\em PLOS ONE}, 15(10):e0240007, 2020.

\bibitem{mozes2008finding}
S.~Mozes, K.~Onak, and O.~Weimann.
\newblock Finding an optimal tree searching strategy in linear time.
\newblock In {\em Proceedings of the Nineteenth Annual ACM-SIAM Symposium on
  Discrete Algorithms}, volume~8 of {\em {SODA} '08}, pages 1096--1105, 2008.

\bibitem{nourinejad_placing_2021}
M.~Nourinejad, O.~Berman, and R.~C. Larson.
\newblock Placing sensors in sewer networks: {A} system to pinpoint new cases
  of coronavirus.
\newblock {\em PLOS ONE}, 16:e0248893, 2021.

\bibitem{onak2006generalization}
K.~Onak and P.~Parys.
\newblock Generalization of binary search: Searching in trees and forest-like
  partial orders.
\newblock In {\em 47th Annual IEEE Symposium on Foundations of Computer
  Science}, {FOCS} '06, pages 379--388. IEEE, 2006.

\bibitem{doi:10.5864/d2021-015}
J.~O’Keeffe.
\newblock Wastewater-based epidemiology: current uses and future opportunities
  as a public health surveillance tool.
\newblock {\em Environmental Health Review}, 64:44--52, 2021.

\bibitem{schaffer1989optimal}
A.~A. Sch{\"a}ffer.
\newblock Optimal node ranking of trees in linear time.
\newblock {\em Information Processing Letters}, 33:91--96, 1989.

\bibitem{Sulej-SuchomskaEtAl2020}
A.~Sulej-Suchomska, A.~Klupczynska, P.~Derezinski, J.~Matysiak,
  P.~Przybylowski, and Z.~Kokot.
\newblock Urban wastewater analysis as an effective tool for monitoring illegal
  drugs, including new psychoactive substances, in the {E}astern {E}uropean
  region.
\newblock {\em Scientific Reports}, 10(4885), 2020.

\bibitem{v1928theorie}
J.~Von~Neumann.
\newblock Zur {T}heorie der {G}esellschaftsspiele.
\newblock {\em Mathematische Annalen}, 100:295--320, 1928.

\end{thebibliography}

\end{document}